\newcommand{\er}[1]{{\color{black}#1}}
\newcommand{\vn}[1]{\text{VN}\,{#1}}
\newcommand{\cn}[1]{\text{CN}\,{#1}}
\begin{document}
\title{Spatially Coupled LDPC Codes with Sub-Block Locality}
\author{
\vspace*{0.8cm} Eshed Ram \qquad Yuval Cassuto\\
Andrew and Erna Viterbi Department of Electrical Engineering \\
Technion -- Israel Institute of Technology, Haifa 32000, Israel\\
E-mails: \{s6eshedr@campus, ycassuto@ee\}.technion.ac.il
}

\maketitle

\begin{abstract}
A new type of \er{spatially coupled low-density parity-check (SC-LDPC)} codes motivated by practical storage applications is presented. SC-LDPCL codes (suffix 'L' stands for locality) can be decoded locally at the level of sub-blocks that are much smaller than the full code block, thus offering flexible access to the coded information alongside the strong reliability of the global full-block decoding. Toward that, we propose constructions of SC-LDPCL codes that allow controlling the trade-off between local and global correction performance. In addition to local and global decoding, the paper develops a density-evolution analysis for a decoding mode we call semi-global decoding, in which the decoder has access to the requested sub-block plus a prescribed number of sub-blocks around it. SC-LDPCL codes are also studied under a channel model with variability across sub-blocks, for which decoding-performance lower bounds are derived.\footnote{Part of the results of this paper was presented at the 2018 International Symposium on Turbo Coding and the 2019 International Symposium on Information Theory.}
\end{abstract}

{\bf{Keywords}}:  Codes with locality, coding for memories, density evolution (DE), iterative decoding, multi-sub-block coding, spatially coupled low-density parity-check (SC-LDPC) codes.

\section{Introduction}
\label{Sec:Intro}
Spatial coupling (SC) of low-density parity-check (LDPC) codes is an extremely useful technique to construct block codes with superior correction capability and efficient decoders. These properties make spatially coupled LDPC (SC-LDPC) codes attractive for implementation and deployment in real systems. In this paper\er{,} we endow SC-LDPC codes with an additional desired property: the ability to access and decode sub-blocks much smaller than the full code block. This property is especially needed in memory and storage systems that require flexible access (a.k.a. random access) to small data units alongside high data reliability.

SC-LDPC codes were extensively studied recently and were shown to have many desired properties. For example, in \cite{KudekarRichUrb13} it was proven that SC-LDPC codes achieve capacity universally on memoryless binary symmetric channels under belief propagation (BP) decoding due to a phenomenon called \emph{threshold saturation}; in \cite{Mitchell15} it was exemplified that the minimum distance of protograph-based SC-LDPC codes grows linearly with the block length without compromising in thresholds; \cite{Mitchell11} showed that typical protograph-based SC-LDPC codes present linear-growth of \er{the size of} minimal trapping sets. These properties imply good \er{bit-error rate (BER)} performance in the \emph{waterfall} and \emph{error floor} regions, for the BP decoder. Moreover, the special structure of SC-LDPC codes, where bits participating in a particular parity-check equation are spatially close to each other, renders a locality property that can be exploited to implement low-latency high-throughput belief-propagation based decoders; such decoders are pipelined decoders \cite{FelstromZigangirov99, Pusane08} and window decoders \cite{IyenPapa12,IyenSiegel13,Lentmaier11}. 

When used in data-storage applications, where decoding failures imply data-losses, an error-correcting code must protect against extremely high noise levels (although most noise instances are much milder), requiring very large block lengths and complex decoding, thus degrading the latency and throughput of the device. A possible solution to this problem \er{is} \emph{sub-block-access} codes \cite{CassHemo17,RamCassuto18a} that enable decoding small sub-blocks (i.e., \emph{local decoding}) for fast read access, while providing a high data-reliability ``safety net" decoding over the large code block (i.e., \emph{global decoding}). Formally, in a sub-block-access code, a code block of length $N$ is divided into $M$ sub-blocks of length $n$ each. Each sub-block is a codeword of one code, and the concatenation of the $M$ sub-blocks forms a codeword of another (stronger) code.
In this paper, we construct SC-LDPC codes with this sub-block structure that offer sub-block decoding capabilities; we call these codes SC-LDPCL codes (suffix 'L' stands for locality). The key to achieving this is designing spatially coupled protographs that have suitable correction performance under a variety of access-locality modes. 

\subsection{Contributions}
\label{Sub:contributions}

Our main scope in this paper \er{is SC-LDPCL codes constructed by coupling regular protographs}.  
The analysis focuses on the {\em binary erasure channel} (BEC), but can be readily extended to other channels, e.g. via \er{extrinsic-information transfer (EXIT) \cite{TenBrink04}} functions. Moreover, the constructed codes are simulated over the BEC and over the {\em additive white Gaussian noise} (AWGN) channel, performing as predicted by the theoretical analysis. 
The paper is organized as follows. In Section~\ref{Sec:SC-LDPCL Codes}\er{,} we derive bounds on decoding thresholds of protographs and show that existing protograph-based SC-LDPC codes do not enable sub-block access, in the sense that the decoding threshold of such access is zero. These results help characterize the design measures needed for non-trivial sub-block decoding performance, which lead to a construction offering a tradeoff between local and global decoding performance. In Section~\ref{Sec:SG}\er{,} we suggest a new BP-based decoding strategy we call \emph{semi-global} (SG) decoding, in which in addition to the requested sub-block, the decoder has access to some prescribed number of sub-blocks around it. This section derives (and simplifies) a density-evolution analysis of semi-global decoding for the construction in Section~\ref{Sec:SC-LDPCL Codes}.
In Section~\ref{Sub:SG analysis}\er{,} we examine the performance of SG decoding and show that it exhibits a significant complexity reduction compared to global decoding, while costing only a small fraction in the threshold. We then consider a practically motivated data-storage model in which variability is introduced to the channel quality \er{(as motivated by recent empirical studies \cite{TaraUchi16,ShaAl20})}. 
Using lower bounds we derive on decoding success probabilities, we show that SG decoding is highly motivated by this model.
Finally, in Section~\ref{Sec:Gen Construct} we generalize our SC-LDPCL construction (which in Section~\ref{Sec:SC-LDPCL Codes} is restricted to memory $1$), and suggest a richer family of SC-LDPCL codes, including codes with two-dimensional coupling. We then discuss SG decoding over these codes.

\subsection{Related Work}
\label{Sub:relate}

SC-LDPC codes date back to 1999 \cite{FelstromZigangirov99}, and have been studied extensively in the past decade. Many protograph-based \cite{Thorpe03} constructions of SC codes were suggested (see \cite{Mitchell15} and references therein) including reshaping and enhancing SC codes for improved asymptotic and finite-length performance \cite{TruMitch19}. 
\er{
	More recent contributions propose multi-dimensional spatially coupled LDPC codes for global performance improvements \cite{TruMitch19,LiuLi15,OlmMitch17,DolEsf20}, and for special channel models \cite{ScmalMahd14,Ohashi13}. As far as we know, none of this previous work constructed codes that enable sub-block access. Furthermore, decoding of our codes is operationally different since in the local and semi-global modes we seek decoding only a single target sub-block. 
	
	The semi-global decoding mode we propose and study in this paper resembles sliding-window decoders \cite{IyenSiegel13,Lentmaier11} that were suggested for channels with memory (e.g. bursty and Gilbert-Elliott channels) and parallel channels (see \cite{IyenPapa12} and \cite{ScmalMahd14}, respectively). 
	Our work differs from these prior works since the semi-global decoder accesses the codeword differently from the window decoder, for the purpose of reaping latency and complexity benefits. The semi-global access mode also motivates analysis over channels with sub-block variability	\cite{McEliece84}, which are not addressed by prior work.
}

A large body of work has been devoted to codes that possess certain locality properties, including locally recoverable codes \cite{BlauHetz16, TamoBarg14} and regenerating codes \cite{Dimakis10}; the former codes target the problem of reducing the number of nodes needed to recover a failed node, and the latter are designed to reduce the repair bandwidth. Both of these types assume an error model in which every node (sub-block in our context) is either fully known or fully erased. However, in many applications a finer error model is assumed, i.e., a few errors in each sub-block. We consider this model and suggest sub-block access with a certain level of data protection, combined with increased data-reliability access with full-block access. 
Earlier work on sub-block-access codes includes multi-block Reed-Solomon codes in \cite{CassHemo17} and sub-block-access LDPC codes in \cite{RamCassuto18a}. The former suggests algebraic constructions and properties, and the later deals with ordinary (i.e., not spatially coupled) LDPC codes. As we will see later, using SC-LDPC codes as our underlying code renders new design trade-offs and decoding strategies that are motivated by practical storage applications.

\section{Preliminaries}
\label{Sec:Pre}

\subsection{Protograph Based LDPC Codes}
\label{Sub:proto intro}
An LDPC protograph is a (small) bipartite graph $\mathcal{G}=\left(\mathcal{V}\cup\mathcal{C},\mathcal{E}\right)$, where $\mathcal{V}=\left \{v_1,\ldots,v_{|\mathcal{V}|}\right \},\mathcal{C}=\left \{c_1,\ldots,c_{|\mathcal{C}|}\right \},$ and $\mathcal{E}$ are the sets of variable nodes (VNs), check nodes (CNs), and edges, respectively. For every VN $v\in \mathcal{V}$, we denote by $d_v$ its edge degree. Similarly, we write $d_c$ for the edge degree of a CN $c \in \mathcal{C}$. 
A Tanner graph is generated from a protograph $\mathcal{G}$ by a lifting ("copy-and-permute") operation specified by a lifting parameter $L$ (for more details see \cite{Thorpe03} and \cite{Mitchell15}). The design rate of the derived LDPC code is independent of $L$ and given by $R_\mathcal{G}=1-\big| \mathcal{C} \big|/\big| \mathcal{V} \big| $.
If we let $L \to \infty$, then we can analyze the performance of the BP decoder on the resulting ensemble of Tanner graphs via density evolution on the original protograph. Formally, for the BEC we have:
\begin{fact} \label{Fact:DE proto}
Let $\mathcal{G}=\left(\mathcal{V}\cup\mathcal{C},\mathcal{E}\right)$ be an LDPC protograph, let $v\in \mathcal{V}$ be a variable node of degree $d_v$, and let $c \in \mathcal{C}$ be a check node of degree $d_c$. Let $\{e^v_{1},e^v_{2},\ldots,e^v_{d_v}\}$ be the set of all edges connected to $v$, and let $\{e^c_{1},e^c_{2},\ldots,e^c_{d_c}\}$ be the set of all edges connected to $c$. Consider a transmission over the $BEC(\epsilon)$, of a codeword from a binary linear code that corresponds to a random Tanner graph lifted from $\mathcal{G}$ with lifting parameter $L$, denoted by $\mathcal{G}^L$. 
\er{
 	For every $i\in \{1,2,\ldots,d_v\}$, let $x_\ell \left(e^v_{i}\right)$ and $u_\ell \left(e^v_{i}\right)$ be the fraction of $e^v_{i}$-type edges in lifted graph $ \mathcal{G}^L $ that carry VN-to-CN and CN-to-VN erasure messages, respectively, after $\ell$ BP iterations. Similarly, for every $j\in \{1,2,\ldots,d_c\}$, let $x_\ell \big(e^c_{j}\big)$ and $u_\ell \big(e^c_{j}\big)$ be the fraction of $e^c_{j}$-type edges in $ \mathcal{G}^L $ that carry VN-to-CN and CN-to-VN erasure messages after $\ell$ BP iterations.
	Then, as $L \to \infty$
	\begin{subequations}
	\begin{align}
	\label{Eq:DE Vars}
	x_\ell \left(e^v_{i},\epsilon\right) &= \epsilon \cdot \prod_{\substack{1\leq i'\leq d_v \\ i'\neq i}} u_{\ell} \left(e^v_{i'}\right), \\
	\label{Eq:DE Checks}
	u_\ell \left(e^c_{j}\right) &= 1- \prod_{\substack{1\leq j'\leq d_c \\ j'\neq j}}\left( 1- x_{\ell-1} \left(e^c_{j'}\right)\right),\\
	\label{Eq:DE init}
	x_{-1} \left(e^v_{i}\right)&= u_{-1} \left(e^c_{j}\right)=1.
	\end{align}
	\end{subequations}
	Moreover, as $L \to \infty$ the probability that $v$ is erased after $\ell$ BP iterations is given by 
	\begin{align} \label{Eq:P_e,l}
	P_{\ell}(v,\epsilon)=\epsilon \prod_{1\leq i\leq d_v} u_{\ell} \left(e^v_{i}\right).
	\end{align}
}
\end{fact}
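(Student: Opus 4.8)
The plan is to establish Fact~\ref{Fact:DE proto} as a standard density-evolution argument adapted to the protograph setting, where the key structural input is that the lifting operation preserves local neighborhood types while making the Tanner graph locally tree-like as $L\to\infty$. First I would recall the \emph{protograph tree-ensemble} viewpoint: for a fixed number of iterations $\ell$, the depth-$2\ell$ computation tree rooted at an edge $e^v_i$ of a random lift $\mathcal{G}^L$ converges in distribution (as $L\to\infty$) to a deterministic tree whose branching structure is dictated entirely by the protograph $\mathcal{G}$ — each VN copy of protograph-type $v$ has $d_v-1$ child CNs whose protograph-types are exactly the neighbors of $v$ in $\mathcal{G}$ (with the correct edge-types), and symmetrically for CN copies. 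The standard argument here is that for any fixed finite depth, the probability that the neighborhood contains a cycle vanishes as $L\to\infty$, because the permutations in the lift are uniformly random; I would cite \cite{Thorpe03,Mitchell15} for this and not reprove it.

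The second step is to run the peeling/message-passing recursion on this limiting tree. Over the $\mathrm{BEC}(\epsilon)$, BP messages are binary (``erased'' / ``known''), so the relevant quantity is simply the probability a given edge-type carries an erasure message at iteration $\ell$. Conditioning on the root VN being received (probability $\epsilon$) and using the tree independence of the $d_v-1$ incoming CN-to-VN messages along the other edges of $v$, one gets the product form \eqref{Eq:DE Vars}: a VN-to-CN message on $e^v_i$ is an erasure iff the channel erased the VN \emph{and} every other incident CN sent an erasure. Dually, a CN-to-VN message on $e^c_j$ is \emph{not} an erasure iff all $d_c-1$ other incident VN-to-CN messages are non-erasures, giving \eqref{Eq:DE Checks}; here I would be careful that the indices $e^c_{j'}$ on the right of \eqref{Eq:DE Checks} refer to the VN-to-CN messages entering $c$ along its other edges, so there is an implicit identification of edge-types across the two incidence lists that should be stated explicitly. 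The initialization \eqref{Eq:DE init} encodes that before any processing all messages are (pessimistically) erasures, i.e. at iteration $-1$ nothing is known; formally it is the base case of the recursion on tree depth. Finally, \eqref{Eq:P_e,l} follows by the same conditioning at the root, now using \emph{all} $d_v$ incoming CN-to-VN messages rather than $d_v-1$, since the decision on the bit itself uses every incident check.

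The main obstacle is not any single computation but rather making the convergence statement precise: one must argue that the empirical edge-fractions $x_\ell(e^v_i)$ converge (in probability, hence the ``as $L\to\infty$'' phrasing) to the deterministic recursion, which requires a concentration argument (a Wormald-type differential-equation or Azuma/Hoeffding bound on the number of erased edges of each type) on top of the locally-tree-like claim. Since Fact~\ref{Fact:DE proto} is invoked as a ``Fact'' and the underlying concentration results for protograph ensembles are classical, I would treat the tree-ensemble convergence and the concentration as citable black boxes \cite{Thorpe03,Mitchell15} and devote the actual write-up to the two short inductive computations \eqref{Eq:DE Vars}--\eqref{Eq:DE init} and the root computation \eqref{Eq:P_e,l}, emphasizing the bookkeeping of edge-types, which is the one genuinely protograph-specific wrinkle absent in the unstructured-ensemble version of density evolution.
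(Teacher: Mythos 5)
Your proposal is correct and is precisely the standard density-evolution justification that the paper relies on: the paper states this result as a Fact without proof, deferring the locally-tree-like convergence and concentration machinery to the protograph literature \cite{Thorpe03,Mitchell15}, and your outline (citing those as black boxes and carrying out the short BEC recursions \eqref{Eq:DE Vars}--\eqref{Eq:DE init} and the root computation \eqref{Eq:P_e,l} with the edge-type bookkeeping) matches that intended justification. The only slip is wording: ``conditioning on the root VN being received (probability $\epsilon$)'' should read ``being \emph{erased} (probability $\epsilon$)'', consistent with the rest of your argument.
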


The BP decoding threshold of an LDPC protograph $\mathcal{G}$ is defined by
\begin{align}\label{Eq:threshold}\epsilon_{\text{BP}}^*\left(\mathcal{G}\right) = \sup\{\epsilon\in[0,1]\colon \lim_{\ell\to \infty}P_{\ell}(v,\epsilon)=0,\quad  \forall  v \in \mathcal{V}\}.
\end{align}
For simplicity of notations, in the rest of the paper, we remove the subscript BP from the threshold notation. 

A protograph $\mathcal{G}=\left(\mathcal{V}\cup\mathcal{C},\mathcal{E}\right)$ is frequently represented through a bi-adjacency matrix $H_\mathcal{G}$, where the VNs in $\mathcal{V}$ are indexed by the columns of $H_\mathcal{G}$, the CNs in $\mathcal{C}$ by the rows, and an element in $H_{\mathcal G}$ represents the number of edges connecting the corresponding VN and CN. In this matrix representation, we write $\epsilon^*\left(H_\mathcal{G}\right)$ to denote the \er{(BP)} decoding threshold defined in \eqref{Eq:threshold}. If the protograph is $ (l,r) $-regular (every VN and CN are of degree $ l $ and $ r $, respectively), then we write $ \epsilon^*(l,r) $ to denote its threshold.

\subsection{SC-LDPC Codes}
\label{Sub:SCLDPC intro}
An $ (l,r) $-regular SC-LDPC protograph is constructed by coupling together a number of $ (l,r) $-regular protographs and truncating the resulting chain. This coupling operation introduces a convolutional structure to the code, which can be visualized through the matrix representation of the protograph. Let $ B =1^{l\times r}$ be an all-ones base matrix representing an $(l,r)$-regular LDPC protograph, and let $ \{B_\tau\}_{\tau=0}^T $ be binary matrices such that $ B=\sum_{\tau=0}^T B_\tau$ (in this paper, we consider only binary $ B $ matrices). Coupling a limitless number of copies of $ B $ amounts to diagonally placing copies of $\begin{pmatrix} B_0 ; B_1 ; \cdots ; B_T \end{pmatrix}$ \er{('$;$' represents vertical concatenation)} as in Figure~\ref{Fig:SC36}(b). 
By truncating the infinite matrix in Figure~\ref{Fig:SC36}(b) at some width, and removing all-zero rows, a spatially coupled LDPC protograph is constructed. This truncation results in a small number of \emph{terminating} CNs (of low degree), which effectuates a decrease in design rate and an increase in the decoding threshold, compared to the code ensemble corresponding to the base matrix $ B $.
However, as the length of the coupled chain increases, the design rate of the coupled protograph converges to the design rate of the underlying code ensemble, while its BP threshold exhibits a phenomenon known as \emph{threshold saturation} \cite{KudekarRichUrb13}, whereby it converges to the \emph{maximum a-posteriori} (MAP) threshold of the underlying code ensemble. 

Throughout most of this paper, we consider $ (l,r)$-regular SC-LDPC protographs with memory $ T=1$ (Sections~\ref{Sec:SC-LDPCL Codes}--\ref{Sec:SG}). The results are then extended to higher-memory codes in Section~\ref{Sec:Gen Construct}.

\begin{example}\label{Ex:Background}
Figure~\ref{Fig:SC36}(a) illustrates a spatially coupled $(3,6)$ protograph with 18 VNs. The protograph is generated by  
$B_0=\left(
1\;1\;0\;0\;0\;0\; ; 
1\;1\;1\;1\;0\;0\; ; 
1\;1\;1\;1\;1\;1
\right),$ and $ B_1=1^{3\times 6}-B_0$. The design rate of the coupled protograph is $R=0.389$, and the BP threshold is $0.512$. Figure~\ref{Fig:SC36} will serve as a basis for a running example in the paper.
\end{example}

\begin{figure}
\begin{center}
\begin{tikzpicture}\label{Tikz:36 SCLDPC}
\tikzstyle{cnode}=[rectangle,draw,fill=gray!70!white,minimum size=4mm]
\tikzstyle{vnode}=[circle,draw,fill=gray!70!white,minimum size=4mm]
\pgfmathsetmacro{\x}{3}
\pgfmathsetmacro{\y}{1}

\foreach \m in {1,2,3}
{
	
	\foreach \c in {1,2,3}
	{
		\node[cnode] (c\c\m) at (\m*\x-\x,\c*\y-\y) {};	
	}
	\foreach \v in {1,2,3,4,5,6}
	{
		\node[vnode] (v\v\m) at (\m*\x-0.5*\x,\v*\y-2.5*\y) {};	
	}	
}
\node[cnode] (c34) at (4*\x-\x,3*\y-\y) {};	
\node[cnode] (c24) at (4*\x-\x,2*\y-\y) {};	
\foreach \m in {1,2,3}
{	
	\pgfmathtruncatemacro{\k}{\m + 1}
	\draw[thick] (v1\m)--(c1\m) ;
	\draw[thick] (v1\m)--(c2\k) ;
	\draw[thick] (v1\m)--(c3\k) ;	
	\draw[thick] (v2\m)--(c1\m) ;
	\draw[thick] (v2\m)--(c2\k) ;
	\draw[thick] (v2\m)--(c3\k) ;
	
	\draw[thick] (v3\m)--(c1\m) ;
	\draw[thick] (v3\m)--(c2\m) ;
	\draw[thick] (v3\m)--(c3\k) ;
	\draw[thick] (v4\m)--(c1\m) ;
	\draw[thick] (v4\m)--(c2\m) ;
	\draw[thick] (v4\m)--(c3\k) ;
	
	\draw[thick] (v5\m)--(c1\m) ;
	\draw[thick] (v5\m)--(c2\m) ;
	\draw[thick] (v5\m)--(c3\m) ;
	\draw[thick] (v6\m)--(c1\m) ;
	\draw[thick] (v6\m)--(c2\m) ;
	\draw[thick] (v6\m)--(c3\m) ;			
}	
\node (a) [below=0.5cm of v12] {(a)}; 

\node  [right=of c24] {$ \begin{pmatrix}
	B_0		&		&		&				\\
	B_1		&B_0	&		&				\\
	\vdots	&B_1	&B_0	&				\\
	B_T		&\vdots	&B_1	&\ddots			\\
			&B_T	&\vdots	&\ddots			\\
			&		&B_T	&\ddots			\\
			&		&		&\ddots
	\end{pmatrix}$}; 
\node (x) [right=6.5cm of c33] {};
\node  [above=3mm of x] {\Huge $ 0 $};
\node (z) [right=5.1cm of c13] {};
\node  [below=0.4cm of z] {\Huge $ 0 $};
\node (b) [ right=3mm of a] {(b)};
\end{tikzpicture}
\end{center}
     
     \caption{\label{Fig:SC36}
     (a) The $(3,6)$-regular SC-LDPC protograph from Example~\ref{Ex:Background}. (b) The infinite matrix representing the protograph coupling operation.}      
\end{figure}
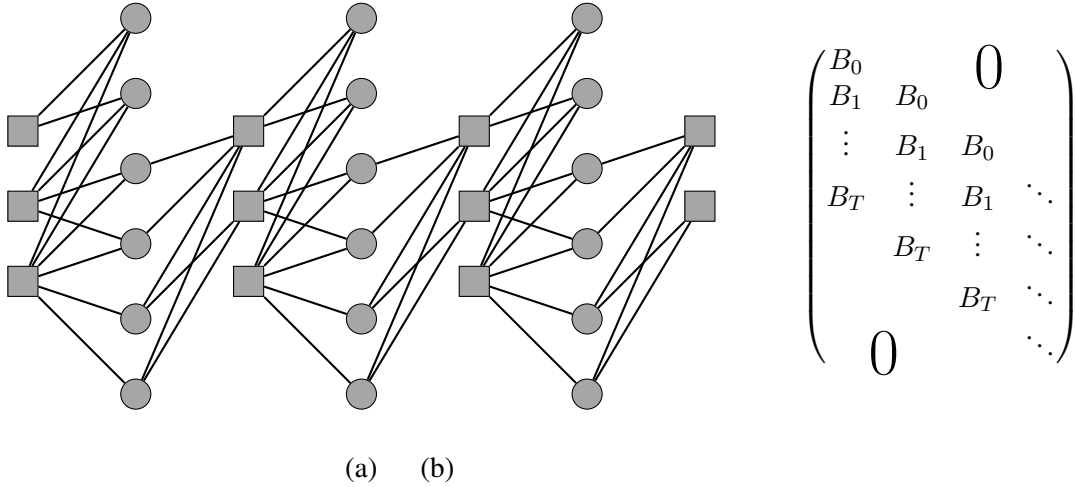

\section{Sub-Blocked SC-LDPC Codes}
\label{Sec:SC-LDPCL Codes}

Consider a coupled protograph $\mathcal{G}=\left(\mathcal{V}\cup\mathcal{C},\mathcal{E}\right)$ (in the rest of the paper, this notation will refer to the {\em coupled} protograph). To obtain a sub-blocked SC-LDPC code (as done in \cite{RamCassuto18a} without spatial coupling), we divide $\mathcal{V}$ into $M>1$ disjoint sets $\{\mathcal{V}_m\}_{m=1}^M$, and refer to $\mathcal{V}$ as the code block and to the $M$ subsets $\{\mathcal{V}_m\}_{m=1}^M$ as sub-blocks (SBs). In what follows, let $H=H_\mathcal{G}$ be a bi-adjacency matrix representing the coupled protograph $\mathcal{G}$, and let $m\in\{1,2,\ldots,M\}$ be a SB index. When decoding SB $ m $ locally, all of the VNs outside the sub-block are treated as erasures; hence only CNs connected inside sub-block $ m $ are relevant to local decoding. We call these CNs \emph{local checks} (LCs). CNs that are not LCs are called coupling checks (CCs).

\begin{definition}\label{Def:LC and CC}~
	
	\begin{enumerate}
		\item If VN $j\in \mathcal{V}$ belongs to SB $m$, we write $j \in \mathcal{V}_m$.
		\item CN $i\in \mathcal{C}$ is said to be an LC in SB $m$ if and only if $\{j\colon H_{i,j}=1\} \subseteq \mathcal{V}_m,\;$ and we write $i \in \mathcal{C}_m$. 
		\item The local protograph of SB $m$ is the sub-graph $\mathcal{G}_m=\left( \mathcal{V}_m\cup\mathcal{C}_m,\mathcal{E}_m\right)$, where $\mathcal{E}_m$ is the set of edges in $\mathcal{E}$ that connect between VNs in $\mathcal{V}_m$ and CNs in $\mathcal{C}_m$.
		\item The global and local BP decoding thresholds are given by
		$\epsilon_G^*\triangleq \epsilon^*\left(\mathcal{G}\right)$ and $\epsilon_m^*\triangleq \epsilon^*\left(\mathcal{G}_m\right),
		$
		respectively.
	\end{enumerate}
\end{definition}

\begin{example}\label{Ex:H}
	Let $\mathcal{G}$ be the coupled protograph from Example~\ref{Ex:Background} (see Figure~\ref{Fig:SC36}(a)).
	If we divide $\mathcal{V}$ into $M=3$ equally sized SBs, then $\mathcal{V}_1=\{1,2,3,4,5,6\}$, $\mathcal{V}_2=\{7,8,9,10,11,12\}$, $\mathcal{V}_3=\{13,14,15,16,17,18\}$, and $\mathcal{C}_1=\{1,2,3\}$, $\mathcal{C}_2=\{6\}$, $\mathcal{C}_3=\{9,10,11\}$. The local protographs $\mathcal{G}_1,\mathcal{G}_2$ and $\mathcal{G}_3$ are illustrated in Figure~\ref{Fig:36_123}. The local decoding thresholds in this case are all zero, i.e., $\epsilon_1^*=\epsilon_2^*=\epsilon_3^*=0$. As we will see later, zero local thresholds are a general phenomenon in SC-LDPC codes, unless proper design measures are taken.
\end{example}

\begin{figure}
	\begin{center}
		\begin{tikzpicture}\label{Tikz:36 Local Graph}
		\tikzstyle{cnode}=[rectangle,draw,fill=gray!70!white,minimum size=4mm]
		\tikzstyle{vnode}=[circle,draw,fill=gray!70!white,minimum size=4mm]
		\pgfmathsetmacro{\x}{3}
		\pgfmathsetmacro{\y}{0.7}
		
		\foreach \m in {1,2,3}
		{

			\foreach \v in {1,2,3,4,5,6}
			{
				\node[vnode] (v\v\m) at (\m*\x-0.5*\x,\v*\y-2.5*\y) {};	
			}
			
		}
		\node (G1) [below=5mm of v11] {$ \mathcal{G}_1 $};	
		\node (G2) [below=5mm of v12] {$ \mathcal{G}_2 $};	
		\node (G3) [below=5mm of v13] {$ \mathcal{G}_3 $};	
		
		\foreach \c in {1,2,3}
		{
			\node[cnode] (c\c1) at (1*\x-\x,\c*\y-\y) {};	
		}
		\node[cnode] (c12) at (2*\x-\x,1*\y-\y) {};	
		\node[cnode] (c13) at (3*\x-\x,1*\y-\y) {};	
		\node[cnode] (c34) at (4*\x-\x,3*\y-\y) {};	
		\node[cnode] (c24) at (4*\x-\x,2*\y-\y) {};

		\draw[thick] (v11)--(c11) ;
		\draw[thick] (v21)--(c11) ;
		\draw[thick] (c21)--(v31)--(c11) ;
		\draw[thick] (c21)--(v41)--(c11) ;.
		\draw[thick] (c31)--(v51)--(c21)--(v51)--(c11) ;
		\draw[thick] (c31)--(v61)--(c21)--(v61)--(c11)  ;
		
		\foreach \v in {1,...,6}
		{
			\foreach \m in {2,3}
			\draw[thick] (v\v\m)--(c1\m);
			
		}
		\draw[thick] (v13)--(c24)--(v23);
		\draw[thick] (v33)--(c34)--(v43);
		\draw[thick] (v13)--(c34)--(v23);
		\end{tikzpicture}
	\end{center}
	\caption{\label{Fig:36_123}
		The \emph{local protographs} from Example~\ref{Ex:H}.}
\end{figure}
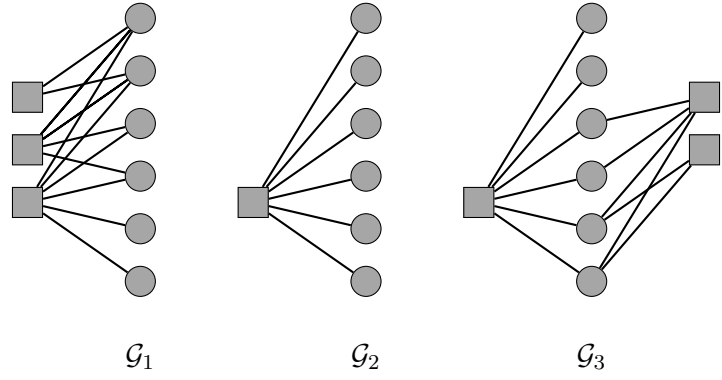

\subsection{Zero Local Threshold}
\label{Sub:Basics}
In this subsection, we state results concerning thresholds of sub-block protographs induced by the coupling process. Giving an explicit analytical expression is, in general, not an easy task since many densities should be tracked. Instead, bounds on the threshold are derived. We show that these results imply that the local thresholds in SC-LDPC protographs \er{are} zero, unless some specific design measures (which we address later) are taken.

\begin{lemma}\label{Lemma:Th UB}
Let $H$ be a bi-adjacency matrix representing a protograph $\mathcal{G}=(\mathcal{V}\,\cup\,\mathcal{C},\mathcal{E})$. Let $\mathcal{J}\subseteq \{1,2,\ldots,\big| \mathcal{V} \big|\}$ and $\mathcal{I}\subseteq \{1,2,\ldots, \big| \mathcal{C} \big| \}$ be sets of column and row indices, respectively, and let $H_\mathcal{J}$ (resp. $H^{(\mathcal{I})}$) be the sub-matrix consisting of the columns (resp. rows) of $H$ indexed by $\mathcal{J}$ (resp. $\mathcal{I}$). Then,
\begin{align} \label{Eq:Th Bounds}
\epsilon^*\left(H^{(\mathcal{I})}\right) \leq \epsilon^*\left(H\right) \leq \epsilon^*\left(H_\mathcal{J}\right).
\end{align}
\end{lemma}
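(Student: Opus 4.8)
The plan is to establish the two inequalities separately, in each case by exhibiting a coupling (a monotonicity relation) between the density-evolution recursion on the large protograph and the recursion on the sub-protograph. The common principle is that, under BP on the BEC, deleting a check node can only hurt the decoder (fewer constraints), while deleting a variable node — i.e., treating it as a permanent erasure — can only hurt the decoder as well, but from the point of view of the \emph{surviving} variables, pruning a column and running DE on the smaller graph is equivalent to running the original DE with that variable's erasure probability pinned at $1$. So both bounds reduce to a statement of the form: if one DE system dominates another coordinatewise at every iteration, then the limiting erasure probabilities are ordered, and hence the thresholds are ordered.

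For the left inequality $\epsilon^*(H^{(\mathcal I)}) \le \epsilon^*(H)$, I would fix $\epsilon$ below $\epsilon^*(H)$ and compare the DE recursions \eqref{Eq:DE Vars}--\eqref{Eq:DE init} on $\mathcal G$ and on the protograph $\mathcal G^{(\mathcal I)}$ obtained by keeping only the rows in $\mathcal I$. Removing check nodes only removes factors from the variable-update product \eqref{Eq:DE Vars} (a product of terms in $[0,1]$), so it can only decrease each $x_\ell$, and correspondingly each CN-to-VN message \eqref{Eq:DE Checks}; an induction on $\ell$ shows that every message, and hence every $P_\ell(v,\epsilon)$, in $\mathcal G^{(\mathcal I)}$ is upper bounded by its counterpart in $\mathcal G$. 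Here one must be slightly careful about edge-type bookkeeping — a variable node's degree changes when checks are removed — but the monotonicity of the product over the retained edges is exactly what is needed. Since $P_\ell(v,\epsilon)\to 0$ for all $v$ in $\mathcal G$, the same holds in $\mathcal G^{(\mathcal I)}$, so $\epsilon$ is below $\epsilon^*(H^{(\mathcal I)})$ too.

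For the right inequality $\epsilon^*(H) \le \epsilon^*(H_{\mathcal J})$, I would do the same in the other direction: the sub-matrix $H_{\mathcal J}$ is exactly the protograph $\mathcal G$ with the variable nodes outside $\mathcal J$ deleted. Running DE on $\mathcal G$ while externally forcing $x_\ell(e)=1$ on every edge incident to a deleted VN yields, by the same inductive monotonicity argument (now the forced $1$'s only increase the CN-to-VN messages \eqref{Eq:DE Checks} and thus the downstream $x_\ell$'s), pointwise upper bounds on all messages at surviving nodes compared with the free DE on $\mathcal G$; but this ``$\mathcal G$ with boundary pinned to $1$'' recursion, restricted to the surviving edges, is \emph{literally} the DE recursion of $\mathcal G_{\mathcal J}$ (the deleted VNs contribute nothing to \eqref{Eq:DE Vars} once their messages are constantly $1$, which is the initialization \eqref{Eq:DE init} and is preserved). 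Hence if $\epsilon < \epsilon^*(H)$ then the free DE on $\mathcal G$ drives all $P_\ell$ to zero, which forces the DE on $\mathcal G_{\mathcal J}$ to do so as well, giving $\epsilon < \epsilon^*(H_{\mathcal J})$; taking suprema over such $\epsilon$ yields the claim.

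The main obstacle I anticipate is purely notational rather than conceptual: making the ``edge-type'' correspondence between $\mathcal G$ and its sub-protograph precise, since Fact~\ref{Fact:DE proto} indexes messages by edge types of the protograph and these types are relabeled under row/column deletion. The clean way to handle this is to phrase everything in terms of edges of a fixed common lifted graph (or, equivalently, to note that the DE fixed-point system for the sub-protograph is obtained from that of $\mathcal G$ by the substitution described above) and then invoke the standard fact that the BEC DE operator is monotone nondecreasing in all its arguments, so that a coordinatewise inequality at initialization propagates for all $\ell$ and passes to the limit. Monotonicity of the threshold in $\epsilon$ (guaranteed since the DE operator is also monotone in $\epsilon$) then converts the limiting inequality into the stated threshold inequality.
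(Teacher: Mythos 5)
Your overall plan (a coordinatewise coupling/monotonicity argument between the DE recursions of $H$ and its sub-matrices, which is also the paper's strategy) is sound, but both of your concrete arguments contain direction errors that break the proof. For the left inequality, the monotonicity is backwards: removing a check node deletes factors $u_\ell\in[0,1]$ from the product in \eqref{Eq:DE Vars}, which can only \emph{increase} each VN-to-CN message, and the induction on $\ell$ then shows that every message and every $P_\ell(v,\epsilon)$ under DE on $H^{(\mathcal I)}$ upper-bounds (not lower-bounds) its counterpart on $H$. Consequently your conclusion ``$\epsilon<\epsilon^*(H)$ implies $\epsilon<\epsilon^*\bigl(H^{(\mathcal I)}\bigr)$'' would prove $\epsilon^*(H)\le\epsilon^*\bigl(H^{(\mathcal I)}\bigr)$, which is the reverse of the claim and false in general (removing all checks drives the threshold to zero). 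The correct use of the (correctly oriented) domination is contrapositive, as in the paper: if $\epsilon>\epsilon^*(H)$ then some $P_\ell(v,\epsilon)$ stays bounded away from zero on $H$, and since $P^{(\mathcal I)}_\ell(v,\epsilon)\ge P_\ell(v,\epsilon)$, decoding fails on $H^{(\mathcal I)}$ as well, so $\epsilon\ge\epsilon^*\bigl(H^{(\mathcal I)}\bigr)$.

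For the right inequality, pinning the deleted VNs to $x\equiv 1$ is the wrong surrogate for column deletion: by \eqref{Eq:DE Checks}, a check with a neighbor whose message is identically $1$ sends $u\equiv 1$ to all its other neighbors forever, so the pinned recursion restricted to $\mathcal J$ is the DE of the protograph containing only the checks entirely supported on $\mathcal J$ (the ``local'' graph), \emph{not} of $H_{\mathcal J}$, which retains every row with reduced degree. Moreover, even granting that identification, your inference runs against your own bound: you show the pinned system upper-bounds the free DE on $\mathcal G$, and then claim that the free DE converging to zero forces the pinned system to converge — an upper bound cannot transfer convergence in that direction. The repair (and the paper's route) is to pin the deleted VNs to erasure probability $0$ via a per-VN erasure constellation: then a check with neighbors outside $\mathcal J$ behaves exactly like its degree-reduced counterpart, so the pinned recursion restricted to $\mathcal J$ \emph{is} the DE of $H_{\mathcal J}$, and by monotonicity of DE in the constellation it is dominated by the all-$\epsilon$ DE on $\mathcal G$; hence $\epsilon<\epsilon^*(H)$ implies convergence on $H_{\mathcal J}$, giving $\epsilon^*(H)\le\epsilon^*\left(H_{\mathcal J}\right)$.
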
 

\begin{proof}
See Appendix~\ref{App:Th UB}.
\end{proof}

The scope of the next lemma is a protograph that has poor BP performance. This protograph appears as a sub-graph in many SC-LDPC protographs, and its properties strongly affect local decoding.

\er{
	\begin{lemma}\label{Lemma:tril}
		For $ p\geq 1 $, let $ A $ be a $ p\times p $ lower triangular matrix with a full-ones diagonal, i.e., 
		\[
		A = \begin{pmatrix}
		1		&	0		&	\cdots	&	0		&	0\\
		a_{2,1}	&	1		&	\cdots	&	0		&	0\\
		\vdots	&	\vdots	&	\ddots	&	\vdots	&	\vdots\\
		a_{p-1,1}&a_{p-1,2}	&	\cdots	&	1		&	0\\
		a_{p,1}	&	a_{p,2}	&	\cdots	&a_{p,p-1}	&	1\\
		\end{pmatrix}\,,
		\]
		where $ a_{i,j}\in\mathbb{N} $ for $ i>j $.	Let $ \mathbf{c}\in\mathbb{N}^p$ be a column vector. Then, the threshold of 
		$ \left (\mathbf{c}\;|\;A\right )$
		is zero, where the symbol $|$ represents horizontal concatenation.
	\end{lemma}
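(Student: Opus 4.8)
The plan is to run density evolution (Fact~\ref{Fact:DE proto}) on the protograph $(\mathbf{c}\mid A)$ and show that the variable node $v_0$ corresponding to the column $\mathbf{c}$ is never fully recovered: $\lim_{\ell\to\infty}P_\ell(v_0,\epsilon)>0$ for every $\epsilon\in(0,1]$. By the definition \eqref{Eq:threshold} of the threshold, the existence of a single such node forces $\epsilon^*((\mathbf{c}\mid A))=0$, so this is all that is needed. Index the columns of $(\mathbf{c}\mid A)$ by $v_0$ (the column $\mathbf{c}$) and $v_1,\dots,v_p$ (the columns of $A$), and the rows by $c_1,\dots,c_p$; since $A$ is lower triangular with unit diagonal, $v_j$ is joined to $c_j$ through the diagonal entry $A_{jj}=1$ and, besides that, only to check nodes among $c_{j+1},\dots,c_p$. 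As a preliminary I would record that the BEC density-evolution recursion \eqref{Eq:DE Vars}--\eqref{Eq:DE Checks} initialized at $x_{-1}=u_{-1}=1$ is componentwise non-increasing in $\ell$, so every message converges to a limit (with limits $x_\infty(\cdot),u_\infty(\cdot)$ satisfying the fixed-point form of \eqref{Eq:DE Vars}--\eqref{Eq:DE Checks}), and $P_\ell(v_0,\epsilon)$ decreases to $P_\infty(v_0,\epsilon):=\epsilon\prod_i u_\infty(e^{v_0}_i)$.

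The heart of the proof is a short argument by contradiction. Assume $P_\infty(v_0,\epsilon)=0$ for some $\epsilon\in(0,1]$. Since $\epsilon>0$ and the product runs over the edges incident to $v_0$, some such edge $e_0$ — joining $v_0$ to a check $c_k$ — must carry $u_\infty(e_0)=0$. By \eqref{Eq:DE Checks}, $u_\infty(e_0)=1-\prod(1-x_\infty(\cdot))$ over the remaining edges of $c_k$; as all these factors lie in $[0,1]$, the product equals $1$ only if each factor does, so every variable-to-check message entering $c_k$ other than along $e_0$ vanishes, and in particular $x_\infty(v_k\to c_k)=0$ since $c_k$ is joined to its diagonal node $v_k\neq v_0$. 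Now let $Z=\{\,j\in\{1,\dots,p\}:x_\infty(v_j\to c_j)=0\,\}$; it contains $k$, hence is non-empty, and I set $j^*=\max Z$. Applying \eqref{Eq:DE Vars} to $x_\infty(v_{j^*}\to c_{j^*})=0$ with $\epsilon>0$: the product of the check-to-$v_{j^*}$ messages along the \emph{other} edges of $v_{j^*}$ must vanish, so either $v_{j^*}$ has degree $1$ — impossible, since then that message equals $\epsilon>0$ — or some check $c_m$ joined to $v_{j^*}$ has $u_\infty(c_m\to v_{j^*})=0$, and by lower-triangularity $m>j^*$. Repeating the check-node step on $c_m$ (which is joined to $v_m$, with $v_m\neq v_{j^*}$ since $m\neq j^*$) gives $x_\infty(v_m\to c_m)=0$, i.e.\ $m\in Z$ — contradicting $m>j^*=\max Z$. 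Hence no such $\epsilon$ exists, $P_\infty(v_0,\epsilon)>0$ for all $\epsilon\in(0,1]$, and $\epsilon^*((\mathbf{c}\mid A))=0$.

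I do not expect a serious obstacle: once one decides to track the set $Z$ of diagonal variable-to-check messages that vanish and to examine its largest index, the lower-triangular/unit-diagonal structure is precisely what makes the ``escape to a strictly larger row index'' step available at every stage, so the chain of implications closes. The points that do require care are routine: (i) the convergence of the density-evolution messages and the interchange of limit and product defining $P_\ell(v_0,\cdot)$, which follow from monotonicity of BEC density evolution; (ii) phrasing the two ``forced-zero'' implications at the level of edge types, as in Fact~\ref{Fact:DE proto}, so that entries of $\mathbf{c}$ or $A$ exceeding $1$ cause no trouble — the only multiplicity used anywhere is that of the diagonal edge $c_j\!-\!v_j$, which is $1$; and (iii) the degenerate case $\mathbf{c}=0$, where $v_0$ is an isolated node and $P_\infty(v_0,\epsilon)=\epsilon>0$ outright. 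Note that this route is self-contained and does not invoke Lemma~\ref{Lemma:Th UB}.
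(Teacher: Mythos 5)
Your argument is correct, but it proceeds along a genuinely different route from the paper's. The paper's proof is forward and quantitative: it starts from the degree-one variable node corresponding to the last column of $A$ (whose outgoing message is pinned at $\epsilon$ for every iteration), and propagates explicit lower bounds down the triangular chain by induction, obtaining $x_\ell(j,j)>\epsilon^{2^{p-j}}$ in the all-ones case (and $\epsilon^{1+s_j}$ with $s_j=\sum_{i>j}a_{i,j}2^{p-i}$ in general), which yields the uniform-in-$\ell$ bound $P_\ell(p+1,\epsilon)>\epsilon^{2^{p}}>0$ (resp. $\epsilon^{1+\sum_i c_i(1+s_i)}$) for the variable node attached to $\mathbf{c}$. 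You instead pass to the density-evolution fixed point via monotonicity and argue by contradiction: assuming the $\mathbf{c}$-node is recovered, the check and variable updates force a chain of vanishing diagonal messages $x_\infty(v_j\to c_j)=0$ whose row indices must strictly increase by lower triangularity, contradicting maximality of $j^*=\max Z$; the unit diagonal is used exactly where you say it is, to guarantee a single edge $v_j$--$c_j$ and to make the degree-one escape impossible. Both proofs exploit the same structural features (unit diagonal plus strictly lower-triangular coupling), but they trade off differently: the paper's induction produces an explicit, iteration-independent lower bound on the residual erasure probability (which quantifies how badly local decoding fails), whereas your fixed-point argument is shorter, needs no bookkeeping of exponents, and is insensitive to the multiplicities in $\mathbf{c}$ and the sub-diagonal of $A$; the monotonicity and limit-interchange facts you rely on are standard for BEC density evolution and are of the same kind the paper itself invokes elsewhere (e.g., in Lemma~\ref{Lemma: mono SG} and Appendix~\ref{App:Th UB}), so no gap remains.
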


	\begin{proof}
	We first prove the case where $  a_{i,j}=1 $ for every $ i>j $, and $\mathbf{c}=\mathbf1$, i.e., a full-ones column vector. In this case, the protograph represented by $ \left (\mathbf{c}\;|\;A\right )$ is illustrated in Figure~\ref{Fig:BadProto}, where the lower and upper edge connections correspond to $ \mathbf{c} $ and $ A $, respectively, and labels are given inside nodes (VN $ p+1 $ corresponds to the first column in $  \left (\mathbf{c}\;|\;A\right )$).
	\begin{figure}
		\begin{center}
			\begin{tikzpicture}
			\tikzstyle{cnode}=[rectangle,draw,fill=gray!70!white,minimum size=8mm]
			\tikzstyle{vnode}=[circle,draw,fill=gray!70!white,minimum size=9mm]
			\pgfmathsetmacro{\x}{3}
			\pgfmathsetmacro{\y}{12}
			
			\node (c1) [cnode] {\tiny{$1$}};\node (c2) [cnode,right=\x mm of c1] {\tiny{$2$}};
			\node (dots1) [right=\x mm of c2] {{$\cdots$}};
			\node (c3) [cnode,right=\x mm of dots1] {\tiny{$p-1$}};\node (c4) [cnode,right=\x mm of c3] {\tiny{$p$}};

			\node[vnode,below=\y mm of c1] (v1) {\tiny$1$};	\draw [thick] (c1.south)--(v1.north);\draw [thick] (c2.south)--(v1.north);\draw [thick] (c3.south)--(v1.north);\draw [thick] (c4.south)--(v1.north);
			\node[vnode,below=\y mm of c2] (v2) {\tiny$2$};	\draw [thick] (c2.south)--(v2.north);\draw [thick] (c3.south)--(v2.north);\draw [thick] (c4.south)--(v2.north);
			\node[vnode,below=\y mm of c3] (v3) {\tiny$p-1$};	\draw [thick] (c3.south)--(v3.north);\draw [thick] (c4.south)--(v3.north);
			\node[vnode,below=\y mm of c4] (v4) {\tiny$p$};	\draw [thick] (c4.south)--(v4.north);
			
			\node[vnode,above=\y mm of dots1] (vlp1) {\tiny$p+1$};	
			\node (dots2) at ($(v2)!0.5!(v3)$) {{$\cdots$}};
			
			\draw [thick] (c4.north)--(vlp1.south);\draw [thick] (c3.north)--(vlp1.south);\draw [thick] (c2.north)--(vlp1.south);\draw [thick] (c1.north)--(vlp1.south);
			\end{tikzpicture}
		\end{center}
		\caption{\label{Fig:BadProto} The protograph represented by $ \left (\mathbf{c}\;|\;A\right ) $ in Lemma~\ref{Lemma:tril} in the full-ones case.}
	\end{figure}
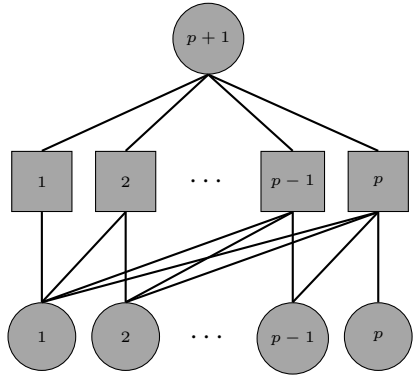
	
	Let $ \epsilon\in(0,1] $, and for every $ \ell\geq 1 $, $ i\in\{1,\ldots,l\} $ and $ j\in\{1,\ldots,p+1\} $, let $ x_\ell(i,j,\epsilon) $ (resp. $ u_\ell(i,j,\epsilon) $) denote the erasure rate of a $ \vn{j} \to \cn{i}$ (resp. $\cn{i} \to \vn{j}$) message in iteration $ \ell $. 	
	In view of \eqref{Eq:DE Vars}--\eqref{Eq:DE Checks}, since $ \vn{p} $ is of degree 1, then $ x_\ell(p,p,\epsilon )=\epsilon $ for every $ \ell $. Thus, $ u_\ell(p,j,\epsilon )>\epsilon $ for every iteration $ \ell $ and VN $ j\neq p $. Consequently, $ x_\ell(p-1,p-1,\epsilon )>\epsilon^2 $ and $ u_\ell(p-1,j,\epsilon )>\epsilon^2 $ for every $ \ell $ and  $ j\neq p-1 $. Similarly, we get by induction that for every iteration $ \ell $
	\begin{align}\label{Eq:xell}
	\begin{array}{lll}
	x_\ell(j,j,\epsilon )	&>&\epsilon\prod_{i=j+1}^{p}\epsilon^{2^{p-i}} \\
	&=&\epsilon^{2^{p-j}}
	\end{array}
	,\quad  \forall 1\leq j\leq p.
	\end{align} 
	This implies that for every iteration $ \ell $ and CN $ i\in\{1,\ldots,p\} $, $ u_\ell(i,p+1,\epsilon )>\epsilon^{2^{p-i}}  $, so for every iteration, the erasure rate of VN $ p+1 $ is bounded below by
	\begin{align}\label{Eq:VNlp1}
	\begin{array}{lll}
	P_\ell(p+1,\epsilon ) &>& \epsilon \prod_{i=1}^p \epsilon^{2^{p-i}} \\
	&=&  \epsilon^{2^{p}} \\
	&>& 0\,.
	\end{array}
	\end{align}
	Since this holds for every $ \epsilon\in(0,1] $, then the threshold is zero.
	
	We now relax the assumptions that  $  a_{i,j}=1 $ for every $ i>j $ and that $\mathbf{c}=(c_1,\ldots,c_p)=\mathbf1$, and consider the general case. For every VN $ j\in\{1,2,\ldots,p-1\} $ (lower part in Figure~\ref{Fig:BadProto}), let $ s_j = \sum_{i=j+1}^p a_{i,j}2^{p-i}$. The same arguments above hold with  modification in \eqref{Eq:xell} and \eqref{Eq:VNlp1} given by
	\begin{align}\label{Eq:xell2}
	\begin{array}{lll}
	x_\ell(j,j)	&>&\epsilon\prod_{i=j+1}^{p}\epsilon^{a_{i,j}2^{p-i}} \\
	&=&\epsilon^{1+s_j}
	\end{array}
	,\quad  \forall 1\leq j\leq p.
	\end{align} 
	and
	\begin{align}\label{Eq:VNlp12}
	\begin{array}{lll}
	P_\ell(p+1,\epsilon) &>& \epsilon \prod_{i=1}^p \epsilon^{c_i(1+s_i)} \\
	&>& 0\,,
	\end{array}
	\end{align}
	respectively.
	\end{proof}
}

\begin{theorem}\label{Th:no local}
Let $H$ be a binary bi-adjacency matrix representing an $ (l,r) $-regular SC-LDPC protograph $\mathcal{G}=(\mathcal{V}\cup\mathcal{C},\mathcal{E})$ constructed by truncating the infinite matrix in Figure~\ref{Fig:SC36}(b), and suppose $\mathcal{V}$ is divided into $M>1$ SBs. 
\er{
	If there are no two rows in $ \left (B_0 ; B_1\right ) $ that are all ones, then
	\begin{align*}
	\epsilon^*_m=0,\quad 2\leq m\leq M-1.
	\end{align*}
	If in addition, the partitioning is done via the cutting-vector method \cite{MitchellISIT2014}, then $\epsilon^*_1=\epsilon^*_M=0$.
}
\end{theorem}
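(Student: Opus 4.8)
The plan is to show that every local protograph $\mathcal{G}_m$ for an interior sub-block contains, as a column-induced sub-matrix, a matrix of the form $(\mathbf{c}\,|\,A)$ with $A$ lower-triangular with a full-ones diagonal, and then invoke Lemma~\ref{Lemma:tril} together with the upper-bound half of Lemma~\ref{Lemma:Th UB}. The point is that $\epsilon^*(H_{\mathcal{G}_m})$ is squeezed: since a column-restriction can only increase the threshold, it suffices to exhibit \emph{any} set of columns of $H_{\mathcal{G}_m}$ whose induced matrix has threshold zero, and Lemma~\ref{Lemma:tril} provides exactly such a certificate.

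First I would set up the local matrix for an interior sub-block $m$ with $2\le m\le M-1$. Because the coupled matrix is built by diagonally stacking $(B_0;B_1)$, the columns belonging to $\mathcal{V}_m$ meet CNs in at most two consecutive "block rows": the block row whose $B_0$-part sits in the columns of $\mathcal{V}_m$, and the next block row whose $B_1$-part sits there. A check node is an LC of $\mathcal{V}_m$ exactly when all its incident VNs lie inside $\mathcal{V}_m$; for the generic column-partition (each $\mathcal{V}_m$ a contiguous run of columns), the only rows that can be local are those entirely supported inside the block. I would then identify the degree-one VN that must appear: since no two rows of $(B_0;B_1)$ are all-ones, within the block the local check structure restricted to a suitable nested family of columns yields a staircase/triangular pattern — concretely, one keeps peeling off a VN that touches a fresh local check not touched by the previously chosen VNs, which is possible precisely because the row supports are not all-ones and hence strictly nested in the relevant window. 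Choosing these columns in the right order makes the induced sub-matrix lower-triangular with unit diagonal, with one extra column (a VN coupling out of the block, or any further in-block VN) playing the role of $\mathbf{c}$. Then Lemma~\ref{Lemma:tril} gives threshold zero for that sub-matrix, and Lemma~\ref{Lemma:Th UB} (right inequality) forces $\epsilon^*_m\le 0$, hence $\epsilon^*_m=0$, for $2\le m\le M-1$.

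For the boundary sub-blocks $m=1$ and $m=M$ I would use the hypothesis that the partition is obtained by the cutting-vector method of \cite{MitchellISIT2014}. Under a cutting vector, the set of columns assigned to $\mathcal{V}_1$ is a prescribed prefix of the coupled matrix, and the induced local protograph $\mathcal{G}_1$ — because of the lower-triangular shape of the terminated SC matrix at the left end — again contains a degree-one VN (the terminating checks have low degree) and, restricted to an appropriate nested column family, a triangular-with-unit-diagonal block plus a coupling column. The same argument via Lemma~\ref{Lemma:tril} and Lemma~\ref{Lemma:Th UB} then gives $\epsilon^*_1=0$, and by the left–right symmetry of the construction (reading the chain in reverse), $\epsilon^*_M=0$ as well.

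The main obstacle I anticipate is the bookkeeping in the middle step: verifying that, for an arbitrary binary split $B=B_0+B_1$ with no all-ones row in $(B_0;B_1)$, the local checks of an interior block together with a carefully chosen ordering of a subset of that block's VNs really do assemble into a lower-triangular matrix with a full-ones diagonal — i.e., that one can always find the requisite chain of "new" local checks and that the off-diagonal entries sit strictly below the diagonal. This is essentially a combinatorial claim about the support pattern of $(B_0;B_1)$ inside the two-block-wide window that defines $\mathcal{G}_m$; I would make it precise by an explicit induction on the VNs of $\mathcal{V}_m$, ordering them so that each newly added VN is incident to at least one local check untouched by its predecessors, and arguing that the "no two all-ones rows" condition guarantees this ordering exists and terminates with a genuine degree-one local check (so the diagonal is all-ones and the construction closes). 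Everything after that certificate is produced is immediate from the two cited lemmas.
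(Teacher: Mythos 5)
There is a genuine gap in your treatment of the interior sub-blocks, which is the heart of the theorem. Your plan there is to assemble a lower-triangular certificate out of several local checks, ``peeling off a VN that touches a fresh local check not touched by the previously chosen VNs,'' on the grounds that the rows of $\left(B_0;B_1\right)$ are not all-ones and hence have ``strictly nested'' supports. This rests on a wrong picture of the local protograph. Since the base matrix $B$ is all-ones, a check in the block row meeting sub-block $m$ has, inside $\mathcal{V}_m$, exactly the support of its row of $B_0$ (resp.\ $B_1$), and outside $\mathcal{V}_m$ the complementary support of $B_1$ (resp.\ $B_0$); so a check is a local check of an interior sub-block \emph{if and only if} the corresponding row of $B_0$ or $B_1$ is all-ones. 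Under the hypothesis there is therefore at most one local check in $\mathcal{G}_m$ (and typically none), and that check, if present, is connected to all $r$ VNs. There is no family of local checks with nested supports, no ``chain of new local checks,'' and no degree-one local check to terminate your induction, so the combinatorial claim you flag as the main obstacle is not just unproven but false as stated. The correct observation is much simpler and is exactly what the paper uses: any row that is not all-ones has neighbors outside the sub-block, i.e.\ it is connected to erasures and is useless to the local decoder, so the interior local code has at most one check node, which immediately forces $\epsilon^*_m=0$ (one could also note that a single all-ones row trivially contains a $1\times 2$ submatrix of the form $\left(\mathbf{c}\,|\,A\right)$, so your Lemma~\ref{Lemma:tril}/Lemma~\ref{Lemma:Th UB} route can be salvaged, but only after this structural point is made, and it is then an unnecessary detour).

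Your boundary-sub-block argument is closer in spirit to the paper's: for $m=1$ the local graph is (under the cutting-vector hypothesis) represented by $B_0$, and one exhibits a column subset on which $B_0$ is a zero block stacked over a matrix of the form $\left(\mathbf{c}\,|\,A\right)$, then applies Lemma~\ref{Lemma:tril} and the column-restriction inequality of Lemma~\ref{Lemma:Th UB}; symmetry handles $m=M$. But note two points you leave vague: the ``cutting vector'' describes the split $B=B_0+B_1$ (the staircase shape of $B_0$), not a prefix assignment of columns to $\mathcal{V}_1$; and the triangular certificate comes from locating a gap $\xi_I-\xi_{I-1}\geq 2$ in the cutting vector (which exists because $l\leq r-1$) and taking columns $\xi_I-1,\ldots,r$, with the degenerate case $\xi_l<r$ handled separately since it produces an all-zero column (degree-zero VN) and hence zero threshold directly. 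Also beware your phrase ``degree-one VN (the terminating checks have low degree)'': what Lemma~\ref{Lemma:tril} needs is a VN of degree one in the restricted matrix, which the staircase provides; low-degree terminating checks are not the relevant object. As written, the proposal does not establish the interior case and only sketches the boundary case, so it does not yet constitute a proof.
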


Theorem~\ref{Th:no local} states a negative result on sub-block locality in existing SC-LDPC codes, and motivates a construction of multi-sub-block SC-LDPC codes which we address later.

\begin{remark}\label{Rem:why th=0}
Recall that the matrix $ B $ is an $ l $-by-$ r $ all-ones matrix. The ``no two full rows" property of the matrices $ \{B_\tau\}_{\tau=0}^1 $ in Theorem~\ref{Th:no local} holds in many SC-LDPC protographs in the literature, since it induces high global thresholds. 
In fact, the family of protographs covered by Theorem~\ref{Th:no local} is larger than it may seem in a first look. For example, the $(l,r)$ SC-LDPC ensemble from \cite[Definition 3]{Mitchell15} with $l=\mathrm{gcd}(l,r)$, is included in Theorem~\ref{Th:no local}. 
\end{remark}

\er{
\begin{proof}[Proof of Theorem~\ref{Th:no local}]
	Consider first non-termination sub-blocks, i.e., $m\in \{2,3,\ldots,M-1\}$.
	Since the base matrix $ B $ is an all-ones matrix, then any row in $ \left (B_0 ; B_1\right ) $ that is not full ones has ones outside of the SB boundaries, i.e., the corresponding check node is connected to an erasure; thus, the local decoder cannot use this check-node. Consequently, the local code has at most one local check node. This leads to a zero threshold.

	Now, consider sub-block $ m=1 $ whose (local) protograph is represented by $ B_0 $. Assume cutting-vector partitioning, and let  $ \mathbf\xi =\left (\xi_1,\xi_2,\ldots,\xi_{l-1},\xi_l \right )$ be the cutting vector, such that for every $ i\in\{1,2,\ldots,l-1,l\} $, $ \xi_i\in \{1,2,\ldots,r\}$ and $1\leq \xi_1 <\xi_2<\cdots<\xi_{l-1}<\xi_l \leq r$.
	In addition, let $ \xi_0=0 $ and $\xi_{l+1}=r+1 $. Since $ l\leq r-1 $ (else the rate is zero), then there exists $ i\in \{1,\ldots,l+1\} $ such that $ \xi_i-\xi_{i-1}\geq 2 $. Let $ I=\max\{1\leq i\leq l+1\colon \xi_i-\xi_{i-1}\geq 2\} $.  If $ I=l+1 $, then $ \xi_l\leq \xi_{l+1}- 2=r-1 $, and the rightmost column of $ B_0 $ is a zero column, which leads to a zero threshold. Else, let $ \tilde{B}_0 $ be the sub-matrix consisting of columns $ \xi_{I}-1 $ upto $ r $ of $ B $. 
	Note that $ \tilde{B}_0$ is in the form of 
	\[
	\left (
	\begin{array}{c}
	0\\
	\hline
	\mathbf{c}\;|\;A
	\end{array} \right ),
	\]
	where $ 0 $ is a zero matrix, and $ \left (\mathbf{c}\;|\;A\right )$ is in the form of Lemma~\ref{Lemma:tril}. Thus, $\epsilon^*(\tilde{B}_0) =0$, which combined with Lemma~\ref{Lemma:Th UB} implies that $ \epsilon^*(B_0)=0$, i.e., the threshold of sub-block $ m=1 $ is zero.
	The proof that $ \epsilon^*(B_1)=0 $ follows in a similar way, so the threshold of sub-block $ M $ is also zero.
\end{proof}
}

\begin{corollary} \label{Coro:l=2}
If $l=2$, then no $(l,r)$-regular SC-LDPC protograph can have a non-zero local decoding threshold.
\end{corollary}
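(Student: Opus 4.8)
The plan is to deduce Corollary~\ref{Coro:l=2} directly from Theorem~\ref{Th:no local} together with Lemma~\ref{Lemma:tril}, by showing that when $l=2$ the hypotheses of Theorem~\ref{Th:no local} are automatically satisfied, so that there is nothing left to assume. Recall that $B = 1^{l\times r}$ and $B = B_0 + B_1$ with $B_0,B_1$ binary. First I would observe that for any sub-block partition of an $(l,r)$-regular SC-LDPC protograph, the local protograph of a middle sub-block $m\in\{2,\ldots,M-1\}$ is governed (up to the trivial reduction used in the proof of Theorem~\ref{Th:no local}) by rows of $(B_0\,;\,B_1)$, and a local check node survives only if its row is all-ones; hence I must rule out the existence of two all-ones rows in $(B_0\,;\,B_1)$.

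The key step is the following counting argument specialized to $l=2$. Write $B_0$ and $B_1$ each as a $2\times r$ binary matrix with $B_0 + B_1 = 1^{2\times r}$, i.e.\ $B_1 = 1^{2\times r} - B_0$. The matrix $(B_0\,;\,B_1)$ has exactly $4$ rows: the two rows of $B_0$, call them $b_0^{(1)}, b_0^{(2)}$, and the two rows of $B_1 = \mathbf 1 - B_0$, namely $\mathbf 1 - b_0^{(1)}, \mathbf 1 - b_0^{(2)}$. A row $b_0^{(i)}$ of $B_0$ is all-ones iff the corresponding row $\mathbf 1 - b_0^{(i)}$ of $B_1$ is the zero row; and a row of $B_1$ is all-ones iff the matching row of $B_0$ is zero. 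For there to be two all-ones rows among the four, either both rows of $B_0$ are all-ones (forcing $B_1 = 0$, so $B_0 = B$ is not a proper coupling split and the construction degenerates — in particular $\mathcal{G}$ is not genuinely spatially coupled), or both rows of $B_1$ are all-ones (symmetrically $B_0 = 0$), or one all-ones row comes from $B_0$ and one from $B_1$, say $b_0^{(1)} = \mathbf 1$ and $\mathbf 1 - b_0^{(2)} = \mathbf 1$, i.e.\ $b_0^{(2)} = 0$; but then $B_0$ has a zero row, hence $B_1 = \mathbf 1 - B_0$ has an all-ones row $b_1^{(2)}=\mathbf 1$, and moreover $B_0 = (b_0^{(1)}\,;\,b_0^{(2)}) = (\mathbf 1\,;\, 0)$ has a zero column only if $r$... — more to the point, this forces a VN (column) of $B_0$ to have degree $1$ in the local protograph of SB $1$, or symmetrically for $B_1$ and SB $M$. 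In every one of these degenerate cases the local protograph of some sub-block is covered by Lemma~\ref{Lemma:tril} (a degree-one VN, or the triangular form) or is vacuous, so its threshold is zero; and in the only remaining case the "no two full rows" hypothesis of Theorem~\ref{Th:no local} holds, giving $\epsilon^*_m = 0$ for all middle sub-blocks, and $\epsilon^*_1=\epsilon^*_M=0$ under cutting-vector partitioning. I would also invoke Corollary~\ref{Coro:l=2}'s implicit assumption that $r\geq l+1 = 3$ (otherwise the design rate is zero), exactly as in the proof of Theorem~\ref{Th:no local}.

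Combining the cases: no matter how $B_0$ (hence $B_1 = \mathbf 1 - B_0$) is chosen, either the construction degenerates to the point that a sub-block protograph has a zero threshold outright, or $(B_0\,;\,B_1)$ has at most one all-ones row and Theorem~\ref{Th:no local} applies verbatim. Therefore every sub-block of every $(2,r)$-regular SC-LDPC protograph has $\epsilon^*_m = 0$, which is the assertion of the corollary (the phrase "no $(l,r)$-regular SC-LDPC protograph can have a non-zero local decoding threshold" being exactly the statement that all local thresholds vanish for all admissible partitions).

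I expect the main obstacle to be purely organizational rather than mathematical: carefully enumerating the sub-cases for how two all-ones rows could appear among the four rows of $(B_0\,;\,B_1)$ and checking that each degenerate possibility still yields a zero local threshold (either via a degree-one variable node invoking Lemma~\ref{Lemma:tril}, or because $B_0$ or $B_1$ acquires a zero column). One subtlety worth stating explicitly is the boundary sub-blocks $m=1$ and $m=M$: there the argument must mirror the cutting-vector part of the proof of Theorem~\ref{Th:no local}, reducing $B_0$ (resp.\ $B_1$) to the form $\left(\begin{smallmatrix}0\\\mathbf c\,|\,A\end{smallmatrix}\right)$ of Lemma~\ref{Lemma:tril}; for $l=2$ this reduction is especially transparent since $A$ is then at most $1\times 1$, equal to $(1)$.
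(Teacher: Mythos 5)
Your core route is the same as the paper's: for $l=2$, argue that a binary split $B_0+B_1=1^{2\times r}$ with both parts nonzero leaves at most one all-ones row in $\left(B_0\,;\,B_1\right)$, and then invoke Theorem~\ref{Th:no local}. The paper's proof is exactly this one-sentence reduction. However, your extra case analysis goes wrong at precisely the delicate point, the ``swap'' case $B_0=(\mathbf 1\,;\,\mathbf 0)$, $B_1=(\mathbf 0\,;\,\mathbf 1)$. There you claim that some sub-block's local protograph acquires a degree-one VN and hence a zero threshold via Lemma~\ref{Lemma:tril}. That is false. With this decomposition every check of the coupled matrix touches only one sub-block (check $1$ of position $m$ sees all of SB $m$; check $2$ of position $m+1$ also sees all of SB $m$ and nothing of SB $m+1$), so the protograph splits into $M$ disjoint copies of the two-check $(2,r)$ graph: every VN has local degree $2$, both checks are \emph{local} checks, and the local threshold is the positive $(2,r)$ value $1/(r-1)$ (cf.\ $\epsilon^*_2=0.2$ for $(2,6)$ in Example~3 of the paper), not zero. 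The same comment applies to your first two cases ($B_1=0$ or $B_0=0$). These decompositions cannot be disposed of by exhibiting a zero local threshold; they must instead be excluded because they produce no coupling at all — the graph is a disjoint union of per-sub-block components, i.e.\ the trivial uncoupled situation, which is outside the scope of ``spatially coupled protograph'' in the corollary. (That exclusion is also the unstated assumption behind the paper's one-liner: among decompositions that genuinely couple adjacent sub-blocks, two full rows in $\left(B_0\,;\,B_1\right)$ are impossible when $l=2$, so Theorem~\ref{Th:no local} applies.)

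Two smaller points: a degree-one VN by itself is not ``covered by Lemma~\ref{Lemma:tril}'' — the lemma needs the $\left(\mathbf c\,|\,A\right)$ triangular structure — and in the swap case the local graph of SB $1$ is not $B_0$ alone, because the all-ones row of $B_1$ sitting in check group $2$ has no connection to SB $2$ and therefore becomes an additional local check of SB $1$. So the fix is not more case-checking of thresholds but a cleaner statement of what is being excluded: either both full rows lie in the same matrix (forcing $B_0=0$ or $B_1=0$) or they lie in different rows of different matrices (forcing the uncoupled swap); in every genuinely coupled $(2,r)$ construction the ``no two full rows'' hypothesis holds and Theorem~\ref{Th:no local} finishes the proof, exactly as the paper asserts.
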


\begin{proof}
If $ B=1^{2\times r }$, for some $ r\geq 3 $, then any decomposition of $ B $ into 2 non-zero matrices results in the ``no two full rows" condition in Theorem~\ref{Th:no local}.
\end{proof}

\subsection{A SC-LDPCL construction}
\label{Sub:Loc Dec}

Motivated by Theorem~\ref{Th:no local}, we introduce a construction of $(l,r)$-regular  SC-LDPC protographs having sub-block locality. 
The inputs to the construction are: the degrees $(l,r)$ (in view of Corollary~\ref{Coro:l=2}, we assume that $l\geq 3$), the number of SBs $M$, and a new coupling parameter $t\in\{1,2,\ldots,l-2\}$; the resulting protograph is an $(l,r,t)$ spatially coupled protograph with $M$ SBs, each consisting of $ r $ variable nodes. As we will see, $t$ serves as a design tool to control the trade-off between local and global decoding thresholds. 
\begin{construction}[SC-LDPCL]\label{Construct: SC-LDPCL}
Let $A_1$ be a $t\times r$ matrix given by 
\begin{center}
	\begin{tikzpicture}[every left delimiter/.style={xshift=.75em},	every right delimiter/.style={xshift=-.75em},
	]
		\matrix [matrix of math nodes,left delimiter=(,right delimiter=)] (m)
		{
			\underline{1} 	&               &               &               &               & \underline{0} \\
			\underline{1} 	& \underline{1} &               &\ph1			&               & \underline{0} \\
			\underline{1} 	& \underline{1} & \underline{1} &               &               & \underline{0} \\[-3mm]
			\vdots    		&    \vdots     &    \vdots     & \ddots        &               &  \vdots       \\
			\underline{1} 	& \underline{1} & \underline{1} & \cdots 		& \underline{1} & \underline{0} \\
		}; 
		\node (0)  at (m-2-4) {\Huge $ 0 $};
		\node [left = 1 mm of m] {$ A_1= $};
		\node [right = 1 mm of m] {$, $};
	\end{tikzpicture}
\end{center}
where $\underline{1}$ and $\underline{0}$ are length-$\left \lfloor \tfrac{r}{t+1} \right \rfloor$ all-one row vector and length-$\left (r-t\left \lfloor \tfrac{r}{t+1} \right \rfloor\right )$ all-zero row vector, respectively. Let $A_2$ be an all-ones $(l-t)\times r$ matrix.
We build the $(l,r,t)$ protograph as in Figure~\ref{Fig:SC36}(b) with memory $T=1$, and $M$ copies of $\begin{pmatrix} B_0 ; B_1 \end{pmatrix}$ on the diagonal, where
$B_0=\begin{pmatrix}A_1;A_2\end{pmatrix}$ and $B_1=1^{l\times r}-B_0.$
\end{construction}

The resulting coupled protograph $\mathcal{G}$ has $rM$ VNs and $lM+t$ CNs, so the design rate is $R_{\mathcal{G}}=1-\frac{l}{r}-\frac{t}{rM}$. For every $m\in\{2,\ldots,M-1\}$, the local graph $\mathcal{G}_m$ is represented by $A_2$ which is $(l-t,r)$-regular, and for $m=1$ and $m=M$, the local graph $\mathcal{G}_m$ is represented by 
$\begin{pmatrix} A_1 ; A_2 \end{pmatrix}$  and $\begin{pmatrix} A_2 ; \bar{A_1} \end{pmatrix}$, 
respectively, where $\bar{A_1}$ is the complement of $A_1$. Thus, for every $m\in\{2,\ldots,M-1\}$, $\epsilon^*_m=\epsilon^*(l-t,r)>0$, and Lemma~\ref{Lemma:Th UB} implies that for $m\in\{1,M\}$, $\epsilon^*_m\geq\epsilon^*(l-t,r)>0$, where $\epsilon^*(l-t,r)$ is the BP threshold of the $(l-t,r)$-regular LDPC code ensemble (i.e., uncoupled). \er{In general, due to termination check nodes, SBs $ 1 $ and $ M $ have better local thresholds than the SBs $\{2,\ldots,M-1\}$.}
\er{
	\begin{remark}
		The codes constructed by Construction~\ref{Construct: SC-LDPCL} have memory $ T=1 $. In Section~\ref{Sec:Gen Construct} we give a generalized construction for $ T\geq 1 $.
	\end{remark}		
}

\begin{example} \label{Ex:SC-LDPCL 3,6}
Figure~\ref{Fig:SC-LDPCL36} illustrates the $(3,6,1)$ SC-LDPCL protograph with $M=3$ SBs. In this case\er{,} we have
$A_1=\begin{pmatrix}
1 & 1 & 1 & 0 & 0 & 0 
\end{pmatrix}$,
the design rate is $R=0.4444$, and the thresholds are: $\epsilon^*_G=0.4772$, $\epsilon^*_1=\epsilon^*_3=0.4298$, and $\epsilon^*_2=0.2$ ($\epsilon^*_2$ corresponds to the $(2,6)$-regular ensemble). Note that the global-threshold loss compared to the ordinary $ (3,6) $ SC-LDPC protograph from Example~\ref{Ex:Background}, which does not enable sub-block decoding, is $ 6.97\% $, while the design rate increases by $ 12.46\% $ (these differences diminish as the number of sub-blocks increases).
\end{example}

\begin{figure}
\begin{center}
\begin{tikzpicture}\label{Tikz:36 SCLDPCL}
   	\tikzstyle{cnode}=[rectangle,draw,fill=gray!70!white,minimum size=4mm]
    \tikzstyle{vnode}=[circle,draw,fill=gray!70!white,minimum size=4mm]
   	\pgfmathsetmacro{\x}{3}
   	\pgfmathsetmacro{\y}{1}
   	
   	\foreach \m in {1,2,3}
   	{
   		
   		\foreach \c in {1,2,3}
   		{
   			\node[cnode] (c\c\m) at (\m*\x-\x,\c*\y-\y) {};	
   		}
   		\foreach \v in {1,2,3,4,5,6}
   		{
   			\node[vnode] (v\v\m) at (\m*\x-0.5*\x,\v*\y-2.5*\y) {};	
    	}	
    }
    \node[cnode] (c34) at (4*\x-\x,3*\y-\y) {};	
	\node (cc) [above=1mm of v62,align=center] {\footnotesize coupling checks}; 
	\node (c32Aux) [cnode,outer sep=1mm,thick] at (c32){};
	\node (c33Aux) [cnode,outer sep=1mm,thick] at (c33){};
	\draw[->,thick] (cc.west)to[in= 90, out = 225] (c32Aux.north);
	\draw[->,thick] (cc.east)to[in= 90, out = 315] (c33Aux.north);
    \foreach \m in {1,2,3}
    {	
    	\pgfmathtruncatemacro{\k}{\m + 1}
    	\draw[thick] (v1\m)--(c1\m) ;
    	\draw[thick] (v1\m)--(c2\m) ;
    	\draw[thick] (v1\m)--(c3\k) ;	
    	\draw[thick] (v2\m)--(c1\m) ;
    	\draw[thick] (v2\m)--(c2\m) ;
    	\draw[thick] (v2\m)--(c3\k) ;
    		
    	\draw[thick] (v3\m)--(c1\m) ;
    	\draw[thick] (v3\m)--(c2\m) ;
    	\draw[thick] (v3\m)--(c3\k) ;
    	\draw[thick] (v4\m)--(c1\m) ;
    	\draw[thick] (v4\m)--(c2\m) ;
    	\draw[thick] (v4\m)--(c3\m) ;
    		
    	\draw[thick] (v5\m)--(c1\m) ;
    	\draw[thick] (v5\m)--(c2\m) ;
    	\draw[thick] (v5\m)--(c3\m) ;
    	\draw[thick] (v6\m)--(c1\m) ;
    	\draw[thick] (v6\m)--(c2\m) ;
    	\draw[thick] (v6\m)--(c3\m) ;			
    }	
	\node (a) [below=0.5cm of v12] {(a)}; 
	
	\node  [right=25mm of v53] (B1){$B_0= 
		\begin{pmatrix}
			1&1&1&0&0&0\\
			1&1&1&1&1&1\\
			1&1&1&1&1&1
		\end{pmatrix}$}; 
	\node  [below=of B1] {$B_1= 
		\begin{pmatrix}
		0&0&0&1&1&1\\
		0&0&0&0&0&0\\
		0&0&0&0&0&0
		\end{pmatrix}$}; 
	
	\node (b) [ right=7cm of a] {(b)};
\end{tikzpicture}
\end{center}
    
     \caption{\label{Fig:SC-LDPCL36}
     (a) The $(3,6,1)$ SC-LDPCL protograph with $M=3$ SBs; (b) the partition according to Construction~\ref{Construct: SC-LDPCL} with $ l=3,\;r=6,\;t=1 $.}
\end{figure}
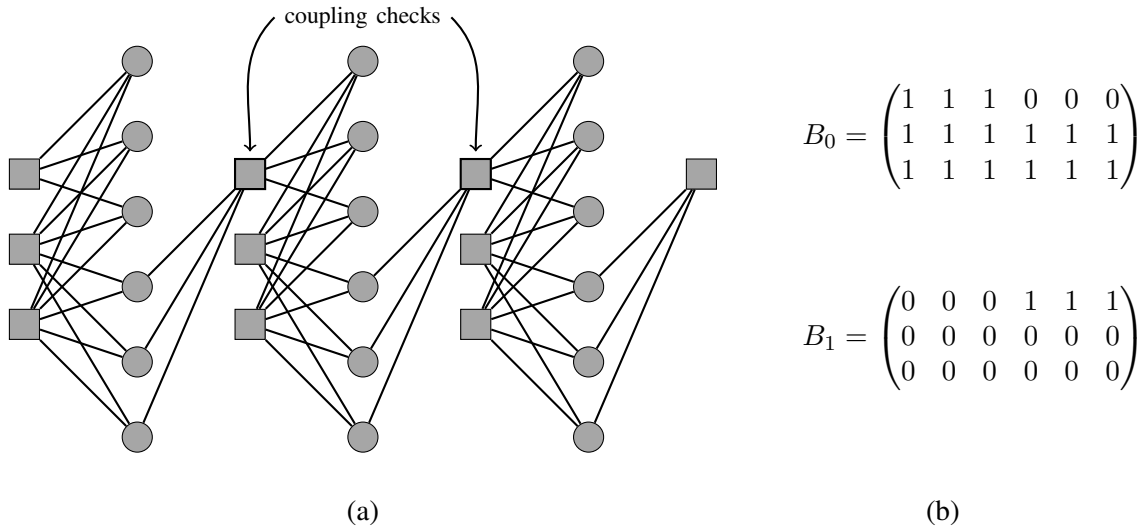

The coupling  parameter $t$ serves as a design tool that controls the trade-off between the local and global thresholds. More precisely, $ t $ designates the number of CCs connecting adjacent sub-blocks: when $t$ is small, more CNs are LCs, and the local threshold is higher on the expense of a lower global threshold; when $ t $ is large, the situation is reversed: global threshold is higher and local threshold is reduced. If one takes $t=0$, there are no CCs and the resulting protograph consists of $ M $ uncoupled $ (l,r) $-regular protographs, and if one takes $t=l-1$, the protograph is strongly coupled and there is only one LC (i.e., no locality). In view of the triviality of these extreme values, we restrict $ t\in\{1,2,\ldots,l-2\} $ in Construction~\ref{Construct: SC-LDPCL}.

\er{
	The matrices $A_1$ and $ A_2 $ in Construction~\ref{Construct: SC-LDPCL} specify the $ t $ coupling and $ l-t  $ local checks connections, respectively. Together, they describe the partition and coupling of a regular $ (l,r) $ protograph as described in Section~\ref{Sub:SCLDPC intro}. 
	When we construct sub-block locality SC-LDPC codes from local protographs, $ A_2 $ has to be the matrix $1^{(l-t)\times r}$, else the corresponding rows are no longer local checks (because a zero in $A_2$ implies a one in $B_1$ that connects the check to the sub-block on the left). Other local codes are possible if we allow non-regular protographs, which is not the scope of this paper.
	In contrast, the structure of $A_1$ does not affect the locality property of the code.
} 
The specific choice of $A_1$ in Construction~\ref{Construct: SC-LDPCL} is known as the ``cutting vector" approach \cite{MitchellISIT2014}, and one can use other partitions such as in \cite{HomaLara19} \er{for $ A_1 $ to optimize the performance}. 
We explore more partitions that induce sub-blocked SC codes in Section~\ref{Sec:Gen Construct}.

\begin{example} \label{Ex:SC-LDPCL}
	Table~\ref{Tbl:SC-LDPCL} details the design rates and thresholds of the \er{$(4,16,t)$ SC-LDPCL protographs for $t\in\{0,1,2,3\}$}, with $M=12$ SBs. The table exemplifies the role of $ t $ in trading off local and global performance (note that the table includes the extreme values of $t=0,3$). Further, sub-blocks $ 1 $ and $ 6 $ show better local thresholds than sub-blocks $ 2,3,4,5 $, and this difference is more prominent with higher $ t $ values. This phenomenon is due to $ t $ terminating check nodes in the first and last sub-blocks, which increase the local threshold compared to the inner sub-blocks. The parameter $t$ also affects the code design rate, as expressed by the right-most column.
\end{example}

\begin{table} [htbp]
	\er{
	\caption{\label{Tbl:SC-LDPCL} Thresholds and design rates for $ (4,16,t) $ SC-LDPCL protographs.}
	\begin{center}
		
		\begin{tabular}{c||c|c|c|c|c}
			
			$t$ & $\epsilon^*_1$& $\epsilon^*_2,\ldots,\epsilon^*_{11}$& $\epsilon^*_{12}$& $\epsilon^*_G$& $R$ \\
			\hline \hline
			$0$ &    0.1931    &      0.1931    &       0.1931   &     0.1931   &  0.75\\
			$1$ &    0.2036    &      0.1568    &       0.2036   &     0.2119   &  0.7438\\
			$2$ &    0.1995    &      0.0667    &       0.2142   &     0.2313   &  0.7375\\
			$3$ &    0   	   &      0 	    &       0	     & 	   0.2455   &  0.7313
		\end{tabular}
		
	\end{center}
	}
\end{table} 

\begin{example}\label{Ex: LocalVsGlobal Sim BEC}
	Figures~\ref{Fig:GlobBER48_BEC} and~\ref{Fig:LocBER48_BEC} show BEC performance of global and local decoding, respectively, of three SC-LDPCL codes constructed by Construction~\ref{Construct: SC-LDPCL} with degrees $ l=4,r=8$, number of sub-blocks $ M=3 $, lifting parameter $ L=625 $ and $ t=1,2,3 $ \er{(SB and full-block lengths are $ 5,000 $ and $15,000  $, respectively); local decoding is done on SB $ 2 $ which is a $ (4-t,8) $-regular code for $ t=1,2,3 $.} When $ t=3 $ we get the ordinary SC-LDPC $ (4,8) $ code (i.e., no locality); indeed the top curve in Figure~\ref{Fig:LocBER48_BEC} (dashed-blue-triangles) shows that this code has poor local-decoding performance (the output BER is approximately the channel parameter $ \epsilon $). The other options $ t=1 $ (solid-green-circles) and $ t=2 $ (dotted-red-squares) have much better local-decoding performance, where $t=1$ is superior to $t=2$, but less attractive in global decoding plotted in Figure~\ref{Fig:GlobBER48_BEC}.
\end{example}

\begin{figure}
	\begin{minipage}{0.45\textwidth}
	\begin{tikzpicture}

\begin{axis}[%
width=2.5in,
height=2in,
at={(0,0)},
scale only axis,
xmin=0.35,
xmax=0.5,
xlabel style={font=\color{black}},
xlabel={$ \epsilon $},
ymode=log,
ymin=1e-07,
ymax=1,
yminorticks=true,
ylabel style={font=\color{black}},
ylabel={Bit Erasure Rate},
axis background/.style={fill=white},
axis x line*=bottom,
axis y line*=left,
xmajorgrids,
ymajorgrids,
yminorgrids,
legend style={at={(0,1)}, anchor=north west, legend cell align=left, align=left, draw=black}
]
\addplot [color=green, line width=2.0pt, mark=*, mark options={solid, green}]
  table[row sep=crcr]{%
0.384210526315789	0\\
0.4	3.28528886241811e-07\\
0.405263157894737	3.79549266247379e-05\\
0.415789473684211	0.0163181238095238\\
0.426315789473684	0.200041123809524\\
0.436842105263158	0.330251428571429\\
0.447368421052632	0.361108466666667\\
0.457894736842105	0.383078952380952\\
0.468421052631579	0.40307979047619\\
0.478947368421053	0.421011495238095\\
0.489473684210526	0.437882380952381\\
0.5	0.454067390476191\\
};
\addlegendentry{$t = 1$}

\addplot [color=red, dotted, line width=2.0pt, mark=square, mark options={solid, red}]
  table[row sep=crcr]{%
0.415789473684211	0\\
0.426315789473684	2.01778396208855e-06\\
0.436842105263158	0.000838669428349669\\
0.447368421052632	0.0242655085197018\\
0.457894736842105	0.154464732618287\\
0.468421052631579	0.321044614331356\\
0.478947368421053	0.390918530351438\\
0.489473684210526	0.420408165982048\\
0.5	0.439293910695269\\
};
\addlegendentry{$t = 2$}

\addplot [color=blue, dashed, line width=2.0pt, mark=triangle, mark options={solid, blue}]
  table[row sep=crcr]{%
0.426315789473684	0\\
0.441	3.09201955252969e-07\\
0.447368421052632	1.953125e-05\\
0.457894736842105	0.00150381443643162\\
0.468421052631579	0.0398292410714286\\
0.478947368421053	0.202050938644689\\
0.489473684210526	0.337470457493895\\
0.5	0.39611522245116\\
};
\addlegendentry{$t = 3$}

\end{axis}
\end{tikzpicture}%
	\caption{\label{Fig:GlobBER48_BEC} BEC global-decoding simulations of three $ (4,8,t) $ SC-LDPCL codes: $ t=1 $ (solid-green-circles), $ t=2 $ (dotted-red-squares), and $t=3$ (dashed-blue-triangles). All codes are of total length $ n=15000 $ with $ M=3 $ sub-blocks.}
	\end{minipage}
\hspace*{6mm}
	\begin{minipage}{0.45\textwidth}
\begin{tikzpicture}

\begin{axis}[%
width=2.5in,
height=2in,
at={(0in,0in)},
scale only axis,
xmin=0.01,
xmax=0.4,
xlabel style={font=\color{black}},
xlabel={$ \epsilon $},
ymode=log,
ymin=1e-07,
ymax=1,
yminorticks=true,
ylabel style={font=\color{black}},
ylabel={Bit Erasure Rate},
axis background/.style={fill=white},
axis x line*=bottom,
axis y line*=left,
xmajorgrids,
ymajorgrids,
yminorgrids,
legend style={at={(1,0)}, anchor=south east, legend cell align=left, align=left, draw=black}
]
\addplot [color=green, line width=2.0pt, mark=*, mark options={solid, green}]
  table[row sep=crcr]{%
0.268421052631579	3.82608695652174e-07\\
0.289473684210526	1.508038585209e-05\\
0.310526315789474	0.0291636\\
0.331578947368421	0.205560542857143\\
0.352631578947369	0.266417085714286\\
0.373684210526316	0.309432685714286\\
0.394736842105263	0.345608342857143\\
0.41578947368421	0.377716685714286\\
0.436842105263158	0.4074986\\
0.457894736842105	0.435315342857143\\
0.478947368421053	0.461629914285714\\
0.5	0.486755028571429\\
};
\addlegendentry{$t=1$}

\addplot [color=red, dotted, line width=2.0pt, mark=square, mark options={solid, red}]
  table[row sep=crcr]{%
0.01 0.0000011016 \\
0.05 0.000025375 \\
0.1	0.000399532177088087\\
0.121052631578947	0.00096240301232314\\
0.142105263157895	0.0033341510725696\\
0.163157894736842	0.0140341738931995\\
0.184210526315789	0.0404580100410771\\
0.205263157894737	0.07436438840712\\
0.226315789473684	0.111111706983113\\
0.247368421052632	0.148236022364217\\
0.268421052631579	0.184294043815609\\
0.289473684210526	0.218486478776814\\
0.310526315789474	0.250731572341397\\
0.331578947368421	0.281769055225924\\
0.352631578947369	0.3117007644911\\
0.373684210526316	0.340299178457325\\
0.394736842105263	0.367844933820174\\
0.41578947368421	0.394102863989046\\
0.436842105263158	0.419615301232314\\
0.457894736842105	0.444260668644455\\
0.478947368421053	0.468260497489731\\
0.5	0.49183107028754\\
};
\addlegendentry{$ t=2 $}

\addplot [color=blue, dashed, line width=2.0pt, mark=triangle, mark options={solid, blue}]
  table[row sep=crcr]{%
0.01 0.00067795 \\
0.05 0.0151 \\
0.1	0.0521474072802198\\
0.121052631578947	0.0721390796703297\\
0.142105263157895	0.0933378262362638\\
0.163157894736842	0.116052884615385\\
0.184210526315789	0.139843406593407\\
0.205263157894737	0.163757125686813\\
0.226315789473684	0.188985233516483\\
0.247368421052632	0.21363126717033\\
0.268421052631579	0.238624914148352\\
0.289473684210526	0.262711195054945\\
0.310526315789474	0.287526871565934\\
0.331578947368421	0.311736778846154\\
0.352631578947369	0.335697287087912\\
0.373684210526316	0.359574433379121\\
0.394736842105263	0.382889165521978\\
0.41578947368421	0.406218835851648\\
0.436842105263158	0.429215659340659\\
0.457894736842105	0.451487379807692\\
0.478947368421053	0.474051682692308\\
0.5	0.496067736950549\\
};
\addlegendentry{$ t=3 $}

\end{axis}
\end{tikzpicture}%
		\caption{\label{Fig:LocBER48_BEC} BEC local-decoding simulations of three $ (4,8,t) $ SC-LDPCL codes: $ t=1 $ (solid-green-circles), $ t=2 $ (dotted-red-squares), and $t=3$ (dashed-blue-triangles)\er{, that are $ (3,8) $-regular, $ (2,8) $-regular and $ (1,8) $-regular, respectively.} All sub-blocks are of length $ n=5000 $}
	\end{minipage}
\end{figure}

\begin{example}\label{Ex: LocalVsGlobal Sim AWGN}
	Figures~\ref{Fig:GlobBER48_AWGN} and ~\ref{Fig:LocBER48_AWGN} show AWGN performance -- global and local, respectively -- \er{of codes generated from the $ (4,8,t) $ protographs with $ t=1,2,3 $. The code has $ M=9 $ sub-blocks, and the lifting parameter is $ L=208 $ (SB and full-block lengths are $ 1,664 $ and $14,976  $, respectively); local decoding is done on SB $ 2 $ which is a $ (4-t,8) $-regular code for $ t=1,2,3 $.} As in the BEC plots, the AWGN plots exemplify the global-vs.-local trade-off introduced by the $ t $ parameter in Construction~\ref{Construct: SC-LDPCL}. Moreover, due to rate loss, the strongly coupled code ($ t=3 $) has worse local performance than the uncoded scheme.
	\er{
		Note that we used Construction~\ref{Construct: SC-LDPCL} without further optimization for the AWGN channel. While there is room for optimizing the protograph for any specific channel, we simulate AWGN without further optimization to show the general behavior of different parameters, in particular, that the local performance is extremely bad if no design measures are taken.
	}
\end{example}

\begin{figure}
	\begin{minipage}{0.45\textwidth}
		\begin{tikzpicture}

\begin{axis}[%
width=2.5in,
height=2.5in,
at={(0in,0in)},
scale only axis,
xmin=0.9,
xmax=2.185,
xlabel style={font=\color{black}},
xlabel={$E_b/N_0 \text{ [dB]}$},
ymode=log,
ymin=1e-08,
ymax=1,
yminorticks=true,
ylabel style={font=\color{black}},
ylabel={Bit Error Rate},
axis background/.style={fill=white},
axis x line*=bottom,
axis y line*=left,
xmajorgrids,
ymajorgrids,
yminorgrids,
legend style={at={(0,0)}, anchor=south west, legend cell align=left, align=left, draw=black}
]
\addplot [color=blue, dashed, line width=2pt, mark=triangle, mark options={solid, blue}]
  table[row sep=crcr]{%
0.889473684210543	0.0635890758547008\\
1.06842105263161	0.0164115918803419\\
1.24736842105261	0.00186298076923077\\
1.42631578947368	0.000251891121031746\\
1.60526315789474	2.67629275645308e-06\\
1.7842105263158	2.4484e-07\\
};
\addlegendentry{$ t=3 $}

\addplot [color=red, dotted, line width=2pt, mark=square, mark options={solid, red}]
  table[row sep=crcr]{%
1.07894736842104	0.0574311102236422\\
1.23684210526318	0.0130953807241747\\
1.39473684210526	0.000895899893503727\\
1.55263157894734	7.91785618988923e-05\\
1.71052631578948	4.46618743343983e-05\\
1.86842105263156	6.12241229784806e-06\\
2.0263157894737	2.42103077457281e-06\\
2.18421052631578	7.8842425849503e-07\\
};
\addlegendentry{$ t=2 $}

\addplot [color=green, line width=2pt, mark=*, mark options={solid, green}]
  table[row sep=crcr]{%
1.2684210526316	0.0862168888888889\\
1.40526315789475	0.049088\\
1.54210526315791	0.00634555555555556\\
1.67894736842106	0.000528249566724437\\
1.81578947368422	4.69572342504176e-06\\
1.95263157894738	8.0041164027214e-08\\
};
\addlegendentry{$ t=1 $}

\end{axis}

\end{tikzpicture}%
		\caption{\label{Fig:GlobBER48_AWGN}AWGN global-decoding simulations of three $ (4,8,t) $ SC-LDPCL codes ($ t=1,2,3$) with lifting factor $ L=208 $, full-block length $14976$, and $M=9 $ sub-blocks.
		}
	\end{minipage}
	\hspace*{6mm}
	\begin{minipage}{0.45\textwidth}
		\begin{tikzpicture}

\begin{axis}[%
width=2.5in,
height=2.5in,
at={(0in,0in)},
scale only axis,
xmin=0,
xmax=8,
xlabel style={font=\color{black}},
xlabel={$E_b/N_0 \text{ [dB]}$},
ymode=log,
ymin=1e-7,
ymax=1,
yminorticks=true,
ylabel style={font=\color{black}},
ylabel={Bit Error Rate},
axis background/.style={fill=white},
axis x line*=bottom,
axis y line*=left,
xmajorgrids,
ymajorgrids,
yminorgrids,
legend style={at={(0,0)}, anchor=south west, legend cell align=left, align=left, draw=black}
]
\addplot [color=black,dashed, line width = 2pt]
  table[row sep=crcr]{%
1	0.0562819519765415\\
1.00481218889689	0.0561824053928427\\
1.50001632527693	0.0464009690444679\\
2.00354215347085	0.0374467845308662\\
2.50367492227593	0.0296015439919003\\
3.00402473228428	0.0228282293296829\\
3.50282174420482	0.0171433258104942\\
4.00477758438916	0.0124609669715244\\
4.50145847321397	0.00878431909328362\\
5.00535422652996	0.00592773140269299\\
5.50789702031801	0.00383450258007257\\
6.0069954272841	0.00237141554815178\\
6.50020565912246	0.00139948065236355\\
7.0038890701116	0.000768916868082257\\
7.50800323299717	0.000394364241671991\\
8	0.000190907774075993\\
};
\addlegendentry{Uncoded}

\addplot [color=blue, dashed, line width=2pt, mark=triangle, mark options={solid, blue}]
  table[row sep=crcr]{%
1	0.138082932692308\\
1.73684210526318	0.1175\\
2.4736842105263	0.0953786057692307\\
3.21052631578948	0.0746153846153846\\
3.94736842105266	0.0554807692307692\\
4.68421052631578	0.0366526442307692\\
5.42105263157896	0.0238882211538462\\
6.15789473684208	0.0136117788461538\\
6.89473684210526	0.00762620192307692\\
7.63157894736844	0.00305889423076923\\
};
\addlegendentry{$ t=3 $}

\addplot [color=red, dotted, line width=2pt, mark=square, mark options={solid, red}]
  table[row sep=crcr]{%
4.5	0.0189137380191693\\
4.86842105263156	0.0108186900958466\\
5.23684210526318	0.00452476038338658\\
5.60526315789474	0.00195686900958466\\
5.9736842105263	0.000403354632587859\\
6.34210526315792	9.98402555910543e-05\\
6.71052631578948	5.36055063576131e-05\\
7.07894736842104	4.27124088090072e-05\\
7.44736842105266	1.08817717265454e-05\\
7.81578947368422	5.14514200591936e-06\\
};
\addlegendentry{$ t=2 $}

\addplot [color=green, line width=2pt, mark=*, mark options={solid, green}]
  table[row sep=crcr]{%
2.5	0.065783\\
2.67894736842106	0.054338\\
2.85789473684213	0.032899\\
3.03684210526313	0.010799\\
3.2157894736842	0.000814\\
3.39473684210526	0.000246601941747573\\
3.57368421052632	5.53572729259232e-07\\
};
\addlegendentry{$ t=1 $}

\end{axis}

\end{tikzpicture}%
		\caption{\label{Fig:LocBER48_AWGN}AWGN local-decoding simulations of the codes from Figure~\ref{Fig:GlobBER48_AWGN}; sub-blocks are of length $ 1664 $. \er{$ t=1$, $2$, and $ 3 $  correspond to $ (3,8) $, $ (2,8) $, and $ (1,8) $-regular codes, respectively.}}
	\end{minipage}
\end{figure}

\section{Semi-Global Decoding} \label{Sec:SG}
In this section, we suggest a decoding strategy called \emph{semi-global} (SG) decoding, in which the decoder decodes a target SB $m\in\{1,2,\ldots,M\}$ with the help of additional $d$ neighbor SBs. $d$ is a parameter that bounds the number of additional SBs read for decoding one SB; hence, the smaller $d$ is, the faster access the code offers for single SBs. As exemplified later, the SG mode has a substantial complexity advantage over the global mode with a very small cost in threshold.

Consider a SC-LDPCL protograph with $ M>1 $ SBs; assume that the user is interested in SB $ m\in\{1,\ldots,M\}$. We call SB $ m $ the \emph{target}. In SG decoding\er{,} the decoder  uses $ d $ \emph{helper} SBs to decode the target in two phases: the \emph{helper phase}, and the \emph{target phase}. In the former, helper SBs are decoded locally, incorporating information from other previously decoded helper SBs. In the latter, the target SB is decoded while incorporating information from its neighboring helper SBs. 

\begin{example}\label{Ex:SG decoding}
	Figure~\ref{Fig:SG decoding} exemplifies SG decoding with $ d=4 $ helper SBs. The helper phase consists of decoding helpers $ m - 2 $ and $ m + 2 $ locally, and decoding helpers $ m - 1 $ and $ m + 1 $ using the information from helpers $ m - 2 $ and $ m + 2 $, respectively. In the target phase, SB $ m $ is decoded using information from both SB $ m - 1$ and $ m + 1$. 
	\begin{figure}
		\begin{center}
			\begin{tikzpicture}[>=latex]\label{Tikz:SG decoding}
			
			\tikzstyle{SB}=[rectangle,very thick,draw, rounded corners, minimum width=1.1cm,minimum height=0.5cm,fill=white]
			\pgfmathsetmacro{\x}{1.1}
			\pgfmathsetmacro{\y}{1.2}
			\node (SB0) [SB] at (0*\x,0) {\footnotesize $m$};
			\node (SB1) [SB] at (1*\x,0) {\footnotesize $m + 1$};
			\node (SB2) [SB,fill=gray!30!white] at (2*\x,0) {\footnotesize $m + 2$};
			\node (SB-1) [SB] at (-1*\x,0) {\footnotesize $m - 1$};
			\node (SB-2) [SB,fill=gray!30!white] at (-2*\x,0) {\footnotesize $m - 2$};
			\node (rdots1) [right=0.3mm of SB2] {\footnotesize $ \dots $};
			\node (ldots) [left =0.3mm of SB-2] {\footnotesize step 1 $ \dots $};

			\node (SB0) [SB] at (0*\x,-\y) {\footnotesize$m$};
			\node (SB1) [SB,fill=gray!30!white] at (1*\x,-\y) {\footnotesize$m + 1$};
			\draw [->] (SB2)--(SB1) ;
			\node (SB2) [SB] at (2*\x,-\y) {\footnotesize$m + 2$};
			\node (SB-1) [SB,fill=gray!30!white] at (-1*\x,-\y) {\footnotesize$m - 1$};
			\draw [->] (SB-2)--(SB-1) ;
			\node (SB-2) [SB] at (-2*\x,-\y) {\footnotesize$m - 2$};
			\node (rdots2) [right=0.3mm of SB2] {\footnotesize$ \dots $};
			\node (ldots) [left =0.3mm of SB-2] {\footnotesize step 2 $ \dots $};
			\draw [thick,decorate,decoration={brace,amplitude=5pt,raise=6pt}] (rdots1.north)--(rdots2.south) node [black,midway,xshift=9mm,text width=1cm,align=center] {\footnotesize helper phase};
			
			\node (SB0) [SB,fill=gray!30!white] at (0*\x,-2*\y) {\footnotesize$m$};
			\draw [->] (SB1)--(SB0) ;
			\draw [->] (SB-1)--(SB0) ;
			\node (SB1) [SB] at (1*\x,-2*\y) {\footnotesize$m + 1$};
			
			\node (SB2) [SB] at (2*\x,-2*\y) {\footnotesize$m + 2$};
			\node (SB-1) [SB] at (-1*\x,-2*\y) {\footnotesize$m - 1$};
			
			\node (SB-2) [SB] at (-2*\x,-2*\y) {\footnotesize$m - 2$};
			\node (rdots3) [right=0.3mm of SB2] {\footnotesize$ \dots $};
			\node (ldots) [left =0.3mm of SB-2] {\footnotesize step 3 $ \dots $};
			\draw [thick,decorate,decoration={brace,amplitude=3pt,raise=33pt}] (SB2.north)--(SB2.south) node [black,midway,xshift=17mm,text width=1cm,align=center] {\footnotesize target phase};
			
			\end{tikzpicture}
		\end{center}
		\caption{\label{Fig:SG decoding}Example of SG decoding with target SB $ m\in[1:M] $, and $ d=4 $; the steps are shown from top to bottom. The gray SBs are those that are decoded in a given step, and the arrows represent information passed between sub-blocks.}
	\end{figure}
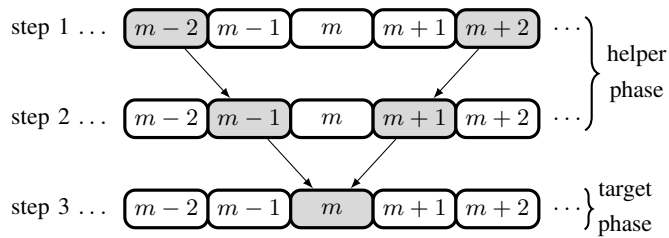
	
\end{example}

Note that semi-global decoding resembles window decoding of SC-LDPC codes (see \cite{IyenPapa12,IyenSiegel13,Lentmaier11}) but differs in: 1) \er{for a given target,} there is no overlap between two window positions, which decreases latency and complexity, and 2) decoding can start close to the target SB (i.e., not necessarily at the first or last SBs), allowing low-latency access to sub-blocks anywhere in the block. The SC-LDPCL protographs we propose for SG decoding are constructed with built-in structure to enable these distinctions.

The complexity reduction of SG decoding, compared to global decoding, comes from both specifying $d<M$, and from the fact that messages between sub-blocks are exchanged in one direction only. To see this, consider the $ (3,6,1) $ SC-LDPCL protograph in Figure~\ref{Fig:SC-LDPCL36}(a), and assume SG-decoding of target SB $ 2 $ with helpers SBs $ 1 $ and $ 3 $ (i.e., $ d=2 $). In the helper phase, we decode SB $1$ and $3$ locally -- possibly in parallel -- so the coupling checks are erased, and the decoder ignores all edges connected to them. In the target phase, the coupling checks are no longer erased, but they send information towards the target SB only. As a result, the six protograph edges connecting the coupling checks to SBs 1 and 3 do not participate in SG decoding.

Semi-global decoding is highly motivated by the locality property of sub-blocks in SC-LDPCL codes (SBs can be decoded locally), the spatial coupling of SBs (SBs can help their neighbor SBs), and by practical channels in storage devices, i.e., channels with variability \er{\cite{TaraUchi16,ShaAl20}}. Later, in Section~\ref{Sub:Channel}, we study the performance of SG decoding over such a channel. 

\subsection{SG Density-Evolution Analysis}
\label{Sub:SG DE}

We now \er{perform} an exact density-evolution analysis for target and helper SBs during SG erasure decoding. 
Due to the protograph's edge-regular structure, the general density-evolution equations in \eqref{Eq:DE Vars}--\eqref{Eq:DE init} can be reduced, yielding a simpler method to evaluate their performance. We denote the incoming (resp. outgoing) edges carrying messages to (resp. \er{from}) a helper SB by $ \underline \delta_I $ (resp. $ \underline \delta_O $). \er{Note that for termination helpers, i.e., the endpoint sub-blocks, we have $  \underline \delta_I=0  $.} The incoming messages to the target SB from the left-side and right-side helper SBs are denoted by $ \underline \delta_L $ and $ \underline \delta_R $, respectively. Note that $ \underline \delta_O $ of some helper is either $ \underline \delta_I $ of the next helper, or one of the incoming messages to the target, $ \underline\delta_L $ or $\underline \delta_R $ (see Figure~\ref{Fig:SG graph} below). With a slight abuse of notation, we also use $ \underline \delta_I\, \underline \delta_O ,\underline \delta_L,\underline \delta_R$ to mark the erasure probabilities carried on these edges. Since in the SG mode we decode SBs sequentially, then incoming erasure probabilities $ \underline \delta_L,\;\underline \delta_R,\;\underline \delta_I $ remain fixed during each decoding step.

\er{
	Decoding of sub-graphs with incoming and outgoing erasure rates was considered in a recent parallel work \cite[Section 3.D]{TruMitch19}, for the purpose of inter-connecting sub-chains of SC-LDPC codes.
	In this work, smaller units (sub-blocks) are inter-connected, and for the purpose of enabling efficient decoding of a single target sub-block. Toward that, we derive compact density-evolution (DE) equations and perform threshold analysis for decoding the target.
}

Consider a helper SB. From the structure of \er{the} $(l,r,t)$ SC-LDPCL protographs in Construction~\ref{Construct: SC-LDPCL}, there are $ t $ incoming messages  $ \underline \delta_I=\left(\delta_{I,1},\ldots,\delta_{I,t} \right) $, and $ t $ outgoing messages $ \underline \delta_O=\left( \delta_{O,1},\ldots,\delta_{O,t}\right) $. For every $ i\in\{1,\ldots,t\} $, the coupling check (CC) receiving $\delta_{I,i}$ (resp. sending $ \delta_{O,i} $) is denoted by $c_{I,i}$ (resp. $c_{O,i}$); see Figure~\ref{Fig:SG graph}. When the decoder tries to decode a helper SB, the CCs $\{c_{O,i}\}_{i=1}^t$ cannot help, and the decoder ignores the edges connected to them; the edges that participate in the iterative-decoding procedure are edges connected to local checks (LCs) and edges connected to CCs $\{c_{I,i}\}_{i=1}^t$ only. When the decoder finishes decoding the helper, it calculates $\underline\delta_{O}$ via the edges connected to $\{c_{O,i}\}_{i=1}^t$. 
In view of Construction~\ref{Construct: SC-LDPCL}, for every $ i\in\{1,\ldots,t\}$, $c_{I,i}$ (resp. $ c_{O,i} $) is connected to $r - i\left \lfloor \tfrac{r}{t+1} \right \rfloor$ (resp. $i\left\lfloor\tfrac{r}{t+1}\right\rfloor$) VNs in a helper SB. Despite the multiplicity of edges connected to $c_{I,i}$ and $c_{O,i}$, the one-directionality of the decoding algorithm allows us to consider a single (combined) constant input message $\delta_{I,i}$ and a single (combined) constant output message $\delta_{O,i}\,.$

In the target SB, only few adjustments of the above are needed. First, we have two active incoming messages $ \underline \delta_L $ and $ \underline \delta_R $, and we now mark the coupling check connected to $ \delta_{L,i},\;\delta_{R,i} $ by $ c_{L,i},\;c_{R,i} $, respectively. In view of these observations, we formally define the \emph{semi-global graph} $\mathcal{G}_{SG}$ as follows.

\begin{definition}[Semi-global graph: target]\label{Def:SG Graph}
	Let $\mathcal{G}$ be a $(l,r,t)$-SC-LDPCL protograph constructed by Construction~\ref{Construct: SC-LDPCL}. The \emph{semi-global graph} corresponding to $\mathcal{G}$, $\mathcal{G}_{SG}=\left (\mathcal{V}\cup\mathcal{C},\mathcal{E} \right )$, is a bipartite graph equipped with a VN labeling function $L_\mathcal{V}\colon \mathcal{V} \to \{1,2,\ldots,t+1\}$, an edge labeling function $L_\mathcal{E}\colon \mathcal{E} \to \left \{1,2,\ldots,(t+1)^2\right \}$, and $2t$ incoming edges $\{\delta_{R,1},\ldots,\delta_{R,t}\}$, and $\{\delta_{L,1},\ldots,\delta_{L,t}\}$ such that:
	\begin{enumerate}
		\item $\mathcal{V}=\{v_1,v_2,\ldots,v_r\}$ is a set of $r$ VNs.
		\item $\mathcal{C}$ is a set of $l+t$ CNs: $l-t$ of them are local checks (LCs), $t$ of them are right coupling checks (RCCs), and  another $t$ are left coupling checks (LCCs).
		\item We mark the $2t$ coupling checks as $c_{R,1},\ldots,c_{R,t}$, and $c_{L,1},\ldots,c_{L,t}$. For every $i\in\{1,2,\ldots,t\}$, $c_{R,i}$ (resp. $ c_{L,i} $) is connected to an incoming edge $\delta_{R,i}$ (resp. $ \delta_{L,i} $).
		\item The edges in $\mathcal{E}$ are determined by Construction~\ref{Construct: SC-LDPCL}. There is one edge between every LC and every VN. For every $i\in\{1,2,\ldots,t\}$, $c_{R,i}$ is connected to $r-i\left \lfloor\tfrac{r}{t+1}\right \rfloor$ VNs: one edge to each VN $v_j$, where $j\in \left \{ 1+i\left\lfloor\tfrac{r}{t+1}\right \rfloor,\ldots,r-1,r \right \}$, and for every $i\in\{1,2,\ldots,t\}$, $c_{L,i}$ is connected to $i\left \lfloor\tfrac{r}{t+1}\right \rfloor$ VNs: one edge to each VN $v_j$, where $j\in \left \{1,2,\ldots, i\left \lfloor\tfrac{r}{t+1}\right \rfloor\right \}$.
		\item For every $k\in\{1,2,\ldots,t+1\}$ and every VN $v\in \mathcal{V}$, $L_\mathcal{V}(v)=k$ if $v$ is connected to $k-1$ RCCs.
		\item For every edge $e=\{v,c\}\in \mathcal{E}$, if $L_\mathcal{V}(v)=k$ then,
		\er{
			\begin{align*}
			L_\mathcal{E}(e)=\left\{
			\begin{array}{ll}
			k\,, & c \text{ is a LC}\,, \\
			s_{k,t}+i\,, &c=c_{R,i}\,, \\
			v_{k,t}+i\,, &c=c_{L,i}\,, \\
			\end{array}
			\right.
			\end{align*}}
		where $ s_{k,t}\triangleq t + 1 + \tfrac{(k-1)(k-2)}{2} $, and $ v_{k,t}\triangleq 2t+1+\frac{t(t-1)}{2}+\frac{(k-1)(2t-k)}{2} $.
	\end{enumerate}
\end{definition}

\begin{remark}
	Definition~\ref{Def:SG Graph} refers to the target SB. The helper graph is similar with the only difference that $ t $ of the incoming edges (in right helpers $\{\delta_{L,1},\ldots,\delta_{L,t}\}$ and in left helpers $\{\delta_{R,1},\ldots,\delta_{R,t}\}$) turn into outgoing edges $\{\delta_{O,1},\ldots,\delta_{O,t}\}$, and the checks connected to them do not participate during the SB's decoding, except in sending the final message to the neighbor SB at the end of the SB decoding. This yields $ \frac{t^2+3t+2}{2}$ edge labels (in contrast to $ (t+1)^2 $ edge labels in Definition~\ref{Def:SG Graph}).
	 
\end{remark}

\begin{example}\label{Ex:SG graph}
	Figure~\ref{Fig:SG graph} illustrates the semi-global graph corresponding to the $(l=3,r=6,t=1)$ SC-LDPCL protograph, with the target on the right and the helper on the left. Node labels are drawn inside nodes, and edge labels are drawn on edges; there are $ t + 1 = 2 $ VN labels in both graphs,  $ (t+1)^2=4 $ edge labels in the target SB, and $ \frac{t^2+3t+2}{2}=3 $ edge labels in the helper SB. In the helper SB, the outgoing coupling check $ c_{O} $ is connected with dotted edges emphasizing the fact that it does not participate in the iterative decoding algorithm.
\end{example}

\begin{figure}
	\begin{center}	
		
		\begin{tikzpicture}[scale=0.6,>=latex]\label{Tikz:SG graph}
		\tikzstyle{cnode}=[rectangle,draw,fill=gray!70!white,minimum size=6mm]
		\tikzstyle{vnode}=[circle,draw,fill=gray!70!white,minimum size=2mm]
		\pgfmathsetmacro{\hscale}{0.7}
		\pgfmathsetmacro{\vscale}{0.8}
		\node[cnode] (LC1) at (-4*\vscale,\hscale*5) {\scriptsize LC };
		\node[cnode] (LC2) at (-4*\vscale,\hscale*9) {\scriptsize LC };
		
		\node[cnode] (CR1) at (3*\vscale,\hscale*2) {\scriptsize$ c_{I} $};
		\draw (CR1) node[right=7mm] (din1) {\scriptsize$ \delta_{I} $};
		\draw[->,thick] (din1)--(CR1);
		
		\node[cnode] (CL1) at (-4*\vscale,\hscale*12) {\scriptsize$ c_{O} $};
		\draw (CL1) node[left=7mm] (dout1) {\scriptsize$ \delta_{O} $};
		\draw[->,thick,dotted] (CL1)--(dout1);
		
		\foreach \y in {1,...,3}
		{
			\node[vnode] (VN\y) at (0*\vscale,\hscale*2*\y) {{\scriptsize $ 2 $}};
			\draw[violet] (CR1)--(VN\y) node[pos=0.35] {\scriptsize$ 3 $};
			\draw[blue] (LC1)--(VN\y) node[pos=0.2] {\scriptsize$ 2 $};
			\draw[blue] (LC2)--(VN\y) node[pos=0.9] {\scriptsize$ 2 $};
		}
		\foreach \y in {4,...,6}
		{
			\node[vnode] (VN\y) at (0*\vscale,\hscale*2*\y) {\scriptsize$ 1 $};
			\draw[thick,dotted] (CL1)--(VN\y) node[pos=0.2] {};
			\draw[red] (LC1)--(VN\y) node[pos=0.9] {\scriptsize$ 1 $};
			\draw[red] (LC2)--(VN\y) node[pos=0.2] {\scriptsize$ 1 $};
			\draw[thick,dotted] (CL1)--(VN\y) ;
		}
		
		\draw (VN1) node[below=5mm] (a) {(helper)};
		\pgfmathsetmacro{\vshift}{10}
		\node[cnode] (LC1) at (6*\vscale,\hscale*5) {\scriptsize LC };
		\node[cnode] (LC2) at (6*\vscale,\hscale*9) {\scriptsize LC };
		
		\node[cnode] (CR1) at (13*\vscale,\hscale*2) {\scriptsize$ c_{R} $};
		\draw (CR1) node[right=7mm] (din1) {\scriptsize$ \delta_{R} $};
		\draw[->,thick] (din1)--(CR1);
		
		\node[cnode] (CL1) at (6*\vscale,\hscale*12) {\scriptsize$ c_{L} $};
		\draw (CL1) node[left=7mm] (dout1) {\scriptsize$ \delta_{L} $};
		\draw[->,thick] (dout1)--(CL1);
		
		\foreach \y in {1,...,3}
		{
			\node[vnode] (VN\y) at (10*\vscale,\hscale*2*\y) {{\scriptsize $ 2 $}};
			\draw[violet] (CR1)--(VN\y) node[pos=0.35] {\scriptsize$ 3 $};
			\draw[blue] (LC1)--(VN\y) node[pos=0.2] {\scriptsize$ 2 $};
			\draw[blue] (LC2)--(VN\y) node[pos=0.9] {\scriptsize$ 2 $};
		}
		\foreach \y in {4,...,6}
		{
			\node[vnode] (VN\y) at (10*\vscale,\hscale*2*\y) {\scriptsize$ 1 $};
			\draw[dotted] (CL1)--(VN\y) node[pos=0.2] {\scriptsize$ 4 $};
			\draw[red] (LC1)--(VN\y) node[pos=0.9] {\scriptsize$ 1 $};
			\draw[red] (LC2)--(VN\y) node[pos=0.2] {\scriptsize$ 1 $};
			\draw (CL1)--(VN\y) ;
		}
		
		\draw (VN1) node[below=5mm] (target) {(target)};
		
		\end{tikzpicture}
		
	\end{center}
	\caption{\label{Fig:SG graph} The $(3,6,1)$ semi-global graph $\mathcal{G}_{SG}$ (with node and edge labels) as described in Definition~\ref{Def:SG Graph}; dotted edges do not participate during SB decoding, except in sending messages to $c_O$ at the end.}
	
\end{figure}
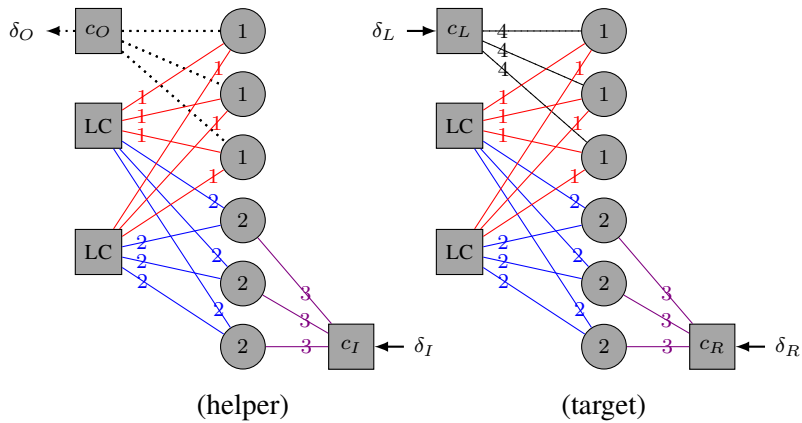

Since edges with the same labels are connected to VNs and CNs of the same degree, then in terms of density evolution, in any iteration, the erasure fraction of edges $ e_1,e_2\in \mathcal{E} $ coincide if $ L_\mathcal{E}(e_1)= L_\mathcal{E}(e_2) $. This structure is the key observation for simplifying the DE equations in \eqref{Eq:DE Vars}--\eqref{Eq:DE init} for the semi-global graph. Figure~\ref{Fig:SG nodes} illustrates the node and edge labels in a SG target graph, and shows the edge-label indexing of the DE equations that we derive in the following.

\begin{figure}
	\begin{center}
				
		\begin{tikzpicture}\label{Tikz:SG nodes}[>=latex]
		
		\tikzstyle{vnode}=[circle,draw,fill=gray!70!white,minimum size=12mm]
		\tikzstyle{cnode}=[rectangle,draw,fill=gray!70!white,minimum size=12mm]
		\pgfmathsetmacro{\hscale}{1}
		\pgfmathsetmacro{\vscale}{0.7}
		\node[vnode] (VNk) at (0,0) {$ k $};
		\draw  (VNk) node[below left=1 and 2] (type-k) {$ (k) $};
		\draw[thick] (VNk)--(type-k) node[pos=0.5,below=0.2] {$ l - t $} node[pos=0.5] {$ /$};
		\draw  (VNk) node[above right= 0.8 and 2] (type-k1) {$ (s_{k,t} + 1) $};
		\draw[thick] (VNk)--(type-k1.west) ;
		\draw  (VNk) node[above right= 0 and 2] (type-k2) {$ (s_{k,t} + 2) $};
		\draw[thick] (VNk)--(type-k2.west) ;
		\draw  (VNk) node[above right= -0.8 and 2] (type-kdots) {$ \qquad\vdots $};
		\draw  (VNk) node[above right= -1.6 and 2] (type-kk-1) {$ (s_{k,t} + k - 1) $};
		\draw[thick] (VNk)--(type-kk-1.west) ;
		\draw  (VNk) node[above left= 0.8 and 2] (type-kk) {$ (v_{k,t} + k) $};
		\draw[thick] (VNk)--(type-kk.east) ;
		\draw  (VNk) node[above left= 0.2 and 2] (type-kk1) {$ (v_{k,t} + k+1) $};
		\draw[thick] (VNk)--(type-kk1.east) ;
		\draw  (VNk) node[above left= -0.4 and 2.7] (type-kkdots) {$ \qquad \vdots $};
		\draw  (VNk) node[above left= -1 and 2] (type-kkt) {$ (v_{k,t} + t) $};
		\draw[thick] (VNk)--(type-kkt.east) ;
		
		\node[cnode] (LC) at (6*\hscale,0) {$ LC $};
		\draw  (LC) node[above right= 0.8 and 2] (type-1) {$ (1) $};
		\draw[thick] (LC)--(type-1.west) node[pos=0.5,above=0.2] {$ w $} node[pos=0.5] {$ /$};
		\draw  (LC) node[above right= 0 and 2]  {$ \;\;\vdots $};
		\draw  (LC) node[above right= -0.8 and 2] (type-t) {$ (t) $};
		\draw[thick] (LC)--(type-t.west) node[pos=0.5,above=0.2] {$ w $} node[pos=0.5] {$ /$};
		\draw  (LC) node[above right= -1.6 and 2] (type-t+1) {$ (t + 1) $};
		\draw[thick] (LC)--(type-t+1.west) node[pos=0.5,below=0.2] {$ r - tw $} node[pos=0.5] {$ /$};
		
		\node[cnode] (CRi) at (7*\hscale,5*\vscale) {$ c_{R,i} $};
		\draw  (CRi) node[right= 1] (dRi) {$ \delta_{R,i} $};
		\draw[thick,->] (dRi.west)--(CRi);
		\draw  (CRi) node[above left= 0.8 and 2] (type-i+1) {$ (s_{i+1,t}+i) $};
		\draw[thick] (CRi)--(type-i+1.east) node[pos=0.5,above=0.2] {$ w $} node[pos=0.5] {$ /$};
		\draw  (CRi) node[above left= 0 and 2]  {$ \vdots \qquad$};
		\draw  (CRi) node[above left= -0.8 and 2] (type-tt) {$ (s_{t,t}+i) $};
		\draw[thick] (CRi)--(type-tt.east) node[pos=0.5,above=0.2] {$ w $} node[pos=0.5] {$ /$};
		\draw  (CRi) node[above left= -1.6 and 2] (type-t+1t) {$ (s_{t+1,t}+i) $};
		\draw[thick] (CRi)--(type-t+1t.east) node[pos=0.5,below=0.2] {$ r - tw $} node[pos=0.5] {$ /$};
		
		\node[cnode] (CLi) at (-2*\hscale,5*\vscale) {$ c_{L,i} $};
		\draw  (CLi) node[left= 1] (dLi) {$ \delta_{L,i} $};
		\draw[thick,->] (dLi.east)--(CLi);
		\draw  (CLi) node[above right= 0.8 and 2] (type-1) {$ (v_{1,t}+i) $};
		\draw[thick] (CLi)--(type-1.west) node[pos=0.7,above=0.1] {$ w $} node[pos=0.5] {$ /$};
		\draw  (CLi) node[above right= 0 and 2] (type-2) {$ (v_{2,t}+i) $};
		\draw[thick] (CLi)--(type-2.west) node[pos=0.7,above=0.1] {$ w $} node[pos=0.5] {$ /$};
		\draw  (CLi) node[above right= -0.8 and 2.7]  {$ \vdots \qquad$};
		\draw  (CLi) node[above right= -1.6 and 2] (type-i) {$ (v_{i,t}+i) $};
		\draw[thick] (CLi)--(type-i.west) node[pos=0.7,above=0.1] {$ w $} node[pos=0.5] {$ /$};
		\end{tikzpicture}	
	\end{center}
	\caption{\label{Fig:SG nodes} The node and edge labels of an SG graph (target) $ \mathcal{G}_{SG} $ corresponding to an $ (l,r,t) $ SC-LDPCL protograph: $ k\in\{1,\ldots,t+1\}$ is a VN label and $ i\in\{1,\ldots,t\} $ is a CC index, $ s_{k,t}\triangleq t + 1 + \tfrac{(k-1)(k-2)}{2} $, $ v_{k,t}\triangleq 2t+1+\frac{t(t-1)}{2}+\frac{(k-1)(2t-k)}{2} $ and $ w\triangleq\left  \lfloor\tfrac{r}{t+1}\right \rfloor $. Node labels are drawn inside nodes, and edge labels appear in parenthesis. }
	
\end{figure}
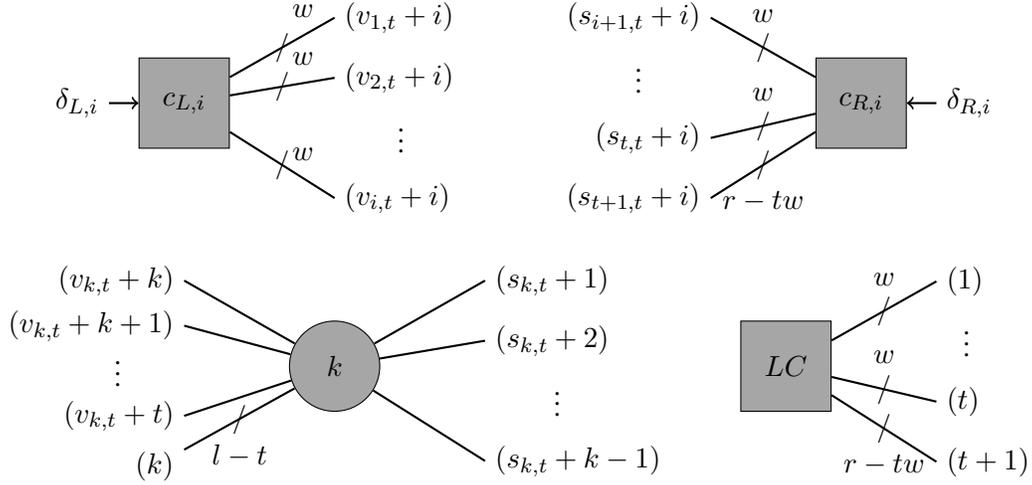
Recall the indices $ s_{k,t} $ and $ v_{k,t} $ from Definition~\ref{Def:SG Graph}, and define $ w\triangleq \left \lfloor\tfrac{r}{t+1}\right \rfloor $.
$s_{k,t}$ and $v_{k,t}$ are indices used below to capture the inter-sub-block coupling connectivity of Construction~\ref{Construct: SC-LDPCL}. Let $ x_\ell^{\left (L_\mathcal{E}(e)\right )} $ and $ u_\ell^{\left (L_\mathcal{E}(e)\right )} $ be the VN to CN and CN to VN erasure-message fractions, respectively, over an edge $ e\in\mathcal{E} $. 
We start with $ x_0^{(j)}=1$ for all $j\in\{1,2,\ldots,(t+1)^2\}$ just before the first iteration.
In view of Definition~\ref{Def:SG Graph} (see also Figure~\ref{Fig:SG nodes}), for every iteration $ \ell\geq 1 $, and every node label $ k\in \{1,\ldots,t + 1\} $, the fractions of erasure messages emanating from a VN labeled by $ k $ are given by
\begin{align}
\begin{split}
\label{Eq:SG_DE_VN}
&x_\ell^{(k)}=
\epsilon\cdot\left (u_\ell^{(k)}\right )^{l-t-1}
\prod_{j=1}^{k-1} u_\ell^{(s_{k,t}+j)}
\prod_{j=k}^{t} u_\ell^{(v_{k,t}+j)}\,,\\
&x_\ell^{(s_{k,t}+i)}=
\epsilon\cdot\left (u_\ell^{(k)}\right )^{l-t}
\prod_{j=1\atop j\neq i}^{k-1} u_\ell^{(s_{k,t}+j)}
\prod_{j=k}^{t} u_\ell^{(v_{k,t}+j)}\,,
\quad    \forall i \in \{1,2,\ldots,k - 1\}\,,\\
&x_\ell^{(v_{k,t}+i)}=
\epsilon\cdot\left (u_\ell^{(k)}\right )^{l-t}
\prod_{j=1}^{k-1} u_\ell^{(s_{k,t}+j)}
\prod_{j=k\atop j\neq i}^{t} u_\ell^{(v_{k,t}+j)}\,,
\quad    \forall i \in \{k,k + 1,\ldots,t\}\,,
\end{split}
\end{align}
where the product over an empty set is defined to be $ 1 $, and the fractions of erasure messages from any LC are
\begin{align}
\begin{split}
\label{Eq:SG_DE_LC}
&u_\ell^{(k)}=1-\left (1-x_{\ell-1}^{(k)}\right )^{w-1}\left (1-x_{\ell-1}^{(t+1)}\right )^{r-tw}\prod_{j=1\atop j\neq k}^t\left (1-x_{\ell-1}^{(j)}\right )^{w}\,,
\quad \forall k\in\{1,2,\ldots,t\}\;,\\
&u_\ell^{(t+1)}=1-\left (1-x_{\ell-1}^{(t+1)}\right )^{r-tw-1}\prod_{j=1}^t\left (1-x_{\ell-1}^{(j)}\right )^{w}\,.
\end{split}
\end{align}

Further, for every $ i\in\{1,2,\ldots,t\} $, the erasure-message fractions from $ c_{R,i} $ and from $ c_{L,i} $ are given by
\begin{align}
\begin{split}
\label{Eq:SG_DE__RCC}
&u_\ell^{(s_{k,t}+i)}=1 - 
\left (1 - \delta_{R,i}\right ) 
\left (1 - x_{\ell - 1}^{(s_{k,t}+i)}\right )^{w-1} 
\left (1 - x_{\ell - 1}^{(s_{t+1,t}+i)}\right )^{r-tw}   \\
& \hspace*{3cm}
\cdot\prod_{j=i+1\atop j\neq k}^t 
\left (1 - x_{\ell - 1}^{(s_{j,t}+i)}\right )^{w},\quad \forall k\in\{i + 1,\ldots,t\},\\
&u_\ell^{(s_{t+1,t}+i)}=1 - 
\left (1 - \delta_{R,i}\right )  \left(1 - x_{\ell - 1}^{(s_{t+1,t}+i)}\right )^{r-tw-1} 
\prod_{j=i+1}^t\left (1 - x_{\ell - 1}^{(s_{j,t}+i)}\right )^{w},
\end{split}
\end{align}
and
\begin{align}
\label{Eq:SG_DE__LCC}
&u_\ell^{(v_{k,t}+i)}=1 - 
\left (1 - \delta_{L,i}\right ) 
\left (1 - x_{\ell - 1}^{(v_{k,t}+i)}\right )^{w-1}  
\prod_{j=1\atop j\neq k}^i 
\left (1 - x_{\ell - 1}^{(v_{j,t}+i)}\right )^{w},\quad \forall k\in\{1,2,\ldots,i\}\,,
\end{align}
respectively. 

Consider now a helper SB. The difference\er{s} from the target-SB DE analysis \er{are} that 1) half of the incoming messages are disabled (i.e. are erasures), and 2) we need to calculate the outgoing erasure fraction. 
Assume a helper SB to the right of the target, corresponding to substituting $\underline \delta_L=\underline 1$ in \eqref{Eq:SG_DE__LCC}; the other option of a left helper is symmetric with  $\underline \delta_R=\underline 1$.
Substituting in \eqref{Eq:SG_DE__LCC}, 
\begin{align*}
u_\ell^{(v_{k,t}+i)}=1,\quad \forall\ell\geq 0,\,\forall k\in\{1,\ldots,t\},\,\forall i\in\{k,\ldots, t\},
\end{align*}
so the first two equations in \eqref{Eq:SG_DE_VN} change to
\begin{align}\label{Eq:SG_DE_VN_helper}
\begin{split}
&x_\ell^{(k)}=
\epsilon\cdot\left (u_\ell^{(k)}\right )^{l-t-1}
\prod_{j=1}^{k-1} u_\ell^{(s_{k,t}+j)}\,,\\
&x_\ell^{(s_{k,t}+i)}=
\epsilon\cdot\left (u_\ell^{(k)}\right )^{l-t}
\prod_{j=1\atop j\neq i}^{k-1} u_\ell^{(s_{k,t}+j)}\,,
\quad    \forall i \in \{1,2,\ldots,k - 1\}\,.
\end{split}
\end{align}
For better clarity the reader may skip to subsection~\ref{Sub:t=1} where equations \eqref{Eq:SG_DE_VN}--\eqref{Eq:SG_DE_VN_helper} are given for the special case $t=1$.

\begin{lemma}\label{Lemma: mono SG}
	For every semi-global graph and every edge label $ j\in\{1,2,\ldots,(t+1)^2 \}$, the sequences $ x^{(j)}_\ell,u^{(j)}_\ell $ defined in \eqref{Eq:SG_DE_VN}--\eqref{Eq:SG_DE_VN_helper} are monotonically non-increasing in $ \ell $ and are bounded in $ [0,1]. $
\end{lemma}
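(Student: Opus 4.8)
The plan is to prove the statement by induction on the iteration index $\ell$, showing simultaneously that (a) $0 \le x_\ell^{(j)} \le x_{\ell-1}^{(j)}$ and $0 \le u_\ell^{(j)} \le u_{\ell-1}^{(j)}$ for every edge label $j$, and (b) all quantities remain in $[0,1]$. The boundedness in $[0,1]$ is immediate from the structural form of the equations: each $x_\ell^{(j)}$ is $\epsilon$ times a product of quantities that are (inductively) in $[0,1]$, hence lies in $[0,1]$; each $u_\ell^{(j)}$ is $1$ minus a product of terms of the form $(1-\text{something in }[0,1])$, hence also lies in $[0,1]$. The incoming messages $\delta_{R,i},\delta_{L,i}$ are fixed erasure probabilities in $[0,1]$ throughout, so they cause no difficulty.

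For monotonicity I would set up the induction as follows. \textbf{Base case:} we initialize $x_0^{(j)}=1$ for all $j$ (the maximal possible value), so $x_0^{(j)} \ge x_1^{(j)}$ holds trivially since $x_1^{(j)} \le 1$. Feeding $x_0^{(j)}=1$ into \eqref{Eq:SG_DE_LC}--\eqref{Eq:SG_DE__LCC} gives some $u_1^{(j)} \le 1$; and since $u$-values are not defined before iteration $1$ (or one may take the convention $u_0^{(j)}=1$, the maximal value, consistent with Fact~\ref{Fact:DE proto}'s initialization), we get $u_1^{(j)} \le u_0^{(j)}$. \textbf{Inductive step:} assume $x_{\ell-1}^{(j)} \le x_{\ell-2}^{(j)}$ and $u_{\ell-1}^{(j)} \le u_{\ell-2}^{(j)}$ for all $j$. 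The key monotonicity properties of the update maps are: each $x_\ell^{(j)}$ in \eqref{Eq:SG_DE_VN} and \eqref{Eq:SG_DE_VN_helper} is a nondecreasing function of the $u_\ell^{(\cdot)}$ it depends on (it is $\epsilon$ times a product of them); and each $u_\ell^{(j)}$ in \eqref{Eq:SG_DE_LC}--\eqref{Eq:SG_DE__LCC} is a nondecreasing function of the $x_{\ell-1}^{(\cdot)}$ it depends on, because $1-\prod(1-x)$ is nondecreasing in each $x$ on $[0,1]$. Chaining these: from $x_{\ell-1}^{(\cdot)} \le x_{\ell-2}^{(\cdot)}$ we get $u_\ell^{(j)} \le u_{\ell-1}^{(j)}$ via the check-to-variable updates; then from $u_\ell^{(\cdot)} \le u_{\ell-1}^{(\cdot)}$ we get $x_\ell^{(j)} \le x_{\ell-1}^{(j)}$ via the variable-to-check updates. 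Combined with boundedness below by $0$, this proves the claim.

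There is one subtlety I would want to handle carefully: the DE equations \eqref{Eq:SG_DE_VN}--\eqref{Eq:SG_DE__LCC} are defined self-referentially within a single iteration (e.g., $x_\ell^{(k)}$ depends on $u_\ell^{(k)}$, which in \eqref{Eq:SG_DE_LC} depends on $x_{\ell-1}^{(\cdot)}$, not $x_\ell^{(\cdot)}$ — so actually the dependence is not circular: the $u_\ell$'s are computed first from the $x_{\ell-1}$'s, then the $x_\ell$'s from the $u_\ell$'s). Making this ordering explicit is what makes the chaining argument clean, so I would state it up front: one half-iteration updates all CN-to-VN messages from the previous VN-to-CN messages, the other half updates all VN-to-CN messages (and the final VN erasure probabilities) from the current CN-to-VN messages. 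Each half-update is a coordinatewise-monotone map on $[0,1]^{(t+1)^2}$, and the whole iteration is their composition, hence monotone; starting from the top element $\underline{1}$ of the lattice, the orbit is a nonincreasing, bounded sequence. I expect the monotonicity of the individual coordinate maps — which is the crux — to be the only real content, and it follows from the two elementary observations above (products of $[0,1]$-numbers are monotone in each factor; $1-\prod(1-x_i)$ is monotone in each $x_i$); the main "obstacle" is really just bookkeeping the many edge-label families in \eqref{Eq:SG_DE_VN}--\eqref{Eq:SG_DE__LCC} and checking that the disabled-message substitutions $\underline\delta_L=\underline 1$ (or $\underline\delta_R=\underline 1$) for helper SBs preserve the structure, which they do since setting an input to its maximal value $1$ only makes the resulting $u_\ell^{(v_{k,t}+i)}\equiv 1$, still a valid (constant, hence monotone) sequence in $[0,1]$.
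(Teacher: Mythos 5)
Your proof is correct and follows essentially the same route as the paper, which simply states that the lemma follows by induction on $\ell$ and leaves the details to the reader: your induction, based on the coordinatewise monotonicity of the VN-update (products of $[0,1]$ quantities) and CN-update ($1-\prod(1-x)$) maps, initialization at $x_0^{(j)}=1$, and the observation that the helper substitution $\underline\delta_L=\underline 1$ (or $\underline\delta_R=\underline 1$) preserves the structure, is exactly the omitted argument. No gaps to report.
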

\begin{proof}
	Follows by a mathematical induction on $ \ell $; details are left to the reader.
\end{proof}

In view of Lemma~\ref{Lemma: mono SG}, for every edge label $ j\in\{1,2,\ldots,(t+1)^2 \}$, we define the limits $ x^{(j)}=\lim\limits_{\ell \to \infty}x^{(j)}_\ell,\;u^{(j)}=\lim\limits_{\ell \to \infty}u^{(j)}_\ell $. 
When the decoder finishes decoding a \emph{helper} SB (successfully or not), it sends messages to the next SB. The $ t $ erasure fractions of these messages are encapsulated in $ \underline\delta_O=(\delta_{O,1},\ldots,\delta_{O,t}) $ (see Figure~\ref{Fig:SG graph}). 
Similarly to \eqref{Eq:SG_DE_VN_helper}, the third equation in \eqref{Eq:SG_DE_VN} becomes
\begin{align*}
x^{(v_{k,t}+i)}=\epsilon \cdot \left ( u^{(k)}\right)^{l-t}\prod_{i=1}^{k-1}u^{(s_{k,t}+i)}\,.
\end{align*}
Moreover, for every $ i \in \{1,2,\ldots,t\}\,, $ $ c_{O,i} $ is connected to VNs labeled by $ k\in \{1,2,\ldots,i\} $ with $ w$ 
edges, thus the outgoing erasure fraction from $ c_{O,i} $ to the next SB is given by
\begin{align}\label{Eq:SG dout}
\delta_{O,i} = 1-\prod_{k=1}^i\left (1-x^{(v_{k,t}+i)} \right )^w.
\end{align}

\begin{theorem}[Semi-global density evolution]\label{Th:SG-DE t>1}
Let $\mathcal{G}_{SG}$ be a semi-global graph corresponding to a SB in an $(l,r,t)$ SC-LDPCL protograph, let $\epsilon$ be the channel erasure probability in this SB, and let $\underline \delta_L,\,\underline \delta_R$ be the incoming  erasure fractions from neighbor SBs. For every edge label $j\in\left \{1,2,\ldots,(t+1)^2\right \}$, let $x^{(j)}_\ell$ be the fraction of VN-to-CN erasure messages over any edge $e\in\mathcal{E}$ labeled with $L_\mathcal{E}(e)=j$, at iteration $\ell$ of the BP decoding algorithm over a lifted random Tanner graph, as the lifting parameter tends to infinity. Then, for a target SB, $x^{(j)}_\ell$ is given by equations \eqref{Eq:SG_DE_VN}--\eqref{Eq:SG_DE__LCC}, and for a helper SB it is given by equations \eqref{Eq:SG_DE_LC}, \eqref{Eq:SG_DE__RCC}, and \eqref{Eq:SG_DE_VN_helper}. Furthermore, for a helper SB, for every $ i\in \{1,2,\ldots,t \}$ $ \delta_{O,i} $ is given by \eqref{Eq:SG dout}.
\end{theorem}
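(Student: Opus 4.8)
The plan is to obtain Theorem~\ref{Th:SG-DE t>1} as a specialization of the general protograph density evolution of Fact~\ref{Fact:DE proto} to the semi-global graph $\mathcal{G}_{SG}$ of Definition~\ref{Def:SG Graph}, combined with the observation (already flagged in the text preceding Figure~\ref{Fig:SG nodes}) that edges sharing an $L_\mathcal{E}$-label are interchangeable as far as the DE recursion is concerned. First I would establish the \emph{label-consistency} claim: in $\mathcal{G}_{SG}$ every VN with $L_\mathcal{V}(v)=k$ is incident to the same multiset of edge labels --- one edge of label $k$ to each of the $l-t$ LCs, one edge of label $s_{k,t}+j$ to $c_{R,j}$ for $j\in\{1,\ldots,k-1\}$, and one edge of label $v_{k,t}+j$ to $c_{L,j}$ for $j\in\{k,\ldots,t\}$ --- and symmetrically that each CN of a fixed type (LC, $c_{R,i}$, $c_{L,i}$) is incident to a fixed multiset of edge labels. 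This is a direct reading of items 4--6 of Definition~\ref{Def:SG Graph} together with the nesting of coupling checks forced by Construction~\ref{Construct: SC-LDPCL}: writing $w=\lfloor r/(t+1)\rfloor$, the VNs $v_1,\dots,v_r$ split into $w$ VNs of each label $1,\dots,t$ and $r-tw$ VNs of label $t+1$, with $c_{R,i}$ attached to every VN of label $\ge i+1$ and $c_{L,i}$ to every VN of label $\le i$.

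Given label-consistency I would run a joint induction on the iteration index $\ell$ showing that, as the lifting parameter tends to infinity, all $e^v_i$- (resp.\ $e^c_j$-) type message fractions of Fact~\ref{Fact:DE proto} attached to edges with the same $L_\mathcal{E}$-label coincide; denote the common values by $x^{(j)}_\ell$ and $u^{(j)}_\ell$. The inductive step is just \eqref{Eq:DE Vars}--\eqref{Eq:DE Checks}: since the neighbourhood of an edge is determined up to the relevant isomorphism by its label and the labels of the edges around its endpoints, the products in \eqref{Eq:DE Vars}--\eqref{Eq:DE Checks} depend only on labels, so the reduced recursion is well defined. Substituting the incidence counts from the previous paragraph into \eqref{Eq:DE Vars} yields exactly the three VN equations in \eqref{Eq:SG_DE_VN}: the LC-edge equation with factor $\left(u^{(k)}_\ell\right)^{l-t-1}$, and the RCC- and LCC-edge families with factor $\left(u^{(k)}_\ell\right)^{l-t}$ and the $j\ne i$ restriction coming from the excluded self-edge. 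Substituting into \eqref{Eq:DE Checks} for an LC gives \eqref{Eq:SG_DE_LC}, and for $c_{R,i}$ and $c_{L,i}$ gives \eqref{Eq:SG_DE__RCC} and \eqref{Eq:SG_DE__LCC}; here the fixed incoming rates enter as the one-time factors $(1-\delta_{R,i})$ and $(1-\delta_{L,i})$, precisely because the SG schedule holds $\underline\delta_R,\underline\delta_L$ constant throughout the decoding of the current SB. The initialization $x^{(j)}_0=1$ matches \eqref{Eq:DE init}, and existence of the limits $x^{(j)},u^{(j)}$ is Lemma~\ref{Lemma: mono SG}.

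For a helper SB I would then specialize by setting $\underline\delta_L=\underline1$ (a right helper; the left case is symmetric with $\underline\delta_R=\underline1$), reflecting that the outgoing side contributes no information during that SB's own decoding. Then \eqref{Eq:SG_DE__LCC} forces $u^{(v_{k,t}+i)}_\ell=1$ for all $\ell$ and all relevant $k,i$, collapsing the LCC-products in \eqref{Eq:SG_DE_VN} to $1$ and giving \eqref{Eq:SG_DE_VN_helper}, while \eqref{Eq:SG_DE_LC} and \eqref{Eq:SG_DE__RCC} are unchanged. Finally, once the iteration has converged, $\delta_{O,i}$ is the outgoing check-to-neighbour message of $c_{O,i}=c_{L,i}$, obtained from the standard check-node rule over its $w$ edges to each VN of label $k\le i$, i.e.\ $\delta_{O,i}=1-\prod_{k=1}^{i}\bigl(1-x^{(v_{k,t}+i)}\bigr)^{w}$ with the converged $x^{(v_{k,t}+i)}$ taken from the third line of \eqref{Eq:SG_DE_VN} under $\underline\delta_L=\underline1$; this is \eqref{Eq:SG dout}. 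I expect the only genuine work to be the index bookkeeping in the second paragraph --- checking that the incidence counts of Construction~\ref{Construct: SC-LDPCL} produce exactly the exponents $w-1$, $w$, $r-tw$, $r-tw-1$ and the product ranges $\{i+1,\ldots,t\}$, $\{1,\ldots,i\}$, $\{k,\ldots,t\}$, $\{1,\ldots,k-1\}$ appearing in each equation --- since everything else is a routine transcription of Fact~\ref{Fact:DE proto}.
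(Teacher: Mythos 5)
Your proposal is correct and follows essentially the same route as the paper: the paper does not give a separate proof of Theorem~\ref{Th:SG-DE t>1} but obtains it from the derivation preceding it, whose key observation is exactly your label-consistency claim (edges with equal $L_\mathcal{E}$-labels carry identical message fractions in the large-lifting limit), after which \eqref{Eq:SG_DE_VN}--\eqref{Eq:SG_DE__LCC} are a transcription of Fact~\ref{Fact:DE proto} using the incidence counts of Construction~\ref{Construct: SC-LDPCL}, and the helper case and \eqref{Eq:SG dout} follow by substituting $\underline\delta_L=\underline1$ (resp. $\underline\delta_R=\underline1$) and applying the check-node rule at $c_{O,i}$, just as you describe.
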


\subsection{The $ t=1 $ Case}
\label{Sub:t=1}
In view of Definition~\ref{Def:SG Graph}, if $ t=1 $, then there are $(t+1)^2= 4 $ different edge types in the target semi-global graph (see Figure~\ref{Fig:SG graph}). However, as shown below, it is sufficient to track only $ 2 $ edge types. This simplification renders a two-dimensional graphical representation of the density-evolution equations. 

Substituting $ t=1 $ into \eqref{Eq:SG_DE_VN}--\eqref{Eq:SG_DE__LCC} yields $ 4 $ density-evolution equations, namely (note the scalar $\delta_L,\delta_R$)
\begin{align}
	\label{Eq:SG DE t=1}
	\begin{split}
	&x_\ell^{(1)}=\epsilon\left[ 1-\left( 1-x_{\ell-1}^{(1)}\right)^{w-1}\left(1-x_{\ell-1}^{(2)}\right)^{r-w}\right]^{l-2}\left[1-\left(1-x_{\ell-1}^{(4)}\right)^{w-1}\left(1-\delta_L\right) \right],\\
	&x_\ell^{(2)}=\epsilon\left[ 1-\left( 1-x_{\ell-1}^{(1)}\right)^{w}\left(1-x_{\ell-1}^{(2)}\right)^{r-w-1}\right]^{l-2} \left[1-\left(1-x_{\ell-1}^{(3)}\right)^{r-w-1}\left(1-\delta_R\right) \right],\\
	&x_\ell^{(3)}=\epsilon\left[ 1-\left( 1-x_{\ell-1}^{(1)}\right)^{w}\left(1-x_{\ell-1}^{(2)}\right)^{r-w-1}\right]^{l-1},\\
	&x_\ell^{(4)}=\epsilon\left[ 1-\left( 1-x_{\ell-1}^{(1)}\right)^{w-1}\left(1-x_{\ell-1}^{(2)}\right)^{r-w}\right]^{l-1},
	\end{split}
\end{align}
where $ w=\left\lfloor\tfrac{r}{2}\right\rfloor$. Since $ x^{(3)} $ and $ x^{(4)} $ both depend solely on $ x^{(1)} $ and $ x^{(2)} $, then we can substitute the last two equations into the first two equations, and get that for a fixed erasure probability $ \epsilon $ and fixed incoming erasure messages $ \delta_L ,\,\delta_R$, the quantities $ x^{(1)}_\ell $ and $ x^{(2)}_\ell $ are functions of $ x^{(1)}_{\ell-1},\,x^{(1)}_{\ell-2},\,x^{(2)}_{\ell-1}\,,$ and $ x^{(2)}_{\ell-2} $, written as

\begin{align}\label{Eq:SG t=1 simple DE}
\begin{split}
&x^{(1)}_\ell = \tilde{f}\left(\epsilon,\delta_L,x^{(1)}_{\ell-1},x^{(1)}_{\ell-2},x^{(2)}_{\ell-1},x^{(2)}_{\ell-2}\right), \ell\geq 2\\
&x^{(2)}_\ell = \tilde{g}\left(\epsilon,\delta_R,x^{(1)}_{\ell-1},x^{(1)}_{\ell-2},x^{(2)}_{\ell-1},x^{(2)}_{\ell-2}\right), \ell\geq 2\\
&x^{(1)}_1=x^{(2)}_1=\epsilon,\\
&x^{(1)}_0=x^{(2)}_0=1.
\end{split}
\end{align}
The functions $ \tilde{f}$ and $\tilde{g} $ are both continuous and monotonically non-decreasing, so by taking the limit $ \ell\to\infty $ in \eqref{Eq:SG t=1 simple DE} and marking $ x^{(k)}=\lim\limits_{\ell\to\infty}x^{(k)}_\ell,\;k\in \{1,2\} $,  we get a two-dimensional fixed-point characterization:
\begin{align}\label{Eq:SG fixed point}
\begin{split}
x^{(1)}= \tilde{f}\left(\epsilon,\delta_L,x^{(1)},x^{(1)},x^{(2)},x^{(2)}\right)\triangleq f\left(\epsilon,\delta_L,x^{(1)},x^{(2)}\right),\\
x^{(2)}= \tilde{g}\left(\epsilon,\delta_R,x^{(1)},x^{(1)},x^{(2)},x^{(2)}\right)\triangleq g\left(\epsilon,\delta_R,x^{(1)},x^{(2)}\right).
\end{split}
\end{align}
In \cite{RamCassuto18a}\er{,} a 2-D fixed-point characterization for ordinary  (i.e. non spatially coupled) LDPC codes was derived. In contrast to the derivations above that consider SG decoding, in \cite{RamCassuto18a} the analyzed decoding mode is global decoding, and consequently neither $ \underline\delta_R $ nor $ \underline \delta_L $ appear in the analysis.
\begin{remark}\label{Rem:DE t=1}
Equation \eqref{Eq:SG fixed point} refers to the target SB. If one considers a helper SB, one should set $ \delta_L $ (or $ \delta_R $) to $ 1 $. In this case, from \eqref{Eq:SG DE t=1} we get $ \left (x^{(4)}/\epsilon\right )^{l-2} = \left (x^{(1)}/\epsilon\right )^{l-1} $, which \er{together} with \eqref{Eq:SG dout} implies that the outgoing erasure fraction is given by 
\begin{align*}
\delta_O =1-\left(1-x^{(4)}\right)^{\left\lfloor \tfrac{r}{2}\right\rfloor} = 1-\left(1-\epsilon\left(\tfrac{x^{(1)}}{\epsilon} \right)^{\tfrac{l-1}{l-2}} \right)^{\left\lfloor \tfrac{r}{2}\right\rfloor}\;.
\end{align*}

\end{remark}

\begin{example}
	Figure~\ref{Fig:SG_DE} exemplifies equation \eqref{Eq:SG t=1 simple DE} and \eqref{Eq:SG fixed point} (solid black and dashed colored, respectively) for the $ (3,6,1) $ SC-LDPCL protograph (see Figure~\ref{Fig:SG graph}(target)). In both plots\er{,} $ \epsilon=0.5 $ and $ \delta_L=0.3 $, while in the left one $ \delta_R=0.3 $ and in the right $ \delta_R=0.5 $. As seen in the plots, if the $ f $-curve (dotted blue) intersects the $ g $-curve (dashed red), then the iterative process halts and fails. If the two curves do not intersect then the erasure fractions $ x^{(1)}_\ell $ and $ x^{(2)}_\ell $ approach zero as iterations proceed, and decoding succeeds.
\end{example}
\begin{figure}
	\begin{center}
		\input{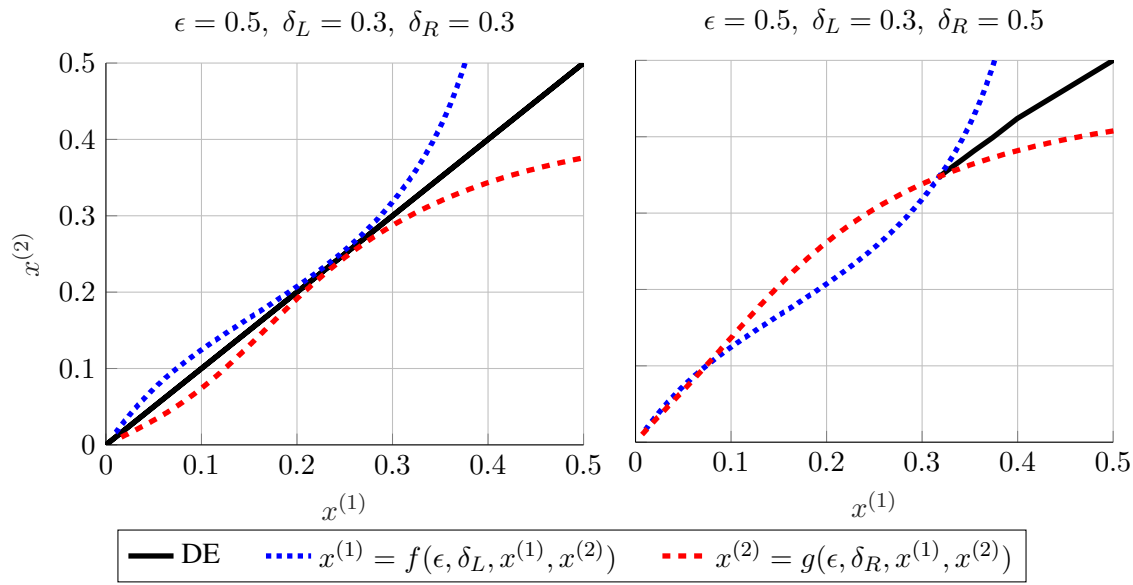}
	\end{center}
	\caption{\label{Fig:SG_DE} A graphical representation of the SG density-evolution equations in \eqref{Eq:SG DE t=1}. We used the $ (l=3,r=6,t=1) $ SC-LDPCL protograph and an erasure probability of $ \epsilon=0.5 $. The incoming erasure messages for the left figure are $\delta_L=0.3,\;\delta_R=0.3 $, where the density-evolution curve converges to the origin, indicating a decoding success. On the right-hand figure we have $ \delta_L=0.3,\;\delta_R=0.5 $, which leads to a halt in the BP process at $ (x^{(1)},x^{(2)})=(0.318,0.348)$.}
\end{figure}

\section{Semi-Global Performance Analysis}\label{Sub:SG analysis}
In this section\er{,} we analyze the SG decoding performance of SC-LDPCL codes over the BEC. 
First, we wish to compare the global and SG modes in terms of thresholds and complexity. Evidently, the threshold and complexity induced by the SG mode depend on the number of helpers $ d $; the larger $ d $ is, the higher the threshold and complexity are. 
\er{Note that unlike global decoding, each SG decoding instance aims to recover one target sub-block requested to be read from the entire codeword. This difference enables the complexity reduction of SG decoding compared to global decoding.}

\begin{remark} 
For simplicity, we assume for the rest of this section that $ t+1 $ divides $ r $, i.e., 
\begin{align}\label{Eq:w assumption}
w\triangleq\left \lfloor\tfrac{r}{t+1}\right\rfloor=\tfrac{r}{t+1}\,.
\end{align}
This assumption means that the SG graph $ \mathcal{G}_{SG} $ from Definition~\ref{Def:SG Graph} is symmetric in the sense that the degrees and connectivity of coupling checks $ \{c_{R,1},\ldots,c_{R,t}\} $ are identical to those of $ \{c_{L,1},\ldots,c_{L,t}\} $ (see Section~\ref{Sub:Loc Dec}).
\end{remark}

\subsection{SG Decoding Complexity}\label{Sub:complex}

We assume a fixed lifting parameter for the code, and a fixed number of BP iterations in any step of SG decoding performed over a subgraph of $ \mathcal{G} $; hence, to evaluate complexity we count the number of edges participating in the entire decoding process. We assume an $ (l,r,t) $ SC-LDPCL protograph with $ M $ SBs, each consisting of $ r $ VNs (i.e., a total count of $ Mr $ VNs in the protograph). In what follows\er{,} we mark by $ \chi_G$ and $\chi_{SG}$ the complexity of global and SG decoding, respectively.

In view of Construction~\ref{Construct: SC-LDPCL}, there are $ M l r  $ edges in the SC-LDPCL protograph, so the global-decoding complexity is given by $\chi_G= M l  r $. In view of Definition~\ref{Def:SG Graph} (see also Figure~\ref{Fig:SG graph}), in every helper SB the number of edges participating in decoding is  $ (l-t)r+\sum_{i=1}^t(r-iw) $, and in the target SB that number is $ (l-t)r+2\sum_{i=1}^t(r-iw) $. Since $w=\tfrac{r}{t+1}  $, for SG decoding with $d$ helper SBs we get
\begin{align*}
\chi_{SG}
&=d\left (lr-\frac{wt(t+1)}{2}\right )+(l+t)r-wt(t+1)\\
&=d\left (lr-\frac{rt}{2}\right )+lr,
\end{align*}
and the complexity reduction is given by (see next sub-section for a numerical example)
\begin{align}\label{Eq:complex gain}
1-\frac{\chi_{SG}}{\chi_{G}}=1-\frac{d(l-\tfrac{t}{2})+l}{Ml}.
\end{align}

\subsection{SG Decoding Thresholds}
\label{Sub: SG thresholds} 
Motivated by \eqref{Eq:complex gain}, we now study the thresholds of SG decoding. We define $\epsilon^*_{SG}(m,d)$ as the maximum over $ \epsilon\in[0,1] $ such that SG decoding successfully decodes a target SB $ m\in\{1,\ldots,M\} $ using $ d $ neighbor helper SBs (see Figure~\ref{Fig:SG decoding} for  $ d=4 $). Using the SG density-evolution equations in \eqref{Eq:SG_DE_VN}--\eqref{Eq:SG dout}, we can easily calculate this threshold.
Figure~\ref{Fig:SGTh} illustrates SG thresholds $\epsilon^*_{SG}(m=6,d)$ of the $ (5,12,t \in \{1,2,3\}) $ SC-LDPCL protographs with $ M=11 $ SBs \er{and a classical SC-LDPC protograph with the same length and degrees but without locality (i.e., $ t=l-1=4 $)} as a function of $ d $. For every $ t\in\{1,2,3\} $, the curve starts ($ d=0 $) from the local threshold of that code $ \epsilon^*_6$ (see Definition~\ref{Def:LC and CC}), steeply increases due to adjacent helpers ($d=2$), and ends ($ d=10 $) close to the global threshold of that code $ \epsilon^*_G $ due to terminating SBs ($d=M-1=10$). \er{For the no-locality code, the local threshold ($ d=0 $) is zero, as Theorem~\ref{Th:no local} predicts, and the SG thresholds for $ d>0 $ are significantly lower than those of the locality-enabled constructions ($ t\in\{1,2,3\} $). Moreover, in contrast to the locality-enabled constructions, starting from termination in the no-locality code does not increase the threshold.}

In view of Figure~\ref{Fig:SGTh}, it appears that the main advantage \er{in SG decoding of SC-LDPCL codes} is due to immediate adjacent helpers where $ d=2 $, and due to the end-point helpers. While this is true for the fixed-erasure-probability channel assumed here, in the next sub-section we show that under channels with variability, it is beneficial to use intermediate values of $ d $ as well.

Note that the global threshold of the $ (5,12,t=3) $ protograph with $ M=11 $ is $ \epsilon^*_G = 0.375 $ while $\epsilon^*_{SG}(5,10)=0.361$. Thus the threshold reduction is only $ 3.7\% $. On the other hand, \eqref{Eq:complex gain} implies that for $ t=3 $ the complexity reduction for $ d=10 $ equals $ 1- \tfrac1{55}\left(10\left(5-\tfrac{3}{2}\right)+5\right) =27\%$; hence we see a substantial decrease in complexity with only a small loss in threshold. 

\begin{figure}
	\begin{center}
		\definecolor{mycolor1}{rgb}{0.00000,0.44700,0.74100}%
\definecolor{mycolor2}{rgb}{0.85000,0.32500,0.09800}%
\definecolor{mycolor3}{rgb}{0.92900,0.69400,0.12500}%
\begin{tikzpicture}

\begin{axis}[%
width=3.5in,
height=1.5in,
at={(0,0)},
scale only axis,
xmin=0,
xmax=10,
xlabel style={font=\color{white!15!black}},
xlabel={$d$},
ymin=0,
ymax=0.374908447265625,
ylabel style={font=\color{white!15!black}},
ylabel={$\epsilon^*_{SG}(6,d)$},
axis background/.style={fill=white},
title={$(5,12)$ SC-LDPCL SG Thresholds},
xmajorgrids,
ymajorgrids,
legend columns=2, 
legend style={
	at={(0.99,0.01)}, 
	anchor=south east, 
	legend cell align=left, 
	align=left, 
	draw=white!15!black,
	/tikz/column 2/.style={
		column sep=5pt,
	},
}
]
\addplot [color=blue, line width=2pt,mark=triangle]
  table[row sep=crcr]{%
0	0.257081004460348\\
2	0.279988628191835\\
4	0.282953788359113\\
6	0.283531031105678\\
8	0.283662608496438\\
10	0.306574901103081\\
};
\addlegendentry{$t=1$}

\addplot [color=red, line width=2pt,mark=o,dashed,mark options={solid}]
  table[row sep=crcr]{%
0	0.210475416273336\\
2	0.26789409572356\\
4	0.277937704222\\
6	0.28093527694253\\
8	0.282060733764965\\
10	0.330502105460619\\
};
\addlegendentry{$t=2$}

\addplot [color=green, line width=2pt,mark=square,dotted,mark options={solid}]
  table[row sep=crcr]{%
0	0.0909545545459242\\
2	0.245072798036443\\
4	0.27116000039589\\
6	0.28031085826674\\
8	0.284697269127063\\
10	0.360336453513012\\
};
\addlegendentry{$t=3$}

\addplot [color=black, line width=1pt,double,mark options={solid}]
table[row sep=crcr]{%
	0	3.00407409667969e-05\\
	2	0.0644469261169434\\
	4	0.110935688018799\\
	6	0.120707035064697\\
	8	0.12260103225708\\
	10	0.122977733612061\\
};
\addlegendentry{No locality}

\end{axis}
\end{tikzpicture}%
		\caption{\label{Fig:SGTh}  $\epsilon^*_{SG}(m = 6,d)$ for $ (5,12,t\in\{1,2,3\}) $ protographs with $ M = 11 $ SBs.}
	\end{center}
\end{figure}
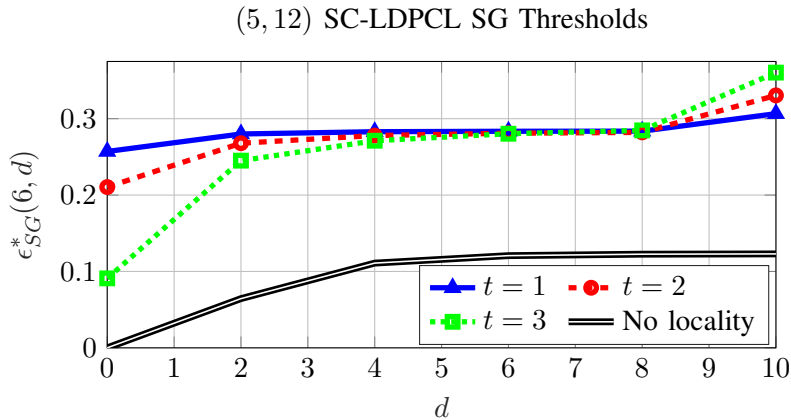  

\begin{example}\label{Ex: SG sim}
	Figure~\ref{Fig:SG512_BEC} shows simulation results for semi-global decoding over the BEC. The plots compare $ (5,12,t) $ SC-LDPCL codes ($ t\in\{1,2,3\} $) with five decoding modes: local, semi-global with $ d=2,8,10 $, and global decoding. As seen in the plots, SG decoding with $ d=10 $ helpers performs very close to global decoding. Further, for low values of $ d $ the $ t=1 $ code is superior while for high values of $ d $, the  $ t=3 $ is superior, as predicted by our threshold calculations. Finally, the main advantage in SG decoding comes from the adjacent helpers (i.e., most significant improvement when switching from local decoding to SG with $ d=2 $), and from termination sub-blocks ($ d=10 $ helpers). 
\end{example}

\begin{figure}
	\input{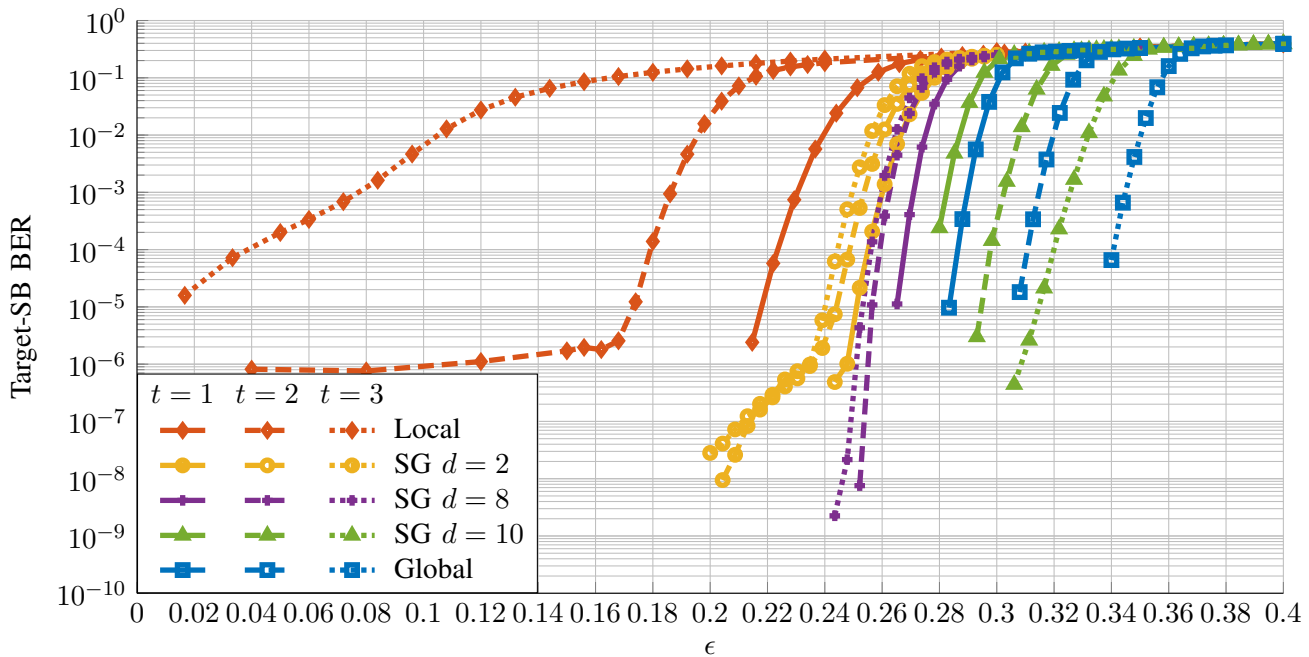}
	\caption{\label{Fig:SG512_BEC} BEC SG-decoding performance of three $ (5,12,t) $ SC-LDPCL codes \er{(dotted, dashed and solid for $ t=1,2$ and $3 $, respectively)} with $ M=11 $ sub-blocks corresponding to Figure~\ref{Fig:SGTh}; the target is sub-block number six. The plots refer to five decoding modes: local \er{(orange)}, SG with $ d=2 $ helpers \er{(yellow)}, SG with $ d=8 $ \er{(purple)}, SG with $ d=10 $ \er{(green)}, and global decoding \er{(blue)}.}
\end{figure}

\subsection{Analysis Over the Sub-Block Varying BEC}\label{Sub:Channel}
We now consider a channel model originally proposed in \cite{McEliece84}, in which the channel parameter varies between sub-blocks. This channel fits many data-storage systems, where bits of the same sub-block (e.g., on the same physical memory page) suffer from a certain noise level, but across sub-blocks the noise levels may vary considerably due to differences in operating or manufacturing conditions \er{such as cell wear, temperature, etc. \cite{TaraUchi16,ShaAl20}}.
Let $M\in \mathbb{N}$ be the number of SBs in the code, let $E_1,E_2,\ldots,E_M$ be i.i.d. random variables taking values in $[0,1]$, and let $F$ be the \er{cumulative distribution function (CDF)} of $E_m$, i.e., for every $m\in \{1,2,\ldots,M\}$ and $x\in [0,1]$, $F(x) = \Pr\left( E_m\leq x\right)$. In the channel introduced in \cite{McEliece84}, all of the bits of SB $m$ are transmitted over the same channel, which in our case is the $BEC(E_m)$; in other words, first $E_m$ is realized, and then the bits of SB $m$ pass through the  $BEC(E_m)$. The standard BEC, where the erasure probability $ \epsilon $ is constant, is a special case of the sub-block varying BEC where $F(x)$ is the step function at $x=\epsilon$.

Semi-global decoding is highly motivated by this channel, since even if the target SB suffers a high erasure rate, and local decoding fails, potentially the helpers have low erasure rates. Subsequently, the decoded helpers can send sufficient information toward the target SB in order to successfully decode it at the target phase. 
\er{Note that the decoder has no information about the channel state, i.e., the instantaneous channel parameters are not known during decoding. In case that this side information is available at the decoder, then other semi-global scheduling schemes (i.e., other than the symmetric scheme described in Section~\ref{Sec:SG}) could result in better performance.}

\er{
	Sliding-window decoders and multi-dimensional spatially coupled (MD-SC) codes were suggested for channels with memory (e.g. bursty and Gilbert-Elliott channels) and parallel channels (see \cite{IyenPapa12,Ohashi13,ScmalMahd14}). Our work differs from these previous works since 1) semi-global decoding differs from window decoding (as explained above), 2) in contrast to MD-SC codes where entire SC chains are connected, SC-LDPCL codes connect short locally decodable sub-blocks, and 3) the channel model we consider differs from those previously considered and better fits the setup where sub-blocks are mapped to distinct physical storage units. All of these distinctions lead to a new analysis which we perform in the following.
}
\begin{definition}\label{Def:pd}
For every even $ j$ and $\underline\delta_1,\underline\delta_2\in[0,1]^t $, we define $ p_j\left (\underline\delta_1,\underline\delta_2\right )$ as the asymptotic (as the lifting parameter tends to infinity) SG-decoding success-probability to decode a target SB $ m $ with $ d=j $ helpers: $ \tfrac{j}{2} $ helper SBs to the right (i.e. larger indices than the target) and $ \tfrac{j}{2} $ helpers to the left, where $ \underline\delta_1 $ and $ \underline\delta_2 $ are the $ t $ erasure probabilities incoming from the SB left to SB $m-\tfrac{j}{2} $ and from the SB right to SB $ m+\tfrac{j}{2}  $, respectively. 
\end{definition}

In general, $ p_j(\underline\delta_1,\underline\delta_2) $ is a function of both the protograph and the channel-parameter's CDF $ F(\cdot) $. Regardless of the protograph or channel, we expect $ p_j(\underline\delta_1,\underline\delta_2) $ to be monotonically non-decreasing with $ j $.
In the following analysis\er{,} we assume that the protograph is large enough, such that no helper SB is among the first or last SBs (i.e., no termination). 
Given an even number of helper SBs $ d $, our goal is to evaluate $ p_d\left(\underline 1,\underline 1 \right )$, and the intermediate values of $ p_j\left (\underline\delta_1,\underline\delta_2\right ) $, with $ j\in\{0,2,\ldots,d\} $ and $\underline\delta_1,\underline\delta_2\in [0,1]$, will help us track probabilities along the SG process.

\begin{definition}\label{Def:EpsSG}
Let $ \underline{\delta}_1,\underline\delta_2\in [0,1]^t $. We define:
	\begin{enumerate}
		\item  $ \epsilon^*\left (\underline\delta_1,\underline\delta_2\right )$ as the target's threshold given that the incoming erasure probabilities are $ \underline{\delta}_{L}=\underline\delta_1 $, and $ \underline{\delta}_{R}=\underline\delta_2 $ (see Figure~\ref{Fig:SG graph}(target))
		\item  $ \Delta\colon [0,1]\times[0,1]^t\to[0,1]^t $ as the helper function that calculates the outgoing erasure probabilities $\underline \delta_{O}$ given a SB erasure probability $ \epsilon$ and incoming erasure probabilities $\underline \delta_{I}$ (see Figure~\ref{Fig:SG graph}(helper)), i.e., 
		\begin{align*}
		\left (\delta_{O,1},\ldots,\delta_{O,t}\right )=\Delta\left (\epsilon,\delta_{I,1},\ldots,\delta_{I,t}\right ).
		\end{align*}
		\item  $ \Delta_k\colon [0,1]^k\times[0,1]^t\to[0,1]^t $ as the  recursive function, $ k\in\mathbb{N}^+ $:
		\begin{subequations}
		\begin{align}
		\label{Eq:Delta1}
		&\Delta_1\left( \epsilon,\underline \delta_I\right )=\Delta\left (\epsilon,\underline \delta_I\right )\\
		\label{Eq:Deltak}
		&\Delta_j\left ( \epsilon_1,\ldots,\epsilon_j,\underline \delta_I\right ) = \Delta_{j-1}\left (\epsilon_1,\ldots,\epsilon_{j-1},\Delta\left (\epsilon_j,\underline \delta_I\right ) \right ),\quad j\geq  2.
		\end{align}	
		\end{subequations}
	\end{enumerate}
\end{definition}

\begin{remark}
The functions $ \epsilon^*(\cdot) $, and $ \Delta(\cdot,\cdot)$ from Definition~\ref{Def:EpsSG} are deterministic functions that depend on the semi-global graph $ \mathcal{G}_{SG} $, although this dependence is not written explicitly. 
\end{remark}

\begin{remark}
If we remove the assumption in \eqref{Eq:w assumption}, then we will have to replace $ \Delta $ in items 2)+3) of Definition~\ref{Def:EpsSG} with two functions: right-to-left and left-to-right. Since we assume symmetry, then these two functions coincide and we refer to them both as $\Delta(\cdot,\cdot) $.
\end{remark}

\begin{theorem}\label{Th:Recursive SG}
	For every varying-erasure channel, and every even $ j\geq0 $,  
	\begin{align}\label{Eq:Recursive SG}
	\begin{split}
	&p_0(\underline\delta_1,\underline\delta_2)=\Pr\left (E<\epsilon^*(\underline\delta_1,\underline\delta_2)\right ),\\
	&p_j\left (\underline\delta_1,\underline\delta_2\right )=\mathbb{E}\left [  p_{j-2}\left (\Delta \left (E_1, \underline\delta_1\right ),\Delta \left (E_2, \underline\delta_2\right )\right )\right ],\;j\geq 2.
	\end{split}
	\end{align}
	where $ E,E_1,E_2 $ are i.i.d. random variables representing the channel parameter, and $ \mathbb{E}[\cdot] $ is the expectation of its argument over the choices of $E_1,E_2$.
\end{theorem}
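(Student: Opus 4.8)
The plan is to prove the two identities in \eqref{Eq:Recursive SG} by induction on the number of helpers $j$, using the one-directional message flow in the helper phase (as encoded in Definition~\ref{Def:SG Graph} and Theorem~\ref{Th:SG-DE t>1}) together with the fact that the per-sub-block channel parameters $E_1,E_2,\ldots,E_M$ are i.i.d.

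First I would dispatch the base case $j=0$. With no helpers, SG decoding of the target reduces to running BP on the target semi-global graph $\mathcal{G}_{SG}$ with the fixed incoming erasure fractions $\underline\delta_L=\underline\delta_1$ and $\underline\delta_R=\underline\delta_2$. The target's channel parameter is a single draw $E\sim F$; conditioned on $E=\epsilon$, Lemma~\ref{Lemma: mono SG} guarantees that the density-evolution sequences converge, and by the definition of the conditional target threshold $\epsilon^*(\underline\delta_1,\underline\delta_2)$ in Definition~\ref{Def:EpsSG} the limiting VN-to-CN fractions are all zero (decoding success) when $\epsilon<\epsilon^*(\underline\delta_1,\underline\delta_2)$ and strictly positive (failure) when $\epsilon>\epsilon^*(\underline\delta_1,\underline\delta_2)$. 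Averaging the success indicator over $E$ yields $p_0(\underline\delta_1,\underline\delta_2)=\Pr(E<\epsilon^*(\underline\delta_1,\underline\delta_2))$, under the convention that the threshold value itself counts as a failure.

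For the inductive step, assume the formula for $j-2$ helpers and take $j\ge 2$. By the non-termination assumption the helper window is the set of sub-blocks $m-\tfrac{j}{2},\dots,m-1,m+1,\dots,m+\tfrac{j}{2}$. The key observation is that in the helper phase the two outermost helpers, SB $m-\tfrac{j}{2}$ and SB $m+\tfrac{j}{2}$, may be decoded first: during a helper's own iterations only its local checks and its \emph{incoming} coupling checks are active, while the outgoing coupling checks $c_{O,i}$ are inert except for emitting the final message (Definition~\ref{Def:SG Graph} and the remark following it). Consequently, conditioned on $E_1=\epsilon_1$ for SB $m-\tfrac{j}{2}$ and $E_2=\epsilon_2$ for SB $m+\tfrac{j}{2}$, decoding these two SBs with incoming fractions $\underline\delta_1$ and $\underline\delta_2$ produces, via \eqref{Eq:SG dout} and item~2 of Definition~\ref{Def:EpsSG}, the outgoing fractions $\Delta(\epsilon_1,\underline\delta_1)$ and $\Delta(\epsilon_2,\underline\delta_2)$, each depending solely on that SB's own channel realization and incoming fractions. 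What remains — the helper phase restricted to SBs $m-\tfrac{j}{2}+1,\dots,m+\tfrac{j}{2}-1$ followed by the target phase — is precisely an SG-decoding instance for the same target with $j-2$ helpers and external incoming fractions $\Delta(\epsilon_1,\underline\delta_1)$ on the left and $\Delta(\epsilon_2,\underline\delta_2)$ on the right, whose asymptotic success probability is by definition $p_{j-2}\!\left(\Delta(\epsilon_1,\underline\delta_1),\Delta(\epsilon_2,\underline\delta_2)\right)$. Since $E_1,E_2$ are i.i.d.\ with, and independent of, the channel parameters of the inner helpers and the target, the tower property lets me take the outer expectation over $E_1,E_2$, giving $p_j(\underline\delta_1,\underline\delta_2)=\mathbb{E}\!\left[p_{j-2}\!\left(\Delta(E_1,\underline\delta_1),\Delta(E_2,\underline\delta_2)\right)\right]$.

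The main obstacle is the rigorous justification of the decoding-order interchange: that moving the two outermost helper decodings to the front of the schedule, and then regarding the rest as a bona fide $(j-2)$-helper SG instance, yields exactly the same asymptotic bit-erasure behavior as the original schedule. This rests on the genuine one-directionality of the message flow (the inert $c_{O,i}$-checks) and on the monotone convergence of density evolution (Lemma~\ref{Lemma: mono SG}), which makes the limiting fractions schedule-independent; I would make these two points explicit and treat the remaining combinatorial bookkeeping of sub-block indices as routine.
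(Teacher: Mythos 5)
Your proposal is correct and follows essentially the same route as the paper's proof: condition on the channel parameters of the two outermost helpers, use the one-directional helper-phase message flow (and the symmetry assumption in \eqref{Eq:w assumption}) to identify their outgoing erasure vectors as $\Delta(E_1,\underline\delta_1)$ and $\Delta(E_2,\underline\delta_2)$, recognize the remaining problem as a $(j-2)$-helper SG instance, and apply the tower rule. The only cosmetic difference is your induction wrapper, which is unnecessary since the claimed identity is itself the recursion and the inductive hypothesis is never actually invoked.
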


\begin{proof}
See Appendix~\ref{App:Recursive SG}.
\end{proof}

Theorem~\ref{Th:Recursive SG} provides an exact recursive expression for $ p_j\left (\underline\delta_1,\underline\delta_2\right ) $. However, this calculation depends on the stochastic argument $E_i$ of $ \Delta \left (E_i, \cdot \right ) $,  and in some cases, such as if the channel parameter is a continuous random variable, it is hard to evaluate the expectation in \eqref{Eq:Recursive SG}. To go around this difficulty, we derive a provable lower bound on $ p_j\left (\underline\delta_1,\underline\delta_2\right ) $ by quantizing the erasure-rate domain.

\begin{theorem}\label{Th:SG LB}
	Let $F(\cdot)$ be the CDF of a varying BEC channel, let $ \underline \delta_1,\underline \delta_2\in[0,1]^t $ and  $K\in \mathbb{N}$, and let $0=e_{0}<e_{1}<e_{2}<\ldots<e_{K}=1$ be a partition of $ [0,1] $. For every $j$ even indices $\underline{i}=\left (i_{-j/2},\ldots,i_{-1},i_{1},\ldots,i_{j/2}\right )\in \{1,\ldots,K\}^{j}$, let 
	\begin{align}\label{Eq:yi}
	 y_{\underline{i}}(\underline \delta_1,\underline \delta_2) \triangleq  \epsilon^* \left(
	\Delta_{j/2}\left( e_{i_{-1}},\ldots,e_{i_{-j/2}},\underline \delta_1\right ) ,
	\Delta_{j/2}\left( e_{i_1},\ldots,e_{i_j/2},\underline \delta_2\right ) \right ).
	\end{align}
	Then,
	\begin{align}\label{Eq:SG LB}
	 p_{j}(\underline \delta_1,\underline \delta_2) \geq     \sum_{\underline{i}\in \{1,2,\ldots,K\}^{j}}   F\left (y_{\underline{i}}(\underline \delta_1,\underline \delta_2)\right )  \prod_{k=-j/2\atop k\neq 0}^{j/2}   \left [ F(e_{i_k}) - F(e_{i_k-1})\right ].
	\end{align}
\end{theorem}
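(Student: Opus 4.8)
The plan is to start from the exact recursive formula in Theorem~\ref{Th:Recursive SG} and replace the expectations over the continuous channel parameters by Riemann-type lower sums on the partition $0=e_0<e_1<\cdots<e_K=1$. The first step is to unroll the recursion in \eqref{Eq:Recursive SG} down to the base case: applying it $j/2$ times gives
\[
p_j(\underline\delta_1,\underline\delta_2)=\mathbb{E}\!\left[p_0\!\left(\Delta_{j/2}(E_{-1},\ldots,E_{-j/2},\underline\delta_1),\,\Delta_{j/2}(E_1,\ldots,E_{j/2},\underline\delta_2)\right)\right],
\]
where the $j$ random variables $E_{\pm 1},\ldots,E_{\pm j/2}$ are i.i.d.\ copies of the channel parameter and the nesting structure of $\Delta_{j/2}$ matches \eqref{Eq:Delta1}--\eqref{Eq:Deltak}. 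Using $p_0(\underline\delta_1,\underline\delta_2)=\Pr(E<\epsilon^*(\underline\delta_1,\underline\delta_2))=F(\epsilon^*(\underline\delta_1,\underline\delta_2))$ (left-continuity subtleties at the threshold can be absorbed into the $<$ vs.\ $\le$ convention), this becomes
\[
p_j(\underline\delta_1,\underline\delta_2)=\mathbb{E}\!\left[F\!\left(y(E_{-1},\ldots,E_{-j/2},E_1,\ldots,E_{j/2})\right)\right],
\]
where $y(\cdot)$ is exactly the quantity appearing in \eqref{Eq:yi} but evaluated at the continuous arguments rather than at grid points $e_{i_k}$.

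The second step is the monotonicity argument, which is the technical heart of the proof. I would establish that $\Delta(\cdot,\cdot)$ is coordinatewise monotone non-decreasing in all of its arguments: increasing a SB erasure probability $\epsilon$ or any incoming erasure rate $\delta_{I,i}$ cannot decrease any outgoing erasure rate $\delta_{O,i}$. This follows from Lemma~\ref{Lemma: mono SG} together with the explicit form \eqref{Eq:SG dout} and the fact that every function in \eqref{Eq:SG_DE_LC}, \eqref{Eq:SG_DE__RCC}, \eqref{Eq:SG_DE_VN_helper} is a composition of coordinatewise monotone maps — a standard density-evolution monotonicity fact proved by induction on $\ell$ and passing to the limit. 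Composing, $\Delta_{j/2}$ is monotone non-decreasing in each of its first $j/2$ erasure-rate arguments. Next, $\epsilon^*(\underline\delta_1,\underline\delta_2)$, the target threshold, is monotone non-increasing in each coordinate of $\underline\delta_1$ and $\underline\delta_2$: worse incoming information only shrinks the set of channel parameters for which target decoding succeeds. Finally $F$ is non-decreasing. Chaining these, $F(y(\cdot))$ is monotone non-increasing in each of the $j$ channel-parameter arguments. Therefore, on the box $\prod_k (e_{i_k-1},e_{i_k}]$, we have the pointwise bound $F(y(E_{-1},\ldots,E_{j/2}))\ge F(y_{\underline i}(\underline\delta_1,\underline\delta_2))$, since replacing each $E_{\pm k}$ by the right endpoint $e_{i_k}$ (the largest value in that cell, hence the ``worst'') can only decrease $F(y(\cdot))$.

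The third step assembles the bound. Partition the sample space $[0,1]^j$ of $(E_{-1},\ldots,E_{j/2})$ into the $K^j$ product cells indexed by $\underline i\in\{1,\ldots,K\}^j$. By independence, the probability of the cell $\prod_k(e_{i_k-1},e_{i_k}]$ is $\prod_{k=-j/2,\,k\ne 0}^{j/2}\big[F(e_{i_k})-F(e_{i_k-1})\big]$. Then
\begin{align*}
p_j(\underline\delta_1,\underline\delta_2)&=\sum_{\underline i}\mathbb{E}\!\left[F(y(\cdot))\,\mathbf 1_{\text{cell }\underline i}\right]\\
&\ge\sum_{\underline i}F\!\left(y_{\underline i}(\underline\delta_1,\underline\delta_2)\right)\prod_{k=-j/2\atop k\ne 0}^{j/2}\big[F(e_{i_k})-F(e_{i_k-1})\big],
\end{align*}
which is exactly \eqref{Eq:SG LB}. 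The $j=0$ case is degenerate (empty product, single term $F(\epsilon^*(\underline\delta_1,\underline\delta_2))$) and is handled directly by the base case of Theorem~\ref{Th:Recursive SG}.

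The main obstacle I anticipate is the rigorous verification of the two monotonicity claims — especially that $\epsilon^*(\underline\delta_1,\underline\delta_2)$ is non-increasing in the incoming erasure rates and that the whole composition $\Delta_{j/2}$ preserves monotonicity through the nested recursion. This needs a careful induction: monotonicity of the per-iteration DE map in \eqref{Eq:SG_DE_VN}--\eqref{Eq:SG_DE__LCC} with respect to both the state variables $x_\ell^{(j)}$ and the parameters $(\epsilon,\underline\delta_L,\underline\delta_R)$, then monotone convergence to the fixed point, then the observation that the success/failure threshold is the supremum of a downward-closed set that shrinks as parameters worsen. A secondary, minor issue is the boundary convention at $\epsilon=\epsilon^*$ (strict vs.\ non-strict inequality and whether $F$ is left- or right-continuous there); I would state the convention that makes $p_0=F(\epsilon^*)$ exact and note the bound is unaffected on the finitely many grid points. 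Everything else is routine bookkeeping of product measures.
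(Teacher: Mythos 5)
Your proposal is correct and follows essentially the same route as the paper: the exact recursion of Theorem~\ref{Th:Recursive SG}, monotonicity of $\Delta$ (and hence of $F\circ\epsilon^*\circ\Delta_{j/2}$) in the channel parameters, and a right-endpoint quantization over the partition. The only difference is organizational — you unroll the recursion fully and quantize all $j$ variables at once over the $K^{j}$ product cells, whereas the paper proves \eqref{Eq:SG LB} by induction on $j$, quantizing the two outermost helpers per step; your explicit statement of the monotonicity chain (nondecreasing $\Delta$, nonincreasing $\epsilon^*$) actually spells out what the paper's one-line monotonicity claim implicitly uses.
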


\begin{proof}
See Appendix~\ref{App:SG LB}.
\end{proof}

Note that unlike \eqref{Eq:Recursive SG}, evaluating the right-hand side of \eqref{Eq:SG LB} only uses deterministic arguments in closed-form ($F$) and recursive ($ y_{\underline i}$) functions.

\begin{remark}
For any given $d$, increasing the parameter $K$ tightens the bound in \eqref{Eq:SG LB}. On the other hand, increasing $ K $ increases the calculation complexity (finer quantization). Through the parameter $ K $, one can control this tightness-vs.-complexity trade-off.
\end{remark}

\begin{remark}	\label{Remark:choice LB} 
Although the bound in Theorem~\ref{Th:SG LB} holds for every choice of $K$ and $\left \{e_{i}\right \}_{i=1}^{K-1}$, it is preferable to have at least $\epsilon_L\triangleq \epsilon^*(\underline 1,\underline 1)$ and $\epsilon_S\triangleq\epsilon^*(\underline 1, \underline 0)$ as points of calculation since they capture success in the extreme case where previously decoded helpers completely fail ($\epsilon_L$: helpers from both sides fail, $\epsilon_S$: helpers from one side fail). For example, one may set 
\begin{align} \label{Eq:choice LB}
\underline{e} = \left (0,\epsilon_L,\epsilon_L+\xi_L,\ldots,\epsilon_S,\epsilon_S+\xi_S,\ldots,1\right ),
\end{align}
where $\xi_L=\lfloor \tfrac{K}{2} \rfloor(\epsilon_S-\epsilon_L)$ and $\xi_S=\left \lfloor \tfrac{K}{2} \right \rfloor(1-\epsilon_S)$.
\end{remark}

In order to further reduce calculation complexity, we state the next lower bound. Similar to the definition of $p_j(\underline \delta_1,\underline \delta_2)$, we denote by $ \hat{p}_j(\underline \delta) $ the SG success probability when all $ j $ helper SBs are \emph{at one side of the target}, either all left or all right of it, given that the farthest helper from the target has input erasure-probability vector $ \underline \delta $.
\begin{proposition}\label{Prop:SG_bounds}
For every even $ j\geq 2 $,
\begin{align}
	\label{Eq:SG_rec1}
	p_j(\underline 1,\underline 1)\hspace*{3mm}&\geq P_L^2 \cdot p_{j-2}\left(\underline 0,\underline 0\right)+2P_L(1-P_L)p_{j-2}\left(\underline 1,\underline 0\right) +\left (1-P_L\right )^2 p_{j-2}\left(\underline 1,\underline 1\right),\\
	\label{Eq:SG_rec2}
	\begin{split}
	p_j(\underline\delta_1,\underline\delta_2)&\geq
	P_L\cdot\left (1-\hat{p}_{\frac{j}{2}-1}(\underline\delta_1) \right )\cdot\left(1-\hat{p}_{\frac{j}{2}-1}(\underline\delta_2) \right )\\
	&+P_S\cdot\left (\hat{p}_{\frac{j}{2}-1}(\underline\delta_2)\left (1-\hat{p}_{\frac{j}{2}-1}(\underline\delta_1) \right ) +\hat{p}_{\frac{j}{2}-1}(\underline\delta_1)\left (1-\hat{p}_{\frac{j}{2}-1}(\underline\delta_2) \right ) \right )\\
	&+P_D \cdot\hat{p}_{\frac{j}{2}-1}(\underline\delta_1)\cdot\hat{p}_{\frac{j}{2}-1}(\underline\delta_2) 
	\end{split}
\end{align}
where 
\begin{align}\label{Eq:extreme th}
	P_L\triangleq \Pr\left (E<\epsilon^*(\underline 1,\underline 1)\right ),\quad 
	P_S\triangleq \Pr\left (E<\epsilon^*(\underline 1,\underline 0)\right ),\quad
	P_{D}\triangleq\Pr\left (E<\epsilon^*(\underline 0,\underline 0)\right ).
\end{align}
\end{proposition}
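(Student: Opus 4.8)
The plan is to unfold the two-sided semi-global process one ``layer'' of helper sub-blocks at a time and to lower-bound the conditional success probabilities by monotonicity of the density-evolution maps, distinguishing the two bounds by \emph{which} layer is peeled off: the outermost helpers for \eqref{Eq:SG_rec1}, the innermost ones for \eqref{Eq:SG_rec2}. Two preliminary facts will be established first. (i) From Lemma~\ref{Lemma: mono SG} and the explicit form of \eqref{Eq:SG_DE_VN}--\eqref{Eq:SG dout}, the helper map $\Delta(\cdot,\cdot)$ of Definition~\ref{Def:EpsSG} is coordinate-wise non-decreasing in both arguments, the target threshold $\epsilon^*(\underline\delta_1,\underline\delta_2)$ is coordinate-wise non-increasing in $\underline\delta_1$ and $\underline\delta_2$, and hence $p_j$ and $\hat{p}_k$ are coordinate-wise non-increasing in their erasure-vector arguments. (ii) Under \eqref{Eq:w assumption} the graph $\mathcal{G}_{SG}$ is left--right symmetric, so $\epsilon^*(\underline a,\underline b)=\epsilon^*(\underline b,\underline a)$ and $p_j(\underline a,\underline b)=p_j(\underline b,\underline a)$; in particular the threshold defining $P_S$ equals $\epsilon^*(\underline 0,\underline 1)$ as well as $\epsilon^*(\underline 1,\underline 0)$.

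For \eqref{Eq:SG_rec1} I would start from the exact recursion of Theorem~\ref{Th:Recursive SG}, $p_j(\underline 1,\underline 1)=\mathbb{E}\big[p_{j-2}\big(\Delta(E_1,\underline 1),\Delta(E_2,\underline 1)\big)\big]$, where $E_1$ and $E_2$ are the independent channel parameters of the outermost left and right helpers. A helper fed the all-erased input $\underline 1$ decodes, during its own iterations, exactly like a target whose two incoming vectors are both $\underline 1$ --- its outgoing coupling checks do not participate, which is the same as an all-erased incoming vector (cf.\ the remark after Definition~\ref{Def:SG Graph}) --- so it recovers all of its variable nodes, equivalently it outputs $\underline 0$, precisely when its channel parameter is below $\epsilon^*(\underline 1,\underline 1)$, an event of probability $P_L$; on the complementary event $\Delta(E_i,\underline 1)\le\underline 1$ trivially. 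Splitting the expectation over the four success/failure combinations, replacing $\Delta(E_i,\underline 1)$ by $\underline 0$ or $\underline 1$ accordingly, and using the monotonicity of $p_{j-2}$ gives $p_j(\underline 1,\underline 1)\ge P_L^2\,p_{j-2}(\underline 0,\underline 0)+P_L(1-P_L)\big(p_{j-2}(\underline 0,\underline 1)+p_{j-2}(\underline 1,\underline 0)\big)+(1-P_L)^2\,p_{j-2}(\underline 1,\underline 1)$; the left--right symmetry collapses the middle pair to $2P_L(1-P_L)\,p_{j-2}(\underline 1,\underline 0)$, which is \eqref{Eq:SG_rec1}.

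For \eqref{Eq:SG_rec2} I would instead peel off the two helpers \emph{adjacent} to the target. The innermost left helper $m-1$, together with the remaining $\tfrac{j}{2}-1$ left helpers $m-2,\dots,m-\tfrac{j}{2}$, forms a one-sided semi-global sub-instance whose target is $m-1$ and whose outermost helper receives input $\underline\delta_1$; thus $m-1$ recovers all its variable nodes --- i.e.\ passes $\underline 0$ toward the real target --- with probability exactly $\hat{p}_{\frac{j}{2}-1}(\underline\delta_1)$, and symmetrically on the right with $\hat{p}_{\frac{j}{2}-1}(\underline\delta_2)$, the two events being independent and independent of the target's own channel draw. Conditioning on which of the two events occurs, the target receives $\underline 0$ on a ``succeeding'' side and some vector $\le\underline 1$ on a ``failing'' side; since the target decodes iff its channel parameter is below $\epsilon^*(\underline\delta_L,\underline\delta_R)$ and $\epsilon^*$ is non-increasing in each argument, the conditional success probability is $P_D$ when both sides succeed, at least $P_S$ (via $\epsilon^*(\underline 0,\underline 1)=\epsilon^*(\underline 1,\underline 0)$) when exactly one succeeds, and at least $P_L$ when neither does. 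Averaging over the four cases with probabilities $\hat{p}_{\frac{j}{2}-1}(\underline\delta_1)\hat{p}_{\frac{j}{2}-1}(\underline\delta_2)$, the two mixed terms, and $(1-\hat{p}_{\frac{j}{2}-1}(\underline\delta_1))(1-\hat{p}_{\frac{j}{2}-1}(\underline\delta_2))$ yields \eqref{Eq:SG_rec2}.

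The case bookkeeping and the final averaging are routine. The step that needs the most care --- and where assumption \eqref{Eq:w assumption} and the helper-graph description enter --- is the identification ``a helper outputs $\underline 0$'' $\iff$ ``the associated one-sided sub-instance succeeds,'' together with pinning the relevant decoding thresholds to $\epsilon^*(\underline 1,\underline 1)$, $\epsilon^*(\underline 1,\underline 0)$ and $\epsilon^*(\underline 0,\underline 0)$, and establishing the coordinate-wise monotonicity of $\Delta$, $\epsilon^*$, $p_j$ and $\hat{p}_k$ from the density-evolution recursion together with Lemma~\ref{Lemma: mono SG}.
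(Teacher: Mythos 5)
Your proposal is correct and follows essentially the same route as the paper's proof: \eqref{Eq:SG_rec1} is obtained by peeling the outermost helpers off the exact recursion of Theorem~\ref{Th:Recursive SG}, replacing $\Delta(E_i,\underline 1)$ by $\underline 0$ on the event $\{E_i<\epsilon^*(\underline 1,\underline 1)\}$ and by $\underline 1$ otherwise, while \eqref{Eq:SG_rec2} conditions on whether the two innermost helpers deliver $\underline 0$ to the target (events of probability $\hat p_{j/2-1}(\underline\delta_1)$, $\hat p_{j/2-1}(\underline\delta_2)$) and bounds the conditional target-success probabilities by $P_D$, $P_S$, $P_L$, exactly as in Appendix~\ref{App:SG bounds}. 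The only difference is that you make explicit the monotonicity of $\Delta$, $\epsilon^*$, $p_j$, $\hat p_k$ and the helper/target equivalence under \eqref{Eq:w assumption}, which the paper uses implicitly.
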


\begin{proof}
	See Appendix~\ref{App:SG bounds}.
\end{proof}
In view of \eqref{Eq:Recursive SG} and \eqref{Eq:extreme th}, we have for $ j=0 $,
\begin{align}\label{Eq:SG d=0}
p_0(\underline 1,\underline 1)=P_L,\quad p_0(\underline 1,\underline 0)=p_0(\underline 0,\underline 1)=P_S,\quad p_0(\underline 0,\underline 0)=P_{D}.
\end{align}	
Proposition~\ref{Prop:SG_bounds} leads to a simple way to lower bound  $ p_j(\underline 1,\underline 1) $: first calculate the exact values for $ j=0 $ in \eqref{Eq:SG d=0}, and then use the recursive bounds in \eqref{Eq:SG_rec1}--\eqref{Eq:SG_rec2} to lower bound  $ p_j(\underline 1,\underline 1) $. For example
for $ j=2 $ we get from \eqref{Eq:SG_rec2}
\begin{align*}
p_2(\underline 1,\underline 1)
&\geq P_L^2 \cdot p_{0}\left(\underline 0,\underline 0\right)+2P_L(1-P_L)p_{0}\left(\underline 1,\underline 0\right) +\left (1-P_L\right )^2 p_{0}\left(\underline 1,\underline 1\right)\\
&= P_L^2 P_{D}+2P_L(1-P_L)P_S +\left (1-P_L\right )^2 P_L\,.
\end{align*}

Lower bounds on $ \hat p_j(\cdot) $ can be derived similarly to the bound in Proposition~\ref{Prop:SG_bounds}; we omit the details here. To show an application of the bounds in Theorem~\ref{Th:SG LB} and Proposition~\ref{Prop:SG_bounds}, we next use them to evaluate the balanced semi-global strategy proposed in Section~\ref{Sec:SG} (evaluated by $ p_j(\underline1,\underline 1) $) in comparison to the one-sided semi-global strategy (evaluated by $ \hat{p}_j(\underline 1) $). 

\begin{example}\label{Ex:SG}
	Figure~\ref{Fig:SG} compares between the success probability of SG-decoding when applying the balanced strategy ($ p_j(\underline 1,\underline 1)$) and when applying the one-sided strategy ($ \hat{p}_j(\underline 1) $). The plots refer to $ (l = 5,r = 12,t\in\{1,2,3\}) $ SC-LDPCL protographs over the varying erasure channel $ BEC(E),\;E\sim \mathrm{Unif}[0,0.4] $ \er{(the uniform distribution is given as a concrete example for computing the bounds; our results in Theorem~\ref{Th:SG LB} and Proposition~\ref{Prop:SG_bounds} are derived for any distribution).} As seen in Figure~\ref{Fig:SG}, the balanced strategy (solid-blue curves) performs better than the one-sided strategy (dotted-black curves) for every value of $ t=1,2,3 $ and every $ j\in\{2,4,6,8,10\}$. The bounds are computed according to Theorem~\ref{Th:SG LB} and Proposition~\ref{Prop:SG_bounds}. We used Theorem~\ref{Th:SG LB} to get a lower bound for $ j=2$, with $ K=40 $ and the partition in \eqref{Eq:choice LB}; for the higher values of $j$ we used Proposition~\ref{Prop:SG_bounds}.
	
	Figure~\ref{Fig:SG} also exemplifies the trade-off between locality and coupling in SC-LDPCL protographs (as seen in Figure~\ref{Fig:SGTh} for the standard erasure channel). If $ j=0 $ (local decoding), it is preferable to use the $ t=1 $ protograph which is highly localized. However, if $ j\geq 6 $, the $ t=3 $ protograph, which is strongly coupled, is superior. In the range $ j\in\{2,4\} $, the $ t=2 $ protograph is superior.
\end{example}

\begin{figure}
	\begin{center}
	\begin{tikzpicture}

\begin{axis}[%
width=4.5in,
height=2in,
at={(0,0)},
scale only axis,
xmin=0,
xmax=10,
xlabel style={font=\color{white!15!black}},
xlabel={\small number of helpers $ d $},
ymin=0.2,
ymax=1,
ylabel style={font=\color{white!15!black}},
ylabel={\small Lower bounds on $p_d(\underline 1,\underline 1)$, $\hat p_d(\underline 1)$},
axis background/.style={fill=white},
title={\small $(5,12,t)$ SC-LDPCL over BEC($E$), $E\sim \mathrm{Unif}[0,0.4]$},
xmajorgrids,
ymajorgrids,
legend style={at={(0.99,0.01)}, anchor=south east, legend cell align=left, align=left, draw=white!15!black,legend columns=2}
]
\addplot [color=black,dotted, line width=1pt,mark=triangle,mark options={solid}]
table[row sep=crcr]{%
	0	0.642702511150871\\
	1	0.728120957774175\\
	2	0.737010491521478\\
	3	0.737877735451586\\
	4	0.737877735451586\\
	5	0.737877735451586\\
	6	0.737877735451586\\
	7	0.737877735451586\\
	8	0.737877735451586\\
	9	0.737877735451586\\
	10	0.737877735451586\\
	11	0.737877735451586\\
	12	0.737877735451586\\
	13	0.737877735451586\\
	14	0.737877735451586\\
	15	0.737877735451586\\
	16	0.737877735451586\\
	17	0.737877735451586\\
	18	0.737877735451586\\
	19	0.737877735451586\\
	20	0.737877735451586\\
	21	0.737877735451586\\
	22	0.737877735451586\\
	23	0.737877735451586\\
	24	0.737877735451586\\
	25	0.737877735451586\\
	26	0.737877735451586\\
	27	0.737877735451586\\
	28	0.737877735451586\\
	29	0.737877735451586\\
};
\addlegendentry{$\hat{p}_d(\underline 1),\;t=1$}

\addplot [color=blue, line width=1pt,mark=triangle]
table[row sep=crcr]{%
	0	0.642702511150871\\
	2	0.798808013083043\\
	4	0.81074073232627\\
	6	0.81074073232627\\
	8	0.81074073232627\\
	10	0.81074073232627\\
	12	0.81074073232627\\
	14	0.81074073232627\\
	16	0.81074073232627\\
	18	0.81074073232627\\
	20	0.81074073232627\\
	22	0.81074073232627\\
	24	0.81074073232627\\
	26	0.81074073232627\\
	28	0.81074073232627\\
};
\addlegendentry{${p}_d(\underline 1,\underline 1),\;t=1$}

\addplot [color=black, line width=1pt,mark=o,dotted,mark options={solid}]
table[row sep=crcr]{%
	0	0.52618854068334\\
	1	0.713618151602248\\
	2	0.759946147453353\\
	3	0.769876553035749\\
	4	0.769876553035749\\
	5	0.769876553035749\\
	6	0.769876553035749\\
	7	0.769876553035749\\
	8	0.769876553035749\\
	9	0.769876553035749\\
	10	0.769876553035749\\
	11	0.769876553035749\\
	12	0.769876553035749\\
	13	0.769876553035749\\
	14	0.769876553035749\\
	15	0.769876553035749\\
	16	0.769876553035749\\
	17	0.769876553035749\\
	18	0.769876553035749\\
	19	0.769876553035749\\
	20	0.769876553035749\\
	21	0.769876553035749\\
	22	0.769876553035749\\
	23	0.769876553035749\\
	24	0.769876553035749\\
	25	0.769876553035749\\
	26	0.769876553035749\\
	27	0.769876553035749\\
	28	0.769876553035749\\
	29	0.769876553035749\\
};
\addlegendentry{$\hat{p}_d(\underline 1),\;t=2$}

\addplot [color=blue,  line width=1pt,mark=o]
table[row sep=crcr]{%
	0	0.52618854068334\\
	2	0.826038894725608\\
	4	0.882035317982811\\
	6	0.892086887630882\\
	8	0.895675862640329\\
	10	0.895675862640329\\
	12	0.895675862640329\\
	14	0.895675862640329\\
	16	0.895675862640329\\
	18	0.895675862640329\\
	20	0.895675862640329\\
	22	0.895675862640329\\
	24	0.895675862640329\\
	26	0.895675862640329\\
	28	0.895675862640329\\
};
\addlegendentry{${p}_d(\underline 1,\underline 1),\;t=2$}

\addplot [color=black, line width=1pt,mark=square,dotted,mark options={solid}]
  table[row sep=crcr]{%
0	0.227386386364811\\
1	0.563988675456112\\
2	0.730800221219561\\
3	0.760715996173865\\
4	0.771494154103133\\
5	0.771514797265216\\
6	0.771514797265216\\
7	0.771514797265216\\
8	0.771514797265216\\
9	0.771514797265216\\
10	0.771514797265216\\
11	0.771514797265216\\
12	0.771514797265216\\
13	0.771514797265216\\
14	0.771514797265216\\
15	0.771514797265216\\
16	0.771514797265216\\
17	0.771514797265216\\
18	0.771514797265216\\
19	0.771514797265216\\
20	0.771514797265216\\
21	0.771514797265216\\
22	0.771514797265216\\
23	0.771514797265216\\
24	0.771514797265216\\
25	0.771514797265216\\
26	0.771514797265216\\
27	0.771514797265216\\
28	0.771514797265216\\
29	0.771514797265216\\
};
\addlegendentry{$\hat{p}_d(\underline 1),\;t=3$}

\addplot [color=blue, line width=1pt,mark=square]
  table[row sep=crcr]{%
0	0.227386386364811\\
2	0.739894807886517\\
4	0.830402560545825\\
6	0.904979380475769\\
8	0.91964935082562\\
10	0.924682842402471\\
12	0.924692354929364\\
14	0.924692354929364\\
16	0.924692354929364\\
18	0.924692354929364\\
20	0.924692354929364\\
22	0.924692354929364\\
24	0.924692354929364\\
26	0.924692354929364\\
28	0.924692354929364\\
};
\addlegendentry{${p}_d(\underline 1,\underline 1),\;t=3$}

\end{axis}
\end{tikzpicture}%
	\caption{\label{Fig:SG}Lower bounds on $ p_j(\underline 1,\underline 1) $ and $ \hat{p}_j(\underline 1) $ for the $ (5,12,t\in\{1,2,3\}) $ SC-LDPCL protographs over the  $ BEC(E),\;E\sim \mathrm{Unif}[0,0.4] $.}
	\end{center}
\end{figure}
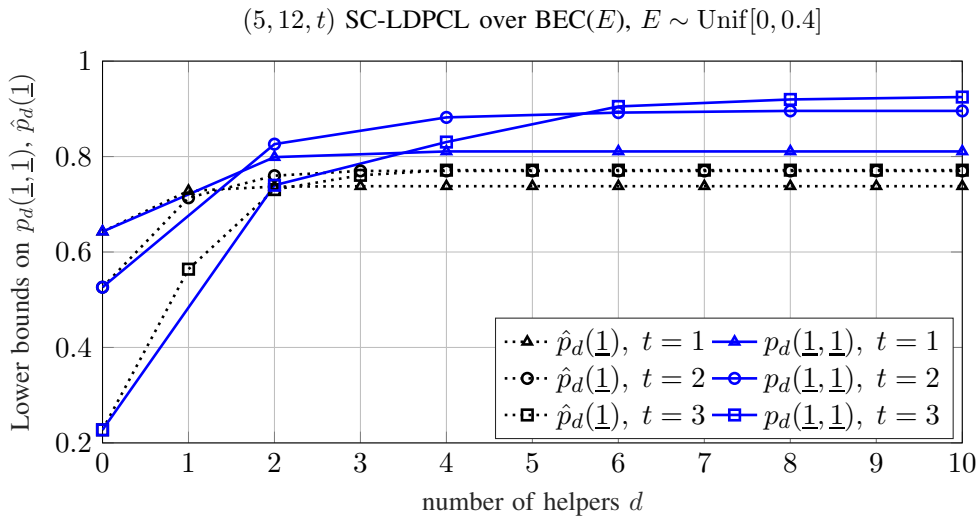

\section{Generalized Constructions}
\label{Sec:Gen Construct}

The family of SC-LDPCL protographs introduced in Section~\ref{Sec:SC-LDPCL Codes} and analyzed in Section~\ref{Sec:SG} share a common property in which sub-blocks are connected only to their adjacent neighbors. This follows from Construction~\ref{Construct: SC-LDPCL}, which uses memory $ T=1 $, i.e., the base matrix $B$ is decomposed into two matrices: $ B_0$ and $ B_1 $. In general, $ T $ can be greater than $1$. In this section, we present a generalization of Construction~\ref{Construct: SC-LDPCL} to  $T\geq 1$. This generalization enriches the family of SC-LDPCL codes and enables additional SG-decoding strategies. 

\er{
	In addition to generalizing the construction to larger memory parameters, we extend in this section SC-LDPCL codes beyond the $1$-dimensional chain of the classical SC-LDPC codes. The structure of the resulting codes is similar to existing multi-dimensional SC-LDPC codes  \cite{LiuLi15,OlmMitch17,TruMitch19,DolEsf20,ScmalMahd14,Ohashi13}, but as in the $1$-dimensional case, our codes enable the extra feature of decoding a requested target sub-block by accessing only a small number of sub-blocks around it in the array.
}

\subsection{Generalized Construction}

\begin{construction}\label{Construct: Generalized}
Let $ 3 \leq l < r $ and $ T\geq 1 $ be integers, and let $ t\in\{1,2,\ldots,l-2\} $. Let $ P\in \{0,\ldots,T\}^{l\times r} $ such that
\begin{align}\label{Eq: Partition 1}
P(i,j)=0,\quad \forall\; t<i\leq l,\; 1\leq j \leq r\,.
\end{align}	
Set $ l\times r $ matrices $ B_0,\ldots,B_T $ such that for every $ \tau\in \{0,\ldots,T\} $, $ B_\tau(i,j)=1 $ if $ P(i,j)=\tau $, and  $ B_\tau(i,j)=0 $ otherwise.
Construct a coupled protograph $ H $ by diagonally placing $ B_0,\ldots,B_T $ in $ H $ as in Figure~\ref{Fig:SC36}(b).
\end{construction}

$P(i,j)$ is an index matrix used to specify the graph coupling. All-zero rows in $P$ correspond to local checks (LCs) in the constructed protograph, and mixed rows correspond to coupling checks (CCs).
Note that \eqref{Eq: Partition 1} assures sub-block access: the local graph is an $ (l-t,r) $-regular graph, with $ l-t\geq 2 $. 
We next show some examples of protograph classes constructed by Construction~\ref{Construct: Generalized}; the classes differ in their inter-sub-block connections.

\begin{example}\label{Ex:Gen const non-hyper}
Let $4 \leq l,\; T\in\{2,\ldots,l-2\},\; t=T $, and $ w=\left \lfloor\tfrac{r}{t+1}\right \rfloor $. Set
\er{
	\begin{align*}
	P(i,j) = \left\{ 
	\begin{array}{ll}
	0\,, & t<i\leq l\,,\\
	0\,, & 1\leq i\leq t,\; 1\leq j\leq iw\,,\\
	i\,, & 1\leq i\leq t,\; iw< j\leq r\,.
	\end{array}\right.
	\end{align*}
}
For example, for $ l=4,\;r=8,\;t=T=2$ we have
\begin{align}\label{Eq:partition mat non-hyper}
P=\begin{pmatrix}
	0	&	0	&	1	&	1	&	1	&	1	&	1	&	1	\\
	0	&	0	&	0	&	0	&	2	&	2	&	2	&	2	\\
	0	&	0	&	0	&	0	&	0	&	0	&	0	&	0	\\
	0	&	0	&	0	&	0	&	0	&	0	&	0	&	0	
\end{pmatrix}.
\end{align} 
\end{example}

In the above example, every coupling check (CC) connects two sub-blocks: the current sub-block, represented by $0$ entries in CC rows, and the $\tau$-th sub-block away, represented by the entries $\tau>0$.

We can consider another partition, in which coupling checks connect more than two sub-blocks. As we show later, this choice can lead to better global decoding thresholds. 

\begin{example}\label{Ex:Gen const hyper}
For $ l,r,T $ and $ t $ as in Construction~\ref{Construct: Generalized}, let $ w=\left \lfloor\tfrac{r}{t+1}\right \rfloor $ and $ q_i = \left \lfloor\tfrac{r-iw}{T}\right \rfloor  $, $ 1\leq i \leq t $. Set 
\er{
	\begin{align*}
	P(i,j) = \left\{ 
	\begin{array}{ll}
	0\,, & t<i\leq l\,,\\
	0\,, & 1\leq i\leq t,\; 1\leq j\leq iw\,,\\
	\tau\,, & 1\leq i\leq t,\; iw+(\tau-1)q_i< j\leq iw+\tau q_i\,,
	\end{array}\right.\;,
	\end{align*}
}
where $ \tau>0 $. For example, for $ l=4,\,r=8,\,t=T=2 $ we get
\begin{align}\label{Eq:partition mat hyper}
	P=\begin{pmatrix}
	0	&	0	&	1	&	1	&	1	&	2	&	2	&	2\\
	0	&	0	&	0	&	0	&	1	&	1	&	2	&	2\\
	0	&	0	&	0	&	0	&	0	&	0	&	0	&	0\\
	0	&	0	&	0	&	0	&	0	&	0	&	0	&	0
	\end{pmatrix}.
\end{align} 
In \eqref{Eq:partition mat hyper}, rows 1 and 2 in $ P $ specify CCs that connect three sub-blocks (i.e., $ m,\;m+1,m+2 $).  
\end{example}

Further, we present a partition that generates a two-dimensional SC-LDPCL protograph, in which each sub-block is connected to 4 adjacent sub-blocks: to the right, left, up and down. This structure can be valuable for two-dimensional storage topologies. The construction is based on a partition matrix $ P $ that has two CCs: one containing zeros and ones (i.e., connecting SBs horizontally), and the other containing zeros and $ T $'s (vertical connection).

\begin{example}\label{Ex:2D grid}
Let $  l\geq 4 $, let $ r=4K $ for some integer $ K \geq \tfrac{l}{4}$, and let $ T>2=t$. Set
\er{
	\begin{align*}
	P(i,j) = \left\{ 
	\begin{array}{ll}
	0\,, & t<i\leq l\,,\\
	0\,, & i=1,\; \left (1\leq j \leq \tfrac14 r \text{ OR } \tfrac34  r< j \leq r\right )\,,\\
	0\,, & i=2,\; \tfrac12 r <j\leq r\,,\\
	1\,, & i=1,\; \tfrac14 r< j \leq \tfrac34 r\,, \\
	T\,, & i=2,\; 1\leq j\leq \tfrac12 r\,.
	\end{array}\right.\;,
	\end{align*}
}
\end{example}
For example, consider $ l=4,r=8,T=5,t=2,$. Then,
\begin{align}\label{Eq:2-dim partition}
P=\begin{pmatrix}
	0	&	0	&	1	&	1	&	1	&	1	&	0	&	0\\
	0	&	0	&	0	&	0	&	5	&	5	&	5	&	5\\
	0	&	0	&	0	&	0	&	0	&	0	&	0	&	0\\
	0	&	0	&	0	&	0	&	0	&	0	&	0	&	0
\end{pmatrix}.
\end{align} 

Consider four $ (l=4,r=8,t=2) $-regular SC-LDPCL protographs: (A) a protograph constructed by Construction~\ref{Construct: SC-LDPCL}; (B), (C) protographs constructed from \eqref{Eq:partition mat non-hyper} and \eqref{Eq:partition mat hyper}, respectively; (D) a two-dimensional protograph constructed according to \eqref{Eq:2-dim partition}. Assume that all protographs have $ M=25 $ sub-blocks. The local graphs are $ (2,8) $-regular (except for terminating sub-blocks), so the local thresholds of protographs (A)--(D) coincide and equal $ \epsilon_L=0.1429 $. Table~\ref{Tbl:Compare 4} details the design rate and global threshold for each of these protographs. The best rate-threshold trade-off (i.e., smallest gap $ 1-\epsilon_G-R $) is achieved by protograph (C). 
\begin{table}
\caption{\label{Tbl:Compare 4}  Global thresholds of design rates for $ (4,8,2) $ SC-LDPCL protographs.}
\begin{center}
	\begin{tabular}{c|cccc}
			&	(A)			&	(B)		&	(C)		&	(D)	\\
			\hline										
		$R$	&	0.49		&	0.485	&	0.48	&	0.47\\
$\epsilon_G$&	0.4657	&	0.4715	&	0.4864	&	0.4602
	\end{tabular}
\end{center}
\end{table}

\subsection{Semi-Global Decoding}
\label{Sub:Gen SG}

Generalized constructions with $T\geq2$ can also use the semi-global decoding strategy described in Section~\ref{Sec:SG}, with added flexibility in the scheduling of helper decoding.  Since in the general case the protograph’s structure is not a simple chain, the semi-global decoder needs to specify which $d$ helper sub-blocks are decoded, and at what order. 
Further, SG-decoding analysis should be revised since the information flows differently. In view of these observations, we  define the \emph{inter-sub-block graph}. This graph captures the connections between sub-blocks in the coupled protograph while suppressing the intra-sub-block connections (local checks and edges), and the exact node degrees.

\begin{definition}[inter-sub-block graph]\label{Def:ISB graph}
	Let $ \mathcal{G} $ be an SC-LDPCL protograph with $ M $ sub-blocks. The inter-sub-block graph $ \mathcal{G}_{ISB}=(\mathcal{V},\mathcal{E}) $ corresponding to $ \mathcal{G} $ consists of $ |\mathcal{V}|=M $ nodes, each describing a sub-block in $ \mathcal{G} $. An (undirected) edge $ e\in\mathcal{E} $ connects sub-blocks $ \{m_i\}\subset\mathcal{V} $ if there exists a coupling check in $ \mathcal{G} $ that is connected to variable nodes belonging to sub-blocks $ \{m_i\} $.
\end{definition}
\begin{remark}\label{remark:ISB is not bipartite}
	To simplify the presentation, and since for SG decoding we are interested in the inter-sub-block connection, Definition~\ref{Def:ISB graph} absorbs local checks into their sub-block node, and coupling checks into edges connecting sub-block nodes. 
\end{remark}
\begin{remark}\label{remark:ISB is hyper}
	It is possible that a coupling check in the protograph connects more than two sub-blocks, as in \eqref{Eq:partition mat hyper}. In this case, $ \mathcal{G}_{ISB} $ is a \textbf{hyper-graph}, i.e., edges connect sets of nodes. These kind of edges are drawn as split lines in the following graph illustrations.
\end{remark}

\begin{example}\label{Ex:G_ISB}
	Figure~\ref{Fig:G_ISB} illustrates the inter-sub-block graphs for the (A), (B), (C), and (D) protographs listed in Table~\ref{Tbl:Compare 4}. (A) is a simple chain; (B) introduces more memory and connectivity between sub-blocks; (C) is a hyper-graph in which some edges (CCs) connect three nodes; (D) is a grid graph. 
	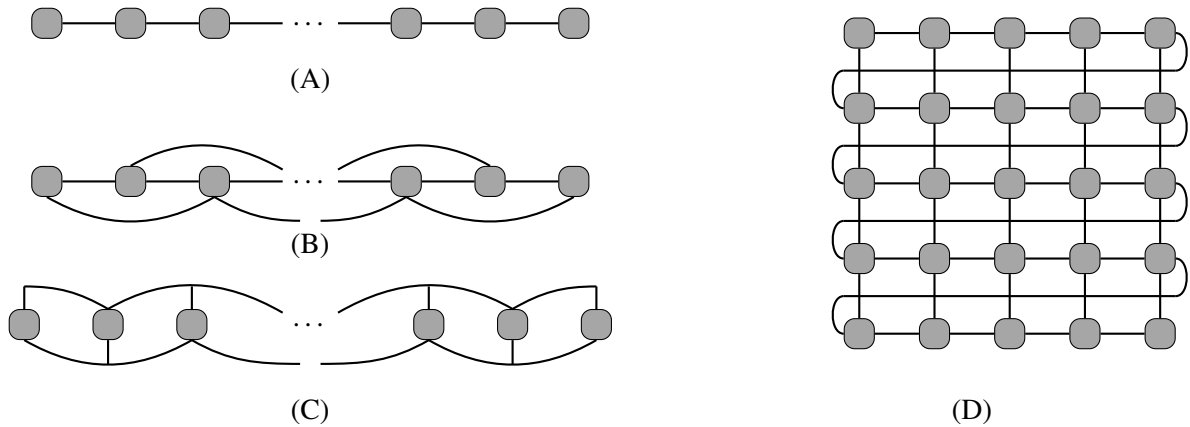
\begin{figure}
		\begin{center}
			\begin{tikzpicture}\label{Tikz:G-ISB}
				
			\tikzstyle{SB}=[rounded corners,draw,fill=gray!70!white,minimum size=4mm]
			\tikzstyle{zero}=[fill,draw,scale=0.6]
			
			\pgfmathsetmacro{\T}{5}
			\pgfmathtruncatemacro{\Q}{\T - 1}
			\pgfmathsetmacro{\x}{1}
			\pgfmathsetmacro{\y}{2}
			\foreach \t in {1,...,\T}
			{
				\foreach \r in {1,...,\T}
				{
					\node[SB] (sb\t\r) at (\x*\t,\r) {};
				}
			}
			\foreach \t in {1,...,\Q}
			{
				
				\foreach \r in {1,...,\Q}
				{
					\pgfmathtruncatemacro{\k}{\t + 1}
					\pgfmathtruncatemacro{\j}{\r + 1}
					\draw[thick] (sb\t\r)--(sb\k\r);
					\draw[thick] (sb\t\r)--(sb\t\j);
				}
			}
			\foreach \r in {1,...,\Q}
			{
				\pgfmathtruncatemacro{\j}{\r + 1}
				\draw[thick] (sb5\r)--(sb5\j);
			}
			\foreach \t in {1,...,\Q}
			{
				\pgfmathtruncatemacro{\k}{\t + 1}
				\draw[thick] (sb\t5)--(sb\k5);
			}

			\node (a) at ($(sb52.east)!0.5!(sb51.east)$) {};
			\node (b) at ($(sb12.west)!0.5!(sb11.west)$) {};
			\draw[thick] (sb52.east) to[in= 0, out = 0] (a.center);
			\draw[thick] (a.center) -- (b.center);
			\draw[thick] (b.center) to[in= 180, out = 180] (sb11.west);
			
			\node (a) at ($(sb53.east)!0.5!(sb52.east)$) {};
			\node (b) at ($(sb13.west)!0.5!(sb12.west)$) {};
			\draw[thick] (sb53.east) to[in= 0, out = 0] (a.center);
			\draw[thick] (a.center) -- (b.center);
			\draw[thick] (b.center) to[in= 180, out = 180] (sb12.west);
			
			\node (a) at ($(sb54.east)!0.5!(sb53.east)$) {};
			\node (b) at ($(sb14.west)!0.5!(sb13.west)$) {};
			\draw[thick] (sb54.east) to[in= 0, out = 0] (a.center);
			\draw[thick] (a.center) -- (b.center);
			\draw[thick] (b.center) to[in= 180, out = 180] (sb13.west);
			
			\node (a) at ($(sb55.east)!0.5!(sb54.east)$) {};
			\node (b) at ($(sb15.west)!0.5!(sb14.west)$) {};
			\draw[thick] (sb55.east) to[in= 0, out = 0] (a.center);
			\draw[thick] (a.center) -- (b.center);
			\draw[thick] (b.center) to[in= 180, out = 180] (sb14.west);

			\node (dt) at ($(sb21)!0.5!(sb31)$) {};
			\node (d) [below = 6mm of dt] {(D)};
			
			\node (c) [left = 8cm of d] {(C)};
			\node (sb0) [above=7 mm of c] {\dots};
			\node (sb1) [SB, right=10 mm of sb0] {};
			\foreach \t in {2,...,3}
			{
				\pgfmathtruncatemacro{\k}{\t - 1}
				\node (sb\t) [SB, right=7 mm of sb\k] {};
			}

			\node (sbm1) [SB, left=10 mm of sb0] {};
			\foreach \t in {2,...,3}
			{
				\pgfmathtruncatemacro{\k}{\t - 1}
				\node (sbm\t) [SB, left=7 mm of sbm\k] {};
			}

			\draw[thick] (sb0.north west) to[in= 30, out = 150] (sbm2.north); 
			\node (um1) [above=3mm of sbm1] {};
			\draw[thick] (um1)--(sbm1);
			\draw[thick] (sb2.north) to[in= 30, out = 150] (sb0.north east); 
			\node (u1) [above=3mm of sb1] {};
			\draw[thick] (u1)--(sb1);
			\node (u3) [above=3mm of sb3] {};
			\draw[thick] (u3)--(sb3);
			\draw[thick] (u3.south) to[in=30,out=180] (sb2.north);
			\node (um3) [above=3mm of sbm3] {};
			\draw[thick] (um3)--(sbm3);
			\draw[thick] (sbm2.north)to[in=0,out=150] (um3.south) ;
			
			\node (d0) [below=2.3mm of sb0] {};
			\draw[thick] (sb1.south) to[in= 0, out = 210] (d0.east); 
			\draw[thick] (sbm1.south) to[in= 180, out = 330] (d0.west); 

			\draw[thick] (sb3.south) to[in= 330, out = 210] (sb1.south); 
			\node (d2) [below=3.3mm of sb2] {};
			\draw[thick] (d2)--(sb2);
			\draw[thick] (sbm1.south) to[in= 330, out = 210] (sbm3.south); 
			\node (dm2) [below=3.3mm of sbm2] {};
			\draw[thick] (dm2)--(sbm2);

			\node (b) [above=16 mm of c] {(B)};
			\node (sb0) [above=4 mm of b] {\dots};
			\foreach \t in {1,...,3}
			{
				\pgfmathtruncatemacro{\k}{\t - 1}
				\node (sb\t) [SB, right=7 mm of sb\k] {};
			}

			\node (sbm1) [SB, left=7 mm of sb0] {};
			\foreach \t in {2,...,3}
			{
				\pgfmathtruncatemacro{\k}{\t - 1}
				\node (sbm\t) [SB, left=7 mm of sbm\k] {};
			}

			\draw[thick] (sb0.north west) to[in= 30, out = 150] (sbm2.north); 
			\draw[thick] (sb2.north) to[in= 30, out = 150] (sb0.north east);

			\node (d0) [below=2.3mm of sb0] {};
			\draw[thick] (sb1.south) to[in= 0, out = 210] (d0.east); 
			\draw[thick] (sbm1.south) to[in= 180, out = 330] (d0.west); 
			\draw[thick] (sb3.south) to[in= 330, out = 210] (sb1.south); 
			\draw[thick] (sbm1.south) to[in= 330, out = 210] (sbm3.south);

			\draw[thick] (sb3)--(sb2)--(sb1)--(sb0)--(sbm1)--(sbm2)--(sbm3); 
			
			\node (a) [above=16 mm of b] {(A)};
			\node (sb0) [above=3 mm of a] {\dots};
			\foreach \t in {1,...,3}
			{
				\pgfmathtruncatemacro{\k}{\t - 1}
				\node (sb\t) [SB, right=7 mm of sb\k] {};
			}

			\node (sbm1) [SB, left=7 mm of sb0] {};
			\foreach \t in {2,...,3}
			{
				\pgfmathtruncatemacro{\k}{\t - 1}
				\node (sbm\t) [SB, left=7 mm of sbm\k] {};
			}

			\draw[thick] (sb3)--(sb2)--(sb1)--(sb0)--(sbm1)--(sbm2)--(sbm3); 
			
			\end{tikzpicture}
		\end{center}
	\caption{\label{Fig:G_ISB} Inter-sub-block graphs for the $ (4,8,2) $ protographs (A), (B), (C), and (D) from Example~\ref{Ex:G_ISB}}
	\end{figure}
\end{example}

Consider the inter-sub-block graph in Figure~\ref{Fig:G_ISB}(D), and assume the target SB is located in the grid's center. There are many SG helper schedules to decode the target, each exhibiting a different threshold and complexity (see Section~\ref{Sub: SG thresholds} for the definition of SG thresholds and complexities).
For example, we can choose to decode helpers along the vertical line crossing the target SB (this will be better than the horizontal line since the vertical line ends in termination), similarly to the one-dimensional chain in Section~\ref{Sec:SG}; we call this the \emph{vertical schedule}. 
Alternatively, one can access helpers on both the vertical and horizontal lines crossing the target; we call this the \emph{cross schedule}. 
Another \er{schedule} is the \emph{diamond schedule}, in which helper SBs sharing the same (Manhattan) distance from the target SB are decoded in parallel, and in order of decreasing distances from the target. 
There are many more possible schedules for two-dimensional protographs like protograph D (in contrast to the simple chain in protograph A, where we have only one direction). We compare the three schedules presented above: vertical, cross, and diamond, illustrated in Figures~\ref{Fig:2D SG strat}(a), (b), and (c), respectively. 

Using the density-evolution equations derived in Section~\ref{Sub:SG DE}, the thresholds of all schedules are calculated and listed in Table~\ref{Tbl:SG strats} for the two-dimensional $ (4,8,2) $ protograph with $ M=49$ sub-blocks (i.e, a $ 7\times 7 $ grid). 
In addition, Table~\ref{Tbl:SG strats} lists the number of \er{helpers} decoded, as a measure of decoding complexity.  

\begin{remark}
The diamond and cross schedules offer more parallelism compared to the vertical schedule, and hence reduce latency. For example, if $ d=4 $, then the vertical schedule requires three decoding steps (two helper steps, and one target step), while the cross and diamond schedules require only two steps.
\end{remark}

\begin{figure}
	\begin{center}
	
	\begin{tikzpicture}\label{Tikz:2D-SG strategies}
	\tikzstyle{SB}=[rounded corners,draw,minimum size=4mm]
		\tikzstyle{zero}=[fill,draw,scale=0.6]
		
		\begin{scope}[shift={(0,0)}]
			\pgfmathsetmacro{\T}{5}
			\pgfmathtruncatemacro{\Q}{\T - 1}
			\pgfmathsetmacro{\x}{1}
			\pgfmathsetmacro{\y}{2}
			\foreach \t in {1,...,\T}
			{
				\foreach \r in {1,...,\T}
				{
					\node[SB] (sb\t\r) at (\x*\t,\r) {};
				}
			}
			\node[SB,fill=blue!50!gray] at (sb35) {};
			\node[SB,fill=blue!50!gray] at (sb31) {};
			\node[SB,fill=green!50!gray] at (sb34) {};
			\node[SB,fill=green!50!gray] at (sb32) {};
			\node[SB,fill=gray] at (sb33) {};

			\foreach \t in {1,...,\Q}
			{
				
				\foreach \r in {1,...,\Q}
				{
					\pgfmathtruncatemacro{\k}{\t + 1}
					\pgfmathtruncatemacro{\j}{\r + 1}
					\draw[thick] (sb\t\r)--(sb\k\r);
					\draw[thick] (sb\t\r)--(sb\t\j);
				}
			}
			\foreach \r in {1,...,\Q}
			{
				\pgfmathtruncatemacro{\j}{\r + 1}
				\draw[thick] (sb5\r)--(sb5\j);
			}
			\foreach \t in {1,...,\Q}
			{
				\pgfmathtruncatemacro{\k}{\t + 1}
				\draw[thick] (sb\t5)--(sb\k5);
			}

			\node (a) at ($(sb52.east)!0.5!(sb51.east)$) {};
			\node (b) at ($(sb12.west)!0.5!(sb11.west)$) {};
			\draw[thick] (sb52.east) to[in= 0, out = 0] (a.center);
			\draw[thick] (a.center) -- (b.center);
			\draw[thick] (b.center) to[in= 180, out = 180] (sb11.west);
			
			\node (a) at ($(sb53.east)!0.5!(sb52.east)$) {};
			\node (b) at ($(sb13.west)!0.5!(sb12.west)$) {};
			\draw[thick] (sb53.east) to[in= 0, out = 0] (a.center);
			\draw[thick] (a.center) -- (b.center);
			\draw[thick] (b.center) to[in= 180, out = 180] (sb12.west);
			
			\node (a) at ($(sb54.east)!0.5!(sb53.east)$) {};
			\node (b) at ($(sb14.west)!0.5!(sb13.west)$) {};
			\draw[thick] (sb54.east) to[in= 0, out = 0] (a.center);
			\draw[thick] (a.center) -- (b.center);
			\draw[thick] (b.center) to[in= 180, out = 180] (sb13.west);
			
			\node (a) at ($(sb55.east)!0.5!(sb54.east)$) {};
			\node (b) at ($(sb15.west)!0.5!(sb14.west)$) {};
			\draw[thick] (sb55.east) to[in= 0, out = 0] (a.center);
			\draw[thick] (a.center) -- (b.center);
			\draw[thick] (b.center) to[in= 180, out = 180] (sb14.west);

			\node (a) [below = 10mm of sb31] {(a)};
	    \end{scope} 
	
		\begin{scope}[shift={(1.1*5.5,0)}]
			\pgfmathsetmacro{\T}{5}
			\pgfmathtruncatemacro{\Q}{\T - 1}
			\pgfmathsetmacro{\x}{1}
			\pgfmathsetmacro{\y}{2}
			\foreach \t in {1,...,\T}
			{
				\foreach \r in {1,...,\T}
				{
					\node[SB] (sb\t\r) at (\x*\t,\r) {};
				}
			}
			\node[SB,fill=blue!50!gray] at (sb35) {};
			\node[SB,fill=blue!50!gray] at (sb31) {};
			\node[SB,fill=green!50!gray] at (sb34) {};
			\node[SB,fill=green!50!gray] at (sb32) {};
			\node[SB,fill=gray] at (sb33) {};
			\node[SB,fill=blue!50!gray] at (sb53) {};
			\node[SB,fill=blue!50!gray] at (sb13) {};
			\node[SB,fill=green!50!gray] at (sb43) {};
			\node[SB,fill=green!50!gray] at (sb23) {};

			\foreach \t in {1,...,\Q}
			{
				
				\foreach \r in {1,...,\Q}
				{
					\pgfmathtruncatemacro{\k}{\t + 1}
					\pgfmathtruncatemacro{\j}{\r + 1}
					\draw[thick] (sb\t\r)--(sb\k\r);
					\draw[thick] (sb\t\r)--(sb\t\j);
				}
			}
			\foreach \r in {1,...,\Q}
			{
				\pgfmathtruncatemacro{\j}{\r + 1}
				\draw[thick] (sb5\r)--(sb5\j);
			}
			\foreach \t in {1,...,\Q}
			{
				\pgfmathtruncatemacro{\k}{\t + 1}
				\draw[thick] (sb\t5)--(sb\k5);
			}

			\node (a) at ($(sb52.east)!0.5!(sb51.east)$) {};
			\node (b) at ($(sb12.west)!0.5!(sb11.west)$) {};
			\draw[thick] (sb52.east) to[in= 0, out = 0] (a.center);
			\draw[thick] (a.center) -- (b.center);
			\draw[thick] (b.center) to[in= 180, out = 180] (sb11.west);
			
			\node (a) at ($(sb53.east)!0.5!(sb52.east)$) {};
			\node (b) at ($(sb13.west)!0.5!(sb12.west)$) {};
			\draw[thick] (sb53.east) to[in= 0, out = 0] (a.center);
			\draw[thick] (a.center) -- (b.center);
			\draw[thick] (b.center) to[in= 180, out = 180] (sb12.west);
			
			\node (a) at ($(sb54.east)!0.5!(sb53.east)$) {};
			\node (b) at ($(sb14.west)!0.5!(sb13.west)$) {};
			\draw[thick] (sb54.east) to[in= 0, out = 0] (a.center);
			\draw[thick] (a.center) -- (b.center);
			\draw[thick] (b.center) to[in= 180, out = 180] (sb13.west);
			
			\node (a) at ($(sb55.east)!0.5!(sb54.east)$) {};
			\node (b) at ($(sb15.west)!0.5!(sb14.west)$) {};
			\draw[thick] (sb55.east) to[in= 0, out = 0] (a.center);
			\draw[thick] (a.center) -- (b.center);
			\draw[thick] (b.center) to[in= 180, out = 180] (sb14.west);

			\node (a) [below = 10mm of sb31] {(b)};
		\end{scope} 
		
		\begin{scope}[shift={(2.2*5.5,0)}]
			\pgfmathsetmacro{\T}{5}
			\pgfmathtruncatemacro{\Q}{\T - 1}
			\pgfmathsetmacro{\x}{1}
			\pgfmathsetmacro{\y}{2}
			\foreach \t in {1,...,\T}
			{
				\foreach \r in {1,...,\T}
				{
					\node[SB] (sb\t\r) at (\x*\t,\r) {};
				}
			}
			\node[SB,fill=cyan!50!gray] at (sb55) {};
			\node[SB,fill=cyan!50!gray] at (sb11) {};
			\node[SB,fill=cyan!50!gray] at (sb15) {};
			\node[SB,fill=cyan!50!gray] at (sb51) {};
		
			\node[SB,fill=red!50!gray] at (sb45) {};
			\node[SB,fill=red!50!gray] at (sb25) {};
			\node[SB,fill=red!50!gray] at (sb41) {};
			\node[SB,fill=red!50!gray] at (sb21) {};
			\node[SB,fill=red!50!gray] at (sb54) {};
			\node[SB,fill=red!50!gray] at (sb52) {};
			\node[SB,fill=red!50!gray] at (sb14) {};
			\node[SB,fill=red!50!gray] at (sb12) {};
			
			\node[SB,fill=blue!50!gray] at (sb35) {};
			\node[SB,fill=blue!50!gray] at (sb31) {};
			\node[SB,fill=blue!50!gray] at (sb44) {};
			\node[SB,fill=blue!50!gray] at (sb22) {};
			\node[SB,fill=blue!50!gray] at (sb24) {};
			\node[SB,fill=blue!50!gray] at (sb42) {};
			\node[SB,fill=blue!50!gray] at (sb53) {};
			\node[SB,fill=blue!50!gray] at (sb13) {};
			
			\node[SB,fill=green!50!gray] at (sb34) {};
			\node[SB,fill=green!50!gray] at (sb32) {};
			\node[SB,fill=green!50!gray] at (sb43) {};
			\node[SB,fill=green!50!gray] at (sb23) {};
			
			\node[SB,fill=gray] at (sb33) {};

			\foreach \t in {1,...,\Q}
			{
				
				\foreach \r in {1,...,\Q}
				{
					\pgfmathtruncatemacro{\k}{\t + 1}
					\pgfmathtruncatemacro{\j}{\r + 1}
					\draw[thick] (sb\t\r)--(sb\k\r);
					\draw[thick] (sb\t\r)--(sb\t\j);
				}
			}
			\foreach \r in {1,...,\Q}
			{
				\pgfmathtruncatemacro{\j}{\r + 1}
				\draw[thick] (sb5\r)--(sb5\j);
			}
			\foreach \t in {1,...,\Q}
			{
				\pgfmathtruncatemacro{\k}{\t + 1}
				\draw[thick] (sb\t5)--(sb\k5);
			}

			\node (a) at ($(sb52.east)!0.5!(sb51.east)$) {};
			\node (b) at ($(sb12.west)!0.5!(sb11.west)$) {};
			\draw[thick] (sb52.east) to[in= 0, out = 0] (a.center);
			\draw[thick] (a.center) -- (b.center);
			\draw[thick] (b.center) to[in= 180, out = 180] (sb11.west);
			
			\node (a) at ($(sb53.east)!0.5!(sb52.east)$) {};
			\node (b) at ($(sb13.west)!0.5!(sb12.west)$) {};
			\draw[thick] (sb53.east) to[in= 0, out = 0] (a.center);
			\draw[thick] (a.center) -- (b.center);
			\draw[thick] (b.center) to[in= 180, out = 180] (sb12.west);
			
			\node (a) at ($(sb54.east)!0.5!(sb53.east)$) {};
			\node (b) at ($(sb14.west)!0.5!(sb13.west)$) {};
			\draw[thick] (sb54.east) to[in= 0, out = 0] (a.center);
			\draw[thick] (a.center) -- (b.center);
			\draw[thick] (b.center) to[in= 180, out = 180] (sb13.west);
			
			\node (a) at ($(sb55.east)!0.5!(sb54.east)$) {};
			\node (b) at ($(sb15.west)!0.5!(sb14.west)$) {};
			\draw[thick] (sb55.east) to[in= 0, out = 0] (a.center);
			\draw[thick] (a.center) -- (b.center);
			\draw[thick] (b.center) to[in= 180, out = 180] (sb14.west);

			\node (a) [below = 10mm of sb31] {(c)};
		\end{scope} 
\end{tikzpicture}

	\end{center}
	\caption{\label{Fig:2D SG strat} Three SG schedules over the D protograph from Figure~\ref{Fig:G_ISB}. SBs with same color are decoded in parallel (white SBs are not decoded at all, the gray SB is the target): (a) vertical; (b) cross; (c) diamond.}
\end{figure}
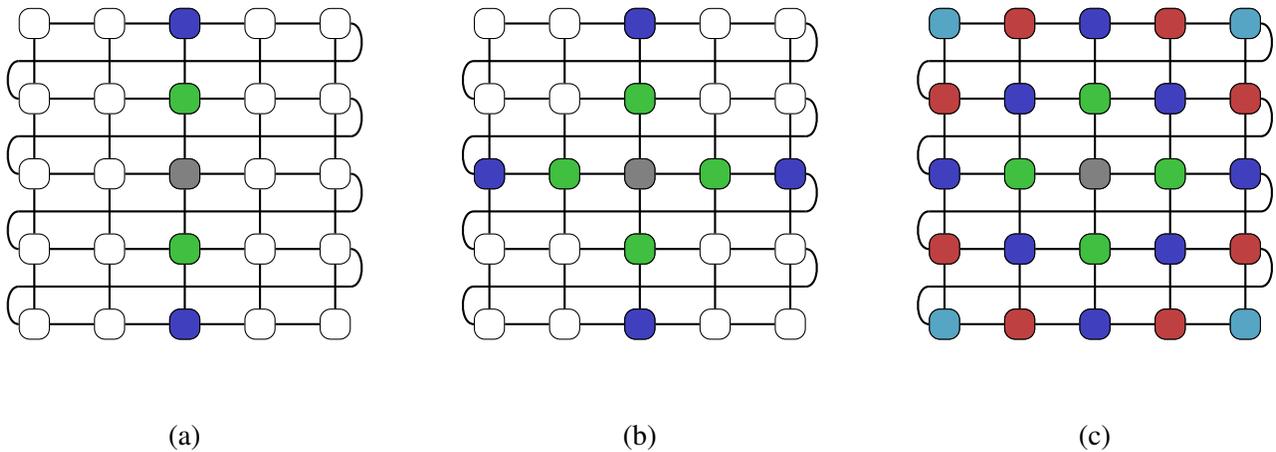

\begin{table}
	\caption{\label{Tbl:SG strats} A comparison between three SG schedules over the $ (l=4,r=8,t=2) $ $ 7\times 7 $ grid protograph graph.}
	\begin{center}
	\begin{tabular}{c|cc|cc|cc}
		
		Steps (latency) &
		\multicolumn{2}{c|}{Vertical} &
		\multicolumn{2}{c|}{Cross} &
		\multicolumn{2}{c}{Diamond} \\
		\hline
				  & Helpers & Threshold 	    & Helpers & Threshold 		  & Helpers		   & Threshold \\
		\cline{2-7}
		1 	 	  &2 		&0.2639 		  	&4 		  &0.3084 	  	 	  &4  			   &0.3084 \\
		2 	 	  &4 		&0.2791 		 	&8 		  &0.3163 	  		  &12 			   &0.3421 \\
		3    	  &6  		&0.3108	 	    	&12		  &0.3175		 	  &24 			   &0.3496 \\
		4    	  &  		& 		 	    	&  		  &		 	  	 	  &36 			   &0.3734 \\
		5    	  &  		& 		 	    	&  		  &		 	  	 	  &44 			   &0.3740 \\
		6    	  &  		& 		 	    	&  		  &		 	  	 	  &48 			   &0.3760
		 
	\end{tabular}
	\end{center}
\end{table}

\section{Acknowledgments}
The authors would like to thank the associated editor and the anonymous reviewers for their constructive comments and suggestions.
They would also like to thank Guy Morag and Adir Kobovich for implementations and simulations for Figures~\ref{Fig:GlobBER48_BEC}--\ref{Fig:LocBER48_AWGN} and Figure~\ref{Fig:SG512_BEC}. This work was supported in part by the US-Israel Binational Science Foundation and in part by the Israel Science Foundation. 

\appendices

\section{Proof of Lemma~\ref{Lemma:Th UB}}
\label{App:Th UB}
We start with a generalization of \eqref{Eq:DE Vars}--\eqref{Eq:DE init} that is useful for the forthcoming analysis. Consider the case where each VN $v \in \mathcal{V}$ is transmitted through an erasure channel with a different erasure probability $\epsilon_v$. In this case\er{,} we say that $\underline{\epsilon}=\left (\epsilon_1,\epsilon_2,\ldots ,\epsilon_{\; \left | \mathcal{V} \right | \;}\right )$ is an erasure constellation, and \eqref{Eq:DE Vars} and \eqref{Eq:P_e,l} are replaced with
\er{
	\begin{align}
	\label{Eq:DE Vars generalized}
	x_\ell \left(e^v_{i},\epsilon_v\right) = \epsilon_v \cdot \prod_{\substack{1\leq i'\leq d_v \\ i'\neq t}} u_{\ell} \left(e^v_{i'}\right), \quad P_{\ell}(v,\epsilon_v)=\epsilon_v \prod_{1\leq i\leq d_v} u_{\ell} \left(e^v_{i}\right).
	\end{align}
}

\begin{lemma} \label{Lemma:mono}
	Let $\mathcal{G}=\left(\mathcal{V}\cup\mathcal{C},\mathcal{E}\right)$ be a protograph, and let $\underline{\epsilon}_1=\left (\epsilon_{1,1},\epsilon_{1,1},\ldots,\epsilon_{1, \left | \mathcal{V} \right | }\right )$ and $\underline{\epsilon}_2=\left (\epsilon_{2,1},\epsilon_{2,1},\ldots,\epsilon_{2, \left | \mathcal{V} \right | }\right )$ be erasure constellations such that $\underline{\epsilon_1} \preceq \underline{\epsilon_2}$, i.e., for every $ v \in \mathcal{V} $, $ \epsilon_{1,v} \leq \epsilon_{2,v}$. Then,
	\begin{align*}
	P_{\ell}(v,\epsilon_{1,v}) \leq P_{\ell}(v,\epsilon_{2,v}) , \quad \forall v\in \mathcal{V}, \quad \forall \ell \geq 0\;.
	\end{align*}
\end{lemma}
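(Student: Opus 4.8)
\textbf{Proof plan for Lemma~\ref{Lemma:mono}.}
The plan is to prove the claimed monotonicity by induction on the iteration index $\ell$, tracking \emph{all} the messages on the protograph simultaneously rather than just the VN erasure probabilities $P_\ell(v,\cdot)$. Concretely, for an erasure constellation $\underline\epsilon$ let $x_\ell(e^v_i,\underline\epsilon)$ and $u_\ell(e^v_i,\underline\epsilon)$ denote the VN-to-CN and CN-to-VN erasure message fractions produced by \eqref{Eq:DE Vars generalized}, \eqref{Eq:DE Checks}, \eqref{Eq:DE init}. The induction hypothesis I would carry is the stronger joint statement: for every edge $e$ and every $\ell\ge -1$,
\[
x_\ell(e,\underline\epsilon_1)\le x_\ell(e,\underline\epsilon_2)\qquad\text{and}\qquad u_\ell(e,\underline\epsilon_1)\le u_\ell(e,\underline\epsilon_2).
\]
Granting this for all $\ell$, the conclusion on $P_\ell$ is immediate from the second identity in \eqref{Eq:DE Vars generalized}, since $P_\ell(v,\epsilon_{1,v})=\epsilon_{1,v}\prod_i u_\ell(e^v_i,\underline\epsilon_1)\le \epsilon_{2,v}\prod_i u_\ell(e^v_i,\underline\epsilon_2)=P_\ell(v,\epsilon_{2,v})$, using $\epsilon_{1,v}\le\epsilon_{2,v}$ and $0\le u_\ell\le 1$.

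First I would establish the base case: at $\ell=-1$ both message types equal $1$ for both constellations by \eqref{Eq:DE init}, so the inequalities hold with equality. For the inductive step, assume the joint hypothesis holds at iteration $\ell-1$. The CN-to-VN update \eqref{Eq:DE Checks} writes $u_\ell(e^c_j)=1-\prod_{j'\ne j}\bigl(1-x_{\ell-1}(e^c_{j'})\bigr)$; this is coordinatewise non-decreasing in each incoming $x_{\ell-1}$ because each factor $1-x_{\ell-1}(e^c_{j'})$ lies in $[0,1]$ and the product of numbers in $[0,1]$ is non-decreasing in each of them, hence its complement is non-decreasing. So $x_{\ell-1}(\cdot,\underline\epsilon_1)\le x_{\ell-1}(\cdot,\underline\epsilon_2)$ gives $u_\ell(\cdot,\underline\epsilon_1)\le u_\ell(\cdot,\underline\epsilon_2)$. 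Then the VN-to-CN update \eqref{Eq:DE Vars generalized}, $x_\ell(e^v_i,\epsilon_v)=\epsilon_v\prod_{i'\ne i}u_\ell(e^v_{i'})$, is non-decreasing in each $u_\ell$ factor (again a product of $[0,1]$ quantities) and non-decreasing in $\epsilon_v$; combining $u_\ell(\cdot,\underline\epsilon_1)\le u_\ell(\cdot,\underline\epsilon_2)$ with $\epsilon_{1,v}\le\epsilon_{2,v}$ yields $x_\ell(\cdot,\underline\epsilon_1)\le x_\ell(\cdot,\underline\epsilon_2)$. This closes the induction.

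There is essentially no hard obstacle here; the only point requiring a little care is to recognize that one \emph{cannot} induct directly on $P_\ell(v,\cdot)$ alone — the recursion couples all edge messages, so the hypothesis must be phrased at the level of the message fractions $x_\ell,u_\ell$, and one must also keep the trivial but necessary bookkeeping that all these quantities remain in $[0,1]$ (which follows by the same induction, since each update maps $[0,1]$-tuples into $[0,1]$). Everything else is the elementary monotonicity of products of numbers in $[0,1]$ in each argument. I would spell out the base case, state the strengthened hypothesis, give the two monotonicity observations for the CN and VN updates, and conclude.
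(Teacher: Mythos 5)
Your proposal is correct and follows essentially the same route as the paper, which proves the lemma by induction on $\ell$ together with the monotonicity of the density-evolution updates (the paper leaves the details to the reader). Your write-up simply fills in those details, correctly strengthening the induction hypothesis to the edge-message level ($x_\ell$ and $u_\ell$) before concluding for $P_\ell$.
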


\begin{proof}
	The proof follows by induction on $\ell$ and by the monotonicity (in the second argument) of $x_\ell \left(e^v_{i},\epsilon_v\right)$ and $ P_{\ell}(v,\epsilon_v)$ in \eqref{Eq:DE Vars generalized}. Details are left to the reader.
\end{proof}

\begin{corollary}\label{Coro:mono}
	If $\underline{\epsilon_1} \preceq \underline{\epsilon_2}$ and for every $v \in \mathcal{V}$, $\lim\limits_{\ell\to \infty}P_{\ell}(v,\epsilon_{2,v})=0$, then for every $v \in \mathcal{V}$, $\lim\limits_{\ell\to \infty}P_{\ell}(v,\epsilon_{1,v})=0$.
\end{corollary}

Let $\epsilon>\epsilon^*\left(H_\mathcal{J}\right)$. In view of \eqref{Eq:threshold}, there exists a set of VNs $\mathcal{J}' \subseteq \mathcal{J}$ such that 
\begin{align} \label{Eq:unresloved VNs}
\lim_{\ell\to \infty}P_{\ell}(v,\epsilon)>0,\quad  \forall  v \in \mathcal{J}'.
\end{align}
We now continue to the proof of Lemma~\ref{Lemma:Th UB}. Assume w.l.o.g that $\mathcal{J}=\{1,2,\ldots,\; \big| \;\mathcal{J}\; \big| \;\}$ (else, permute the columns in $H$). 
Consider erasure constellations given by 
\begin{subequations}
	\begin{align}
	\label{Eq:err constellations}
	&\underline{\epsilon_1}=(\underbrace{\epsilon,\ldots,\epsilon}_{\; \big| \;\mathcal{J}\; \big| \;},\underbrace{0,\ldots,0}_{\; \big| \;\mathcal{V}\; \big| \;-\; \big| \;\mathcal{J}\; \big| \;}),\\
	&\underline{\epsilon_2}=(\underbrace{\epsilon,\epsilon,\ldots\ldots\ldots,\epsilon}_{\; \big| \;\mathcal{V}\; \big| \;}).
	\end{align}
\end{subequations}
In what follows, we prove that $\epsilon\geq \epsilon^*\left(H\right)$. Assume to the contrary that $\epsilon<\epsilon^*\left(H\right)$. Since $\underline{\epsilon_1} \preceq \underline{\epsilon_2}$, Corollary~\ref{Coro:mono} implies that
\begin{align}\label{Eq:P_v=0}
\lim_{\ell\to \infty}P_{\ell}(v,\epsilon_{1,v})=0,\quad \forall v \in \mathcal{V}.
\end{align}
In addition, applying the DE equations in \eqref{Eq:DE Checks} and \eqref{Eq:DE Vars generalized} on the protograph corresponding to $H$ with the erasure constellation $\underline{\epsilon_1}$ is equivalent to applying the DE equations in \eqref{Eq:DE Vars} and \eqref{Eq:DE Checks} on the protograph corresponding $H_\mathcal{J}$ over the $BEC(\epsilon)$, thus \eqref{Eq:P_v=0} implies that for every $v \in \mathcal{J}$, $\lim_{\ell\to \infty}P_{\ell}(v,\epsilon)=0$ in contradiction to \eqref{Eq:unresloved VNs}. Hence, $\epsilon\geq \epsilon^*\left(H\right)$. Since this is true for all $\epsilon>\epsilon^*\left(H_\mathcal{J}\right)$, we deduce that $\epsilon^*\left(H\right) \leq \epsilon^*\left(H_\mathcal{J}\right)$.

Similarly, let $\epsilon>\epsilon^*\left(H\right)$, and let $v\in \mathcal{V}$ be a VN in the protograph corresponding to $H$ such that
\begin{align}\label{Eq:Pv>0}
\lim_{\ell\to \infty}P_{\ell}(v,\epsilon)>0.
\end{align} 
Since $u_\ell(e^v_t)$ in \eqref{Eq:P_e,l} is less than $1$ for every $t\in\{1,2,\ldots,d_v\}$ and every iteration $\ell$, then  
\begin{align}\label{Eq:P^I}
P^{(\mathcal{I})}_{\ell}(v,\epsilon)\geq P_{\ell}(v,\epsilon),\qquad \forall \ell\geq 0,
\end{align}
where $P^{(\mathcal{I})}_{\ell}$ is the probability that $v$ is erased after $\ell$ BP iterations over the protograph corresponding to $H^{(\mathcal{I})}$. Combining \eqref{Eq:Pv>0} and \eqref{Eq:P^I} implies that $\lim_{\ell\to \infty}P^{(\mathcal{I})}_{\ell}(v,\epsilon)>0$, thus $\epsilon>\epsilon^*\left(H^{(\mathcal{I})}\right)$. Since this is true for every $\epsilon>\epsilon^*\left(H\right)$, then $\epsilon^*\left(H^{(\mathcal{I})}\right)  \leq \epsilon^*\left(H\right)$.

\section{Proof of Theorem~\ref{Th:Recursive SG}}
\label{App:Recursive SG}
The $ j=0 $ case follows from the definition of $ \epsilon^*(\underline\delta_1,\underline\delta_2) $. 
For $ j\geq2 $, assume the target SB is indexed by $ m $. Let $ \mathcal{X}_{j,\underline\delta_1,\underline\delta_2} $ be an indicator random variable that equals $ 1 $ if and only if SG decoding with $ j $ SBs succeeds given that the incoming erasure rates to SB $ m + j $ (resp. $ m - j $) are $\underline\delta_1$ (resp. $\underline\delta_2 $), i.e., 
\begin{align}\label{Eq:Exp indicator2}
p_j\left (\underline\delta_1,\underline\delta_2\right )=\mathbb{E}\left [ \mathcal{X}_{j,\underline\delta_1,\underline\delta_2} \right ].
\end{align}
By the tower rule for expectations,
\begin{align}\label{Eq:Tower2}
\mathbb{E}\left [ \mathcal{X}_{j,\underline\delta_1,\underline\delta_2} \right ]=\mathbb{E}\left [ \mathbb{E}\left [\mathcal{X}_{j,\underline\delta_1,\underline\delta_2}  \; \big| \;E_{-j},E_{+j} \right ]\right ],
\end{align}
where $ E_{-j} $ and $ E_{+j} $ are the random variables corresponding to the erasure probability of SBs $ m - j $ and $m + j $, respectively. In view of the assumption in \eqref{Eq:w assumption}, given $ E_{-j}=\epsilon_1,\;E_{+j}=\epsilon_2 $ we have that the outgoing erasure rates from SB $ m + j $ (resp. $ m - j $) towards SB $ m +j -1 $  (resp. $ m -j +1 $) are $\Delta\left (\epsilon_1,\underline\delta_1\right )$ (resp. $\Delta\left (\epsilon_2,\underline\delta_2\right ) $). Hence
\begin{align}\label{Eq:Conditional Exp2}
\mathbb{E}\left [\mathcal{X}_{j,\underline\delta_1,\underline\delta_2} \; \big| \;E_{-j},E_{+j} \right ]=p_{j-2}\left (\Delta\left (E_{-j},\underline\delta_1\right ),\Delta\left (E_{+j},\underline\delta_2\right ) \right ).
\end{align}
Combining \eqref{Eq:Exp indicator2}--\eqref{Eq:Conditional Exp2} completes the proof.

\section{Proof of Theorem~\ref{Th:SG LB}}
\label{App:SG LB}
We prove by induction on \er{(even)} $ j $. For $ j=0 $ we get from \eqref{Eq:Recursive SG},
\begin{align}\label{Eq: p0=}
p_0(\underline\delta_1,\underline\delta_2)=F\left (\epsilon^*(\underline\delta_1,\underline\delta_2)\right ).
\end{align}
In addition, \eqref{Eq:Delta1} implies that $ y_{\underline i}(\underline \delta_1,\underline \delta_2)=\epsilon^*(\underline \delta_1,\underline \delta_2)
$, which combined with \eqref{Eq: p0=} yields $ p_0(\underline\delta_1,\underline\delta_2)=F\left (\epsilon^*(y_{\underline i}(\underline \delta_1,\underline \delta_2)\right ) $. This proves the $ j=0 $ case. Consider $ j>0 $, let $ j'=j-2 $, and assume that \eqref{Eq:SG LB} holds for $ j' $. In view of \eqref{Eq:Recursive SG} and the induction assumption we have
\begin{align}\label{Eq:pd induction 1}
\begin{split}
p_{j}(\underline \delta_1,\underline \delta_2)
=&\mathbb{E}\left [p_{j'}(\Delta\left (E_{-j/2},\underline \delta_1\right ),\Delta(E_{+j/2},\underline \delta_2) \right ]\\
\geq&\mathbb{E}\sum_{\underline{i'}\in \{1,2,\ldots,K\}^{j'}}   F\left (y_{\underline{i'}}(\Delta\left (E_{-j/2},\underline \delta_1\right ),\Delta(E_{+j/2},\underline \delta_2)\right )  \prod_{k=-j'/2\atop k\neq 0}^{j'/2}   \left [ F(e_{i_k}) - F(e_{i_k-1})\right ]\\
=&\sum_{\underline{i'}\in \{1,2,\ldots,K\}^{j'}}  \mathbb{E}\left [ F\left (y_{\underline{i'}}(\Delta\left (E_{-j/2},\underline \delta_1\right ),\Delta(E_{+j/2},\underline \delta_2)\right ) \right ] \prod_{k=-j'/2\atop k\neq 0}^{j'/2}   \left [ F(e_{i_k}) - F(e_{i_k-1})\right ]\;.
\end{split}
\end{align}
Let $\mathcal{X}_A$ be an indicator random variable that equals 1 if and only if the event $ A $ occur.
Since for every $ \delta $, $ \Delta(\epsilon,\delta)$ is monotonically non-decreasing in $ \epsilon $, then $F\left (y_{\underline{i'}}(\Delta\left (E_{-j/2},\underline \delta_1\right ),\Delta(E_{+j/2},\underline \delta_2)\right )$ is  monotonically non-increasing in $ E_{-j/2} $ and $ E_{+j/2} $. 
Thus, for every $ \underline{i'}\in \{1,2,\ldots,K\}^{j'} $, 
\begin{align}\label{Eq:quant}
\begin{split}
\mathbb{E}\left [F\left (y_{\underline{i'}}(\Delta\left (E_{-j/2},\underline \delta_1\right ),\Delta(E_{+j/2},\underline \delta_2)\right )\right ]
&=\mathbb{E} \sum_{i_{-j/2}=1}^K \sum_{i_{j/2}=1}^K \mathcal{X}_{E_{-j/2}\in (e_{i_{-j/2}-1},e_{i_{-j/2}}]}\mathcal{X}_{E_{+j/2}\in (e_{i_{j/2}-1},e_{i_{j/2}}]}\\
&\qquad\qquad\qquad \cdot F\left (y_{\underline{i'}}(\Delta\left (E_{-j/2},\underline \delta_1\right ),\Delta(E_{+j/2},\underline \delta_2)\right )\\
&\geq\sum_{i_{-j/2}=1}^K \sum_{i_{j/2}=1}^K \mathbb{E} \left [ \mathcal{X}_{E_{-j/2}\in (e_{i_{-j/2}-1},e_{i_{-j/2}}]}\mathcal{X}_{E_{+j/2}\in (e_{i_{j/2}-1},e_{i_{j/2}}]} \right ]\\
&\qquad\qquad\qquad \cdot F\left (y_{\underline{i'}}(\Delta\left (e_{i_{-j/2}},\underline \delta_1\right ),\Delta(e_{i_{j/2}},\underline \delta_2)\right )\\
&=\sum_{i_{-j/2}=1}^K \sum_{i_{j/2}=1}^K \left[(F\left (e_{i_{-j/2}}\right )-F\left (e_{i_{-j/2}-1}\right )\right ]\cdot \left[(F\left (e_{i_{j/2}}\right )-F\left (e_{i_{j/2}-1}\right )\right ]\\
&\qquad\qquad\qquad \cdot F\left (y_{\underline{i'}}(\Delta\left (e_{i_{-j/2}},\underline \delta_1\right ),\Delta(e_{i_{j/2}},\underline \delta_2)\right )\,.
\end{split}
\end{align}
In view of \eqref{Eq:Deltak} and \eqref{Eq:yi}, 
\begin{align} \label{Eq:expand Delta}
y_{\underline{i'}}\left (\Delta\left (e_{i_{-j/2}},\underline \delta_1\right ),\Delta(e_{i_{j/2}},\underline \delta_2)\right )
=y_{\underline{i}}\left ( \delta_1,\delta_2)\right ).
\end{align}
Combining \eqref{Eq:pd induction 1}--\eqref{Eq:expand Delta} completes the proof.

\section{Proof of Proposition~\ref{Prop:SG_bounds}}
\label{App:SG bounds}
Let $ j'=\tfrac{j}{2} $, and let $ E_{+j'} $ and $ E_{-j'} $ be the erasure-probability random variables of sub-blocks $m+j' $ and $ m-j' $, respectively. In view of \eqref{Eq:Recursive SG}, since $ \Delta(\cdot,\cdot)\leq 1 $,
\begin{align*}
p_j(\underline 1,\underline 1)
&=\mathbb{E}\left[ p_{j-2}
\left(\Delta(E_{-j'},\underline 1) \right),\left(\Delta(E_{+j'},\underline 1) \right)\right]\\
&\geq  \Pr(E_{-j'}<\epsilon_L,E_{+j'}<\epsilon_L)p_{j-2}\left(\underline 0,\underline 0\right) \\
&+\Pr(E_{-j'}\geq\epsilon_L,E_{+j'}<\epsilon_L)p_{j-2}\left(\underline 1,\underline 0\right)\\
&+\Pr(E_{-j'}<\epsilon_L,E_{+j'}\geq\epsilon_L)p_{j-2}\left(\underline 0,\underline 1\right)\\
&+\Pr(E_{-j'}\geq\epsilon_L,E_{+j'}\geq\epsilon_L)p_{j-2}\left(\underline 1,\underline 1\right)\\
&=P_L^2\cdot p_{j-2}\left(\underline 0,\underline 0\right)+2P_L(1-P_L)p_{j-2}\left(\underline 1,\underline 0\right) +\left (1-P_L\right )^2 p_{j-2}\left(\underline 1,\underline 1\right),
\end{align*}
where the last equality is due to the symmetry assumption in \eqref{Eq:w assumption}.

Next, let $ \mathcal{D}_{-j',+j'} $ be the event of successful SG decoding of a target sub-block $ m $ with $ d=2j' $ helper sub-blocks. For every $ k\in\{0,1,\ldots,j'\}$ (resp. $ k\in\{-j',\ldots,-1,0\}$), let $ \underline \delta_{R}^{(k)} $ (resp. $ \underline \delta_{L}^{(k)} $) be the input erasure rate to sub-block $ m+k $ from the right (resp. left) during SG decoding. Then, according to our definitions,
\begin{align}\label{Eq:cond on d0 SG}
\begin{split}
p_j(\underline \delta_1,\underline \delta_2) 
\triangleq 
&\Pr\left( \mathcal{D}_{-j',+j'}\; \big|\;
\underline\delta_{L}^{(-j')} = \underline\delta_1,\,\underline\delta_{R}^{(+j')} = \underline\delta_2\right ) \\ 
=&\Pr\left( \mathcal{D}_{-j',+j'}\; \big|\;
\underline\delta_{L}^{(-j')} = \underline\delta_1,\,\underline\delta_{R}^{(+j')} = \underline\delta_2,\;
\underline \delta_{L}^{(0)} = 0,\,\underline \delta_{R}^{(0)} = 0 \right )\\
&\hspace*{1.3cm}
\cdot\Pr\left (
\underline \delta_{L}^{(0)} = 0,\,\underline \delta_{R}^{(0)} = 0 
\;\big|\;
\underline\delta_{L}^{(-j')} = \underline\delta_1,\,\underline\delta_{R}^{(+j')} = \underline\delta_2\right )\\
+&\Pr\left( \mathcal{D}_{-j',+j'}\; \big|\;
\underline\delta_{L}^{(-j')} = \underline\delta_1,\,\underline\delta_{R}^{(+j')} = \underline\delta_2,\;
\underline \delta_{L}^{(0)} \neq 0,\,\underline \delta_{R}^{(0)} = 0 \right )\\
&\hspace*{1.3cm}
\cdot\Pr\left (
\underline \delta_{L}^{(0)} \neq 0,\,\underline \delta_{R}^{(0)} = 0 
\;\big|\;
\underline\delta_{L}^{(-j')} = \underline\delta_1,\,\underline\delta_{R}^{(+j')} = \underline\delta_2\right )\\
&\Pr\left( \mathcal{D}_{-j',+j'}\; \big|\;
\underline\delta_{L}^{(-j')} = \underline\delta_1,\,\underline\delta_{R}^{(+j')} = \underline\delta_2,\;
\underline \delta_{L}^{(0)} = 0,\,\underline \delta_{R}^{(0)} \neq 0 \right )\\
&\hspace*{1.3cm}
\cdot\Pr\left (
\underline \delta_{L}^{(0)} = 0,\,\underline \delta_{R}^{(0)} \neq 0 
\;\big|\;
\underline\delta_{L}^{(-j')} = \underline\delta_1,\,\underline\delta_{R}^{(+j')} = \underline\delta_2\right )\\
&\Pr\left( \mathcal{D}_{-j',+j'}\; \big|\;
\underline\delta_{L}^{(-j')} = \underline\delta_1,\,\underline\delta_{R}^{(+j')} = \underline\delta_2,\;
\underline \delta_{L}^{(0)} \neq 0,\,\underline \delta_{R}^{(0)} \neq 0 \right )\\
&\hspace*{1.3cm}
\cdot\Pr\left (
\underline \delta_{L}^{(0)} \neq 0,\,\underline \delta_{R}^{(0)} \neq 0 
\;\big|\;
\underline\delta_{L}^{(-j')} = \underline\delta_1,\,\underline\delta_{R}^{(+j')} = \underline\delta_2\right ).
\end{split}
\end{align}
In view of \eqref{Eq:extreme th}, we have 
\begin{subequations}
\begin{align}\label{Eq:= PSG}
	\begin{split}
	&\Pr\left( \mathcal{D}_{-j',+j'}\; \big|\;
	\underline\delta_{L}^{(-j')} = \underline\delta_1,\,\underline\delta_{R}^{(+j')} = \underline\delta_2,\;
	\underline \delta_{L}^{(0)} = 0,\,\underline \delta_{R}^{(0)} = 0 \right )
	= P_{D}, \\
	&\Pr\left( \mathcal{D}_{-j',+j'}\; \big|\;
	\underline\delta_{L}^{(-j')} = \underline\delta_1,\,\underline\delta_{R}^{(+j')} = \underline\delta_2,\;
	\underline \delta_{L}^{(0)} \neq 0,\,\underline \delta_{R}^{(0)} = 0 \right)
	\geq P_{S},\\
	&\Pr\left( \mathcal{D}_{-j',+j'}\; \big|\;
	\underline\delta_{L}^{(-j')} = \underline\delta_1,\,\underline\delta_{R}^{(+j')} = \underline\delta_2,\;
	\underline \delta_{L}^{(0)} = 0,\,\underline \delta_{R}^{(0)} \neq 0 \right)
	\geq P_{S},\\
	&\Pr\left( \mathcal{D}_{-j',+j'}\; \big|\;
	\underline\delta_{L}^{(-j')} = \underline\delta_1,\,\underline\delta_{R}^{(+j')} = \underline\delta_2,\;
	\underline \delta_{L}^{(0)} \neq 0,\,\underline \delta_{R}^{(0)} \neq 0 \right)
	\geq P_{L}, 
	\end{split}
\end{align}
and since we assumed symmetry of sub-blocks in \eqref{Eq:w assumption}, then
\begin{align}\label{Eq:Pd-1 SG}
\begin{split}
&\Pr\left (
\underline \delta_{L}^{(0)} = 0,\,\underline \delta_{R}^{(0)} = 0 
\;\big|\;
\underline\delta_{L}^{(-j')} = \underline\delta_1,\,\underline\delta_{R}^{(+j')} = \underline\delta_2\right )=\hat p_{j-1}(\underline\delta_1)p_{j'-1}(\underline\delta_2) \\
&\Pr\left (
\underline \delta_{L}^{(0)} \neq 0,\,\underline \delta_{R}^{(0)} = 0 
\;\big|\;
\underline\delta_{L}^{(-j')} = \underline\delta_1,\,\underline\delta_{R}^{(+j')} = \underline\delta_2\right )=
\left (1-\hat p_{j'-1}(\underline\delta_1)\right )
\hat p_{j'-1}(\underline\delta_2)\\
&\Pr\left (
\underline \delta_{L}^{(0)} = 0,\,\underline \delta_{R}^{(0)} \neq 0 
\;\big|\;
\underline\delta_{L}^{(-j')} = \underline\delta_1,\,\underline\delta_{R}^{(+j')} = \underline\delta_2\right )=
\hat p_{j'-1}(\underline\delta_1)
\left (1-\hat p_{j'-1}(\underline\delta_2)\right )\\
&\Pr\left (
\underline \delta_{L}^{(0)} \neq 0,\,\underline \delta_{R}^{(0)} \neq 0 
\;\big|\;
\underline\delta_{L}^{(-j')} = \underline\delta_1,\,\underline\delta_{R}^{(+j')} = \underline\delta_2\right )=
\left (1-\hat p_{j'-1}(\underline\delta_1)\right )
\left (1-\hat p_{j'-1}(\underline\delta_2)\right ).
\end{split}
\end{align}
\end{subequations}
Combining equations \eqref{Eq:cond on d0 SG} and \eqref{Eq:= PSG}--\eqref{Eq:Pd-1 SG} yields \eqref{Eq:SG_rec2}.

\end{document}